\newtheorem{proposition}{Proposition}
\newtheorem{lemma}{Lemma}
\newtheorem{theorem}{Theorem}
\newtheorem{corollary}{Corollary}
\newtheorem{conjecture}{Conjecture}
\renewcommand{\c}[1]{\mathcal{#1}}
\newcommand{\I}{{\rm i}}
\newcommand{\1}{\mathbbm{1}} 
\newcommand{\idty}{\1}
\DeclareMathOperator*{\tr}{Tr}
\newcommand{\<}{\langle}
\renewcommand{\>}{\rangle}
\newcommand{\beq}{\begin{equation}}
\newcommand{\eeq}{\end{equation}}
\newcommand{\ket}[1]{\ensuremath{ |{#1} \rangle}}
\newcommand{\bra}[1]{\ensuremath{\langle{#1} |}}
\renewcommand{\rho}{\varrho}
\newcommand{\eg}{e.g.\;}
\newcommand{\ie}{i.e.}
\newcommand{\finpro}{\hfill $\Box$}
\newcommand{\Span}{\operatorname{span}}
\newcommand{\braket}[2]{\langle #1 | #2 \rangle}
\newcommand{\ketbra}[2]{| #1 \rangle \langle #2 |}
\newcommand{\HS}{{\rm HS}}
\newcommand{\Bu}{{\rm Bu}}
\newcommand{\Hel}{{\rm He}}
\newcommand{\Aclas}{{\rm cq}}
\newcommand{\classQ}{{\cal CQ}}
\newcommand{\sep}{{\rm s}}
\newcommand{\onehalf}{\frac{1}{2}}
\newcommand{\states}{{\cal S}}
\newcommand{\mmax}{{\rm max}}
\newcommand{\EoF}{E_{\rm EoF}}
\newcommand{\ONB}{orthonormal basis\;}
\newcommand{\ONBs}{orthonormal bases\;}
\newcommand{\opt}{{\rm opt}}
\newcommand{\CCQ}{closest classical-quantum\;}
\newcommand{\Proofof}[1]{\noindent {\it Proof of #1. }}
\def\real{{\mathbb{R}}}
\def\complex{{\mathbb{C}}}
\newcommand{\Hh}{{\cal H}}
\newcommand{\Ll}{{\cal L}}
\newcommand{\Nn}{{\cal N}}
\newcommand{\Rr}{{\cal R}}
\newcommand{\Uu}{{\cal U}}
\begin{document}

\title{Geometric measures of quantum correlations: characterization, quantification, and comparison by distances and operations}
\author{W. Roga$^{1,2}$\footnote{wojciech.roga@gmail.com}, D. Spehner$^{3,4}$\footnote{Dominique.Spehner@ujf-grenoble.fr},
and F. Illuminati$^{1,5}$\footnote{fabrizio.illuminati@gmail.com}}
\affiliation{$^1$ Dipartimento di Ingegneria Industriale, Universit\`a degli Studi di Salerno, Via Giovanni Paolo II 132, I-84084 Fisciano (SA), Italy}
\affiliation{$^2$ Department of Physics, John Anderson Building, University of Strathclydeb, 107 Rottenrow, Glasgow, G4 0NG, UK}
\affiliation{$^3$ Univ. Grenoble Alpes and CNRS, Institut Fourier, F-38000 Grenoble, France}
\affiliation{$^4$ CNRS and univ. Grenoble Alpes, LPMMC, F-38000 Grenoble, France}
\affiliation{$^5$ INFN, Sezione di Napoli, Gruppo collegato di Salerno, I-84084 Fisciano (SA), Italy}

\date{February 29, 2016}

\begin{abstract}
We investigate and compare three distinguished geometric measures of
bipartite quantum correlations that have been recently introduced in
the literature: the geometric discord, the measurement-induced
geometric discord, and the discord of response, each one defined
according to three contractive distances on the set of quantum states, namely the trace,
Bures, and Hellinger distances. We establish a set of exact algebraic
relations and inequalities between the different
measures. In
particular, we show that the geometric discord and the discord of response
based on the Hellinger distance are easy to compute analytically for all
quantum states whenever the reference subsystem is a qubit. These two
measures  thus provide the first instance
of discords that are simultaneously fully computable, reliable (since
they satisfy all the basic Axioms that must be obeyed by a proper
measure  of quantum correlations), and operationally viable (in
terms of state distinguishability). We apply the general mathematical
structure to determine the closest classical-quantum state of a given
state and
the maximally quantum-correlated states at fixed global state purity
according to the different distances, as well as a necessary condition
for a channel to be quantumness breaking.
\end{abstract}

\pacs{03.67.Mn, 03.65.Ud, 03.65.Ta}

\maketitle

\section{Introduction} \label{sec-intro}

The characterization and quantification of quantum correlations in
composite quantum systems is of primary importance in quantum
information theory. In particular, it is a prerequisite for
understanding the origin of the quantum advantages in tasks of quantum
technology and quantum information processing. It has been recognized
in the last decade that quantum correlations may be present even in
separable mixed states. The quantum-correlated states are singled out
by a non-vanishing value of the entropic quantum discord (whose
definition and properties will be recalled in the following)~\cite{Zurek2000, Ollivier2001, Henderson2001}. States of a
bipartite system $AB$ with vanishing quantum discord with respect to
subsystem $A$ possess only classical correlations
between $A$ and $B$ and are  called classical-quantum (or
$A$-classical) states. They are of the form
\begin{equation}
\sigma_\Aclas = \sum_{i=1}^{n_A}  q_{i}\ket{\alpha_i}\bra{\alpha_i} \otimes \rho_{B | i} \; ,
\label{eq:cq}
\end{equation}
where $\{ \ket{\alpha_i }\}$ is an orthonormal basis of the Hilbert
space $\Hh_A$ of the reference subsystem $A$, $n_A$ is the dimensionality of $\Hh_A$, the set
$\{ q_i\}$ is a probability distribution (\ie, $q_i\geq 0$, $\sum_i
q_i=1$), and $\rho_{B| i}$ are arbitrary states of subsystem $B$.
The classical-quantum states (\ref{eq:cq}) form a non convex  set
$\classQ$, the convex hull of which is the set of  all separable
states. This means that there  are separable states which are not
classical-quantum. For pure states,  however, classicality is
equivalent to separability, since a pure state  is classical-quantum
if and only if it is a product state.  Therefore quantum correlations must coincide with entanglement on pure states.

The evaluation of the entropic quantum discord is a highly
nontrivial challenge, even when one restricts to the
simplest case of two qubits (see e.g. Refs.~\cite{Girolami2011,Modi_review,Huang2013}). Geometric measures of
quantum discord provide alternative ways to the entropic discord for
quantifying quantum correlations in bipartite systems~\cite{Daki'c2010, Modi2010, Giampaolo2013,
  Aaronson2013, Spehner2013, Spehner2014, Roga2014, Ciccarello2014,
  Luo2013, Spehner_review, Piani2014}. These measures offer the advantage of
easier computability. Most of them have operational interpretations
in terms of state distinguishability. On the other hand, they depend
on a specific choice of some distance $d$ on the set of quantum states
of the bipartite system $AB$.
The most common choices are: 1) the {\em trace distance} $d_{\tr}$ and
{\em Hilbert-Schmidt distance} $d_\HS$, defined respectively as
\begin{equation} \label{eq-trace_and_HS_dist}
d_{\tr}(\rho ,\sigma )\equiv \left\|\rho -\sigma \right\|_{\tr} \equiv \tr\left|\rho -\sigma \right|
\quad , \quad
d_{\HS}(\rho ,\sigma )\equiv\left\|\rho -\sigma \right\|_{\HS }\equiv
\sqrt{\tr   |\rho -\sigma |^2 }\; ,
\end{equation}
where  $\rho$ and $\sigma$ are two arbitrary states of $AB$ and
$|X|\equiv\sqrt{X^{\dagger} X}$ is the modulus of the operator
$X$; 2) the {\it Bures distance}~\cite{Bures1969,Uhlmann1976}
\begin{equation} \label{eq-Burea_dist}
d_{\Bu}(\rho ,\sigma )\equiv\big( 2 -2 \sqrt{F(\rho ,\sigma )} \big)^\onehalf
\quad , \quad F(\rho ,\sigma )\equiv \left\|
\sqrt{\sigma} \sqrt{\rho} \right\|_{\tr}^2 = \big(\tr \big(
\sqrt{\rho}\,\sigma\sqrt{\rho} \big)^\onehalf  \big)^2\;,
\end{equation}
where $F(\rho ,\sigma )$ is the Uhlmann fidelity between $\rho$
and $\sigma $; 3) the {\em quantum Hellinger distance},
called ``Hellinger distance'' for brevity in the sequel, which is defined as
\begin{equation} \label{eq-Q_Hellinger_distance}
d_{\Hel}(\rho ,\sigma )\equiv \|\sqrt{\rho }-\sqrt{\sigma }\|_\HS =
\big( 2 -2 \tr\sqrt{\rho}\sqrt{\sigma} \big)^\onehalf  \; .
\end{equation}
For each of these distances,
three major classes of geometric measures have been introduced in recent years:

\begin{itemize}
\item[I)]
The requirement that quantum correlations must vanish on the
classical-quantum states has been  exploited in Refs.~\cite{Daki'c2010,
  Luo2010, Nakano2013, Ciccarello2014, Abad2012, Streltsov2011c,
  Spehner2013, Spehner2014}
to define the {\it geometric discord}, equal to the square distance
from a given state $\rho$ of $AB$ to the set $\classQ$ of classical-quantum states:
\begin{equation}
D_{A}^G (\rho )  \equiv \min_{\sigma_\Aclas \in \classQ}  d (\rho,\sigma_\Aclas )^2 \, .
\label{def:geomdisc1}
\end{equation}

\item[II)]
The {\it measurement-induced geometric discord} is defined by
minimizing over all local projective  measurements on $A$ the square
distance between $\rho$ and the corresponding post-measurement state
in the absence of readout~\cite{Luo2010}:
\begin{equation}
D_{A}^M  (\rho )  \equiv \min_{\{ \Pi_{i}^A \} } d \bigl(\rho,
\rho_{\rm p.m.}^{ \{ \Pi_{i}^A \} } \bigr)^2 \quad , \quad
\rho_{\rm p.m.}^{\{ \Pi_{i}^A \} } = \sum_{i=1}^{n_A} \Pi_{i}^{A} \otimes \idty \,\rho\,\Pi_{i}^{A} \otimes \idty\;.
\label{def:geomdisc2}
\end{equation}
The minimum is taken over all families
$\{ \Pi_{i}^{A} \}$  of rank-one orthogonal projectors on $\Hh_A$.
The quantity $D_{A}^M  (\rho ) $ characterizes the distinguishability
between $\rho$ and the corresponding state after an arbitrary
local von Neumann measurement on $A$. Since the output of such a measurement is always a
classical-quantum state, one has $D^G_{A} ( \rho ) \leq D^M_{A}
(\rho)$ for any $\rho$. This inequality is an equality
if $d$ is the Hilbert-Schmidt distance~\cite{Luo2010}.
For the trace distance, the geometric discord and 
measurement-induced geometric discord  coincide only  if $A$ is a qubit (see Ref.~\cite{Nakano2013}, where explicit
counter-examples for higher dimensional subsystems $A$ is
also reported).  For the Bures and Hellinger distances, $D_{A}^G $ and
$D_{A}^M$ are in general different, irrespective of the space dimension  $n_A$ (see Sec.~\ref{sec-meas_ind_geo_disc} below).

\item[III)]
Imposing the fundamental requirement that quantum correlations should
be invariant under  local changes of basis, one can introduce the {\it
  discord of response},  defined as~\cite{Gharibian2012,Giampaolo2013,Roga2014}
\begin{equation}
D_{A}^R  (\rho )  \equiv \frac{1}{\Nn} \min_{U_A \in \Uu_\Lambda} d \bigl( \rho,U_A \otimes \idty \,\rho \,U_A^{\dagger}\otimes \idty \bigr)^2 \, ,
\label{def:quantumn}
\end{equation}
where the minimum is taken over all local unitary operators 
$U_A$ on $\Hh_A$ separated from the identity by the condition of having a fixed
non-degenerated spectrum $\Lambda = \{ e^{2 \I \pi j/n_A} ; j=1 ,
\ldots , n_A \} $ given by the roots  of unity (see Ref.~\cite{Roga2014} for a
thorough discussion on the choice of the spectrum). Hereafter we
denote by  $\Uu_\Lambda$ the family of such
unitaries with spectrum $\Lambda$. The normalization factor $\Nn$ in
Eq.~(\ref{def:quantumn}) is equal to $\Nn=4$ for the trace distance
and $\Nn=2$ for the Bures, Hilbert-Schmidt, and Hellinger
distances. As we shall see below, the normalization is such that
$D_{A}^R (\rho)$ has maximal value equal to unity.
The discord of response characterizes how distinguishable is the
locally unitarily perturbed state from the original
one. Alternatively, it can be seen as a measure of the sensitivity of
the state $\rho$ to local unitary perturbations. For the Hilbert-Schmidt
distance it holds $D^R_A (\rho) = 2 D^G_A (\rho)$ when $A$ is a qubit~\cite{Gharibian2012,Roga2014}.

\end{itemize}

Hereafter, we omit for simplicity the lower subscript $A$  on the discord $D$,  as we will always take $A$ as the reference
subsystem. The chosen distance is
indicated explicitly. For
instance, $D^G_{\tr}$,  $D^G_{\HS}$, $D^G_{\Bu}$, and $D^G_{\Hel}$
denote the geometric discords defined via the trace,
Hilbert-Schmidt, Bures, and Hellinger distances, respectively.
Let us note that another class of geometric measures, called the
surprisal of measurement recoverability, has been studied recently
in Ref.~\cite{Wilde15}. 
It is given in terms of the minimal distance between $\rho$ and the subset
formed by all transformations of $\rho$ under entanglement breaking
channels acting on $A$.  

Besides the fact that the discords defined in
Eqs.~(\ref{def:geomdisc1})-(\ref{def:quantumn})
are typically easier to compute than the entropic quantum discord of
Refs.~\cite{Zurek2000, Ollivier2001, Henderson2001},
one of the main appealing features of the geometric approach is that
it contains  additional information. More specifically,
let us consider a  state $\rho$ in the set $\states_{AB}$ of all quantum states of
$AB$ and a distinguished subset of $\states_{AB}$, which might
coincide with I)~the subset  $\classQ$ of classical-quantum states, or
II)~the subset formed by all  post-measurement states obtained from
$\rho$ through local von Neumann measurements on $A$, or III)~the subset formed
by all local  unitary transformations of $\rho$ with unitaries in
$\Uu_\Lambda$.  Then the  state belonging to this subset that is
closest to $\rho$ for the distance $d$  provides  some useful geometrical information about $\rho$,
which goes beyond the sole knowledge of the value of the distance
between $\rho$ and this closest state. For instance, it has been proposed in Ref.~\cite{Modi2010} to measure 
classical correlations in $\rho$ by determining the minimal distance
between a product  state and a closest classical-quantum state to $\rho$.
The geometrical information can also  be useful when considering dissipative dynamical
evolutions. For instance, one can get some insight
on the efficiency of the dynamical process in changing the amount of quantum
correlations by comparing the physical trajectory $t \mapsto \rho_t$
in  $\states_{AB}$  with the geodesics connecting
$\rho_t$ to its closest classical-quantum state(s).

Another important feature of the geometric measures, which is related
to the distinguishability of quantum states, concerns their
operational interpretations. Indeed, various instances of these
measures turn out to be valuable figures of merit in the context of
quantum technology protocols~\cite{Pirandola2011}, including  
quantum illumination~\cite{Tan2008,Farace2014}, quantum metrology and phase
estimation~\cite{Girolami2014}, quantum
refrigeration~\cite{Correa2013}, and quantum local uncertainty~\cite{Girolami2013}.
In particular, the discord of response enjoys a beautiful
operational interpretation in terms of the probability of error in
protocols of quantum reading and quantum illumination~\cite{Roga2015}.

The aim of this paper is to develop a systematic theory and
exact mathematical characterization, quantification, and comparison
between the geometric measures according to
the different definitions in Eqs.~(\ref{def:geomdisc1})-(\ref{def:quantumn})
and to the different distances introduced above.
We establish algebraic relations and inequalities holding
between them. In particular, we provide
a lower bound on the geometric discord $D^G_\Bu$ for the Bures
distance in terms of the corresponding discord for the Hellinger
distance, and show that the latter is simply related to the
Hilbert-Schmidt geometric discord for the square root of the state.
Thanks to this relation, the Hellinger geometric discord $D^G_\Hel$ is fully
computable; we illustrate this point by
giving a closed expression  for arbitrary qubit-qudit
states. For a fixed distance, we also bound
$D^M$ and $D^R$ in terms of $D^G$ and $D^M$ in terms of $D^R$. Bounds
between $D_{\tr}^R$, $D_\Bu^R$, and $D_\Hel^R$ are obtained as well. We show
that some of these bounds are tight.
In the particular case where the reference subsystem $A$ is a qubit
(the other subsystem being arbitrary), we show that
$D^G_{\tr}=D^M_{\tr}= D^R_{\tr}$ and derive an exact algebraic relation between $D^R$ and
$D^G$ holding for both the Bures and the Hellinger
distances. Remarkably, this relation has the same form for the two metrics.
We also describe the closest classical-quantum state(s) and closest
post-measurement state(s) of a given bipartite state $\rho$ and obtain the values taken by $D^G$, $D^M$, and $D^R$ on
pure states for these two metrics.

Collecting the above results, we establish that the
Hellinger  geometric discord $D^G_\Hel$ and the Hellinger discord of
response $D^R_\Hel$ provide the first two instances of measures of quantum correlations that are
fully computable, reliable -- since they satisfy all the basic Axioms
that must be obeyed by a {\em bona fide} measure of  quantum discord (which are detailed below)
--, and operationally viable in terms of distinguishability of quantum states.

A further way to compare the measures defined in Eqs.~(\ref{def:geomdisc1})-(\ref{def:quantumn}) is to study the
maximally quantum-correlated states at fixed global state purity for
the different measures. We
obtain analytical expressions for the maximal discord of response of
two-qubit states as a function of their purity for the trace and Hellinger distances and compare our results
with those found previously in the literature for the Hilbert-Schmidt and Bures metrics.

Finally, we discuss applications of the geometric measures to the problem
of quantumness breaking channels. We determine a necessary
condition for a local channel to destroy completely the quantum
correlations of any bipartite state.

Before going into the detailed presentation and discussion of our results, it is
worth recalling what we exactly mean by a {\em bona fide} measure of
quantum correlations.
Following previous
works~\cite{Modi_review,Girolami2013,Ciccarello2014,Girolami2014,Roga2014,Spehner_review},
we stipulate that such a measure  must be a
non-negative function $D$ on the set of quantum states of the
bipartite system $AB$ fulfilling the following four basic Axioms:
\begin{itemize}
\item[(i)] $D$ vanishes on classical-quantum states and only on such states;
\item[(ii)] $D$ is invariant under local unitary transformations $\rho \mapsto  U_A \otimes U_B \rho\, U_A^\dagger \otimes U_B^\dagger$
(here $U_A$ and $U_B$ are unitaries acting on subsystems $A$ and $B$, respectively);
\item[(iii)] $D$ is monotonically non-increasing under local
  Completely Positive Trace Preserving (CPTP) maps acting on subsystem $B$;
\item[(iv)] $D$ reduces to an entanglement monotone on pure states.
\end{itemize}
Axioms (i-iv) are satisfied in particular by the entropic quantum discord. Here, we point out that proper measures of quantum correlations should
also obey the following additional requirement (which is also fulfilled by the entropic discord)~\cite{Spehner_review}:
\begin{itemize}
\item[(v)] if the dimension $n_A$ of $\Hh_A$ is
  smaller or equal to the  dimension $n_B$ of the space $\Hh_B$ of $B$,
then $D (\rho)$ is maximum if and only if $\rho$ is maximally entangled, that is,
$\rho$ has maximal entanglement of formation $E_{\rm EoF} ( \rho) = \ln n_A$.
\end{itemize}
Let us note that Axioms (iii) and (iv) imply that,  when $n_A\leq n_B$,
$D$ is maximum on
maximally entangled pure states, \ie, if  $\rho$ is a maximally
entangled pure
state then  $D(\rho)=D_\mmax$~\cite{Piani2014}. This follows from the facts that
 a function $D$ on $\states_{AB}$ satisfying (iii) 
is maximal on pure states if $n_A\leq n_B$~\cite{Streltsov12} and that any pure state can be obtained from a maximally
entangled pure state via a LOCC~\cite{Spehner_review}.
Thus, if Axioms (i-iv) are satisfied, the additional requirement in
Axiom (v) is essentially that   $D(\rho)=D_\mmax$ holds {\it only}
for the maximally entangled states $\rho$. 
     
It has been shown in previous works~\cite{Roga2014,Spehner_review} that the geometric discord $D^G_\Bu$ and discord of response
$D^R_\Bu$ satisfy Axioms (i)-(iv) for the Bures
distance, and hence
are {\em bona fide} measures of quantum correlations. In
this paper, we will prove that this is also the case for the three
measures $D^G_\Hel$, $D_\Hel^M$, and $D_\Hel^R$ based on
the Hellinger distance, as well as for the Bures measurement-induced
geometric discord
$D^M_\Bu$ and trace discord of response $D_{\tr}^R$. In contrast, it is known that
$D^G_\HS= D^M_\HS$ and $D^R_\HS$ do not fulfill Axiom (iii)
because of the lack of monotonicity of the Hilbert-Schmidt distance under CPTP
maps (an explicit counter-example is given in Ref.~\cite{Piani2012} for
$D^G_\HS$ and applies to $D^R_\HS$ as well, see below).
Therefore, the use of the Hilbert-Schmidt distance in the definitions
of Eqs.~(\ref{def:geomdisc1})-(\ref{def:quantumn}) can and does lead to unphysical predictions.
Considering the distances $d_p$ associated to the $p$-norms
$\|X\|_p\equiv (\tr |X|^p)^{1/p}$, 
one has that for $p>1$, $d_p$ is not contractive under CPTP
maps~\cite{Perez-Garcia2006} (see  also Ref.~\cite{Ozawa2000} for a counter-example for $p=2$, which also
holds for any $p>1$). This is why the distances $d_p$ cannot be used
to define  measures of quantumness apart from the case
$p=1$, corresponding to the contractive trace distance, while for
$p=2$ the
non-contractive Hilbert-Schmidt  distance is well
tractable and thus used to establish
bounds on the {\em bona fide} geometric measures.

Regarding our last Axiom (v), the only result established so far
in the literature concerns the Bures geometric discord~\cite{Spehner2013,Spehner_review}.
We will demonstrate below that all the
other measures  based on the trace, Bures, and
Hellinger distances also satisfy this axiom. Our proofs are valid for arbitrary (finite) space
dimensions $n_A$ and $n_B$ of
subsystems $A$ and $B$, excepted for $D^G_\Hel$, for which they are restricted
to the special cases $n_A=2, 3$,
and for $D^M_\Hel$, $D^G_{\tr}$, and $D^M_{\tr}$, for which they are
restricted to $n_A=2$.

The paper is organized as follows.
Given its length and the wealth of mathematical
relations and bounds that we have determined, we begin by summarizing the main
results in Section~\ref{sec-main_result}. We first give general
expressions of the geometric measures for the Bures and Hellinger
distances, which are convenient starting points to compare them
(see Sec.~\ref{sec-main_results_general_expressions_Bures_Hellinger}).
We then report in some synoptic Tables the various relations and
bounds satisfied by $D^G$, $D^M$, and $D^R$ for the 
trace, Hilbert-Schmidt, Bures, and Hellinger distances (see
Sec.~\ref{sec-main_results_general_bounds}). Closed expressions for the Hellinger
geometric discord and Hellinger discord of response for
arbitrary qubit-qudit states are obtained in Sec.~\ref{sec-computability}, thereby
illustrating the computability of these two measures.
A detailed comparison of all the geometric measures in the specific
case of qubit-qudit systems (\ie, for $n_A=2$ and $n_B \geq 2$) is provided in Section~\ref{eq-results_bounds_qubit}, where
we derive from the synoptic Tables a large set of relations and
bounds. We discuss there which bounds are tight.
For the sake of completeness, we recall in
Section~\ref{sec-preliminaries} the definitions of the entropic
quantum discord and local quantum  uncertainty
(see Sec.~\ref{sec-def_discords}), some known bounds between the four aforementioned
distances (see Sec.~\ref{wdtc}), and the main arguments and results from
the literature enabling to
show that $D^G$, $D^M$, and $D^R$ are {\em bona fide} measures
of quantum correlations for the trace, Bures, and Hellinger
distances (see Sec.~\ref{sec-def_geo_discords_proper}). We also recall
in this section the link  between the Bures geometric discord and
a quantum state discrimination task~\cite{Spehner2013} (see Sec.~\ref{sec-Bures_geometric_disc}).
In Section~\ref{sec-geometric_discord} we study the geometric
discords, prove the identities and bounds reported in
Table~\ref{tab1}, and present further results for the Hellinger
geometric discord. Section~\ref{sec-meas_ind_geo_disc} is devoted to
the study of the measurement-induced geometric discord. The
results summarized in Table~\ref{tab2} are proven in this section for
the Bures and Hellinger distances. In
Section~\ref{sec-disc_of_response} we study the discord of response
and prove the nontrivial relations and bounds reported in Table~\ref{tab3}.
In  Section~\ref{mqcs} we study the maximal quantum correlations
at fixed purity
according to the different discords of response and discuss the different
orderings that they induce on quantum states. The problem of
quantumness breaking channels is addressed in
Section~\ref{qbc}. Finally, we present a
short discussion and our
conclusions in Section~\ref{som}. The four appendices report the technical
proofs of some results stated in Sections~\ref{sec-disc_of_response}-\ref{qbc}.

\section{Synopsis: Summary of main results} \label{sec-main_result}
\subsection{General expressions for the geometric measures: Hellinger
  and Bures distances}
\label{sec-main_results_general_expressions_Bures_Hellinger}

Let us first restrict our attention to the Hellinger and Bures
distances. We will show in the subsequent sections that the three
geometric measures $D^G$ (geometric discord), $D^M$
(measurement-induced geometric discord), and $D^R$ (discord of
response) are obtained by maximizing or minimizing a given trace over
all \ONBs $\{ \ket{ \alpha_i} \}$ of the reference subsystem space $\Hh_A$.
In the case of the Hellinger distance, we have
\begin{eqnarray}
\label{eq-formula_Hellinger_geo_discord}
D_\Hel^G ( \rho)
& = &
2  - 2  \max_{ \{ \ket{\alpha_i} \} } \biggl\{ \sum_{i=1}^{n_A}  {\tr}_B  \bra{\alpha_i} \sqrt{\rho} \ket{\alpha_i}^2 \biggr\}^\onehalf \; ,
\\
\label{eq-formula_Hellinger_meas_ind_discord}
D_\Hel^M ( \rho)
& =  &
2 - 2 \max_{\{ \ket{\alpha_i}\} } \sum_{i=1}^{n_A} {\tr}_B \bra{\alpha_i} \sqrt{\rho} \ket{\alpha_i} \sqrt{\bra{\alpha_i} \rho \ket{\alpha_i}} \; ,
\\
\label{eq-formula_Hellinger_discord_of_response}
D_\Hel^R ( \rho)
& = &
2 \min_{ \{ \ket{\alpha_i} \} }   \sum_{i,j=1}^{n_A} \sin^2 \Big( \frac{\pi (i-j)}{n_A} \Bigr)
 {\tr}_B  \big| \bra{\alpha_i} \sqrt{\rho} \ket{\alpha_j} \big|^2   \; .
\end{eqnarray}
The derivation of Eq.~(\ref{eq-formula_Hellinger_geo_discord}) is the
content of Theorem~\ref{eq-theo_geo_disc_Hell_mixed_states}, proved in
Section~\ref{sec-geometric_discord} below. Equation~(\ref{eq-formula_Hellinger_meas_ind_discord}) is a rather direct
consequence of the definitions and
Eq.~(\ref{eq-formula_Hellinger_discord_of_response}) is derived in
Appendix~\ref{app-proof_theo5-7}. Before providing the corresponding
expressions for the Bures distance, let us introduce the
probabilities $\eta_i$ and states $\rho_i$ depending on the  \ONB $\{ \ket{\alpha_i} \}$ defined by
\begin{equation} \label{eq-state_Q_discrimination}
\eta_i = \bra{\alpha_i} \rho_A \ket{\alpha_i}  \quad , \quad
\rho_i = \eta_i^{-1} \sqrt{\rho }\, \ketbra{\alpha_i}{\alpha_i}  \otimes \idty \; \sqrt{\rho }
\quad , \quad  i=1,\ldots, n_A \; ,
\end{equation}
where $\rho_A = \tr_B (\rho)$ is the reduced state of $A$.
It has been shown in Refs.~\cite{Spehner2013,Spehner_review} that the Bures geometric
discord is  obtained by maximizing over all $\{ \ket{\alpha_i} \}$'s the
maximal success probability  $P_{\rm S}^{\,\rm{opt\,v.N.}} ( \{ \rho_i,\eta_i \})$ to discriminate the states $\rho_i$ with prior probabilities
$\eta_i$ by means of von Neumann measurements with projectors of rank
$n_B$. The reader unfamiliar with quantum state discrimination theory
can find the definition of this success probability in
Section~\ref{sec-Bures_geometric_disc} below.
It turns out that $D^M_\Bu$ and $D^R_\Bu$ can also be expressed in
terms of $\rho_i$ and $\eta_i$. More precisely,  one has (see
Sections~\ref{sec-Bures_geometric_disc}, \ref{sec-Bures_meas_ind_geo_disc}  and~\ref{sec-Bures_Hellinger2}):
\begin{eqnarray}
\label{eq-formula_Bures_geo_discord}
D_\Bu^G ( \rho)
& = &
2  - 2  \max_{\{ \ket{\alpha_i} \} } \sqrt{P_{\rm S}^{\,\rm{opt\,v.N.}} ( \{ \rho_i,\eta_i \})} \; ,
\\
\label{eq-formula_Bures_meas_ind_discord}
D_\Bu^M ( \rho)
& =  &
2 - 2 \max_{ \{ \ket{\alpha_i} \}} \tr \sqrt{\sum_{j=1}^{n_A} \eta_j^2 \rho_j^2}  \; ,
\\
\label{eq-formula_Bures_discord_of_response}
D_\Bu^R ( \rho)
& = &
1 -  \max_{ \{ \ket{\alpha_i} \} } \tr \bigg| \sum_{j=1}^{n_A} \eta_j e^{-\I \frac{2 \pi j}{n_A}} \rho_j \bigg| \; .
\end{eqnarray}
By using these expressions, the values of $D^G$, $D^M$, and $D^R$
for a bipartite pure state can be determined explicitly in terms of the
Schmidt coefficients of this state. These values are given in
Tables~\ref{tab1}-\ref{tab3} and are all entanglement monotones. This enable us to
show that for the Bures and
Hellinger distances, $D^G$, $D^M$, and $D^R$ are {\em bona fide} measures of quantum
correlations satisfying the Axioms (i-iv) above, as detailed in Section~\ref{sec-def_geo_discords_proper}. In
Appendix~\ref{app-maximal_discod_resp} we 
show that these measures
obey Axiom (v) as well, although a proof for arbitrary space dimensions $n_A$
is still lacking in a few cases (see Tables~\ref{tab1}-\ref{tab3} for more detail). In contrast,
for  the Hilbert-Schmidt distance, $D^G$, $D^M$, and $D^R$ do not fulfill
Axiom (iii) and hence are not {\em bona fide} measures of quantum correlations.

\begin{table}[t]
\scriptsize
\begin{tabular}{|c||c|c|c|c|}

\hline
                             &   \multicolumn{4}{|c|}{Geometric discord $D^G$}
\\[1mm]
\hline
Distance                     &        Bures   &     Hellinger     &          Trace   &          Hilbert Schmidt \\
\hline
\hline
\begin{tabular}{c}  Proper measure of\\  quantum correlations \end{tabular}
                    &    \checkmark          &  \checkmark        &    proved for $n_A=2$   &    no
\\[1mm]
\hline
Satisfies  Axiom (v)
                    &  \checkmark          &  proved for $n_A=2,3$ &  proved for $n_A=2$ &
\\[1mm]
\hline
\begin{tabular}{c} Maximal value  \\ if $n_A \leq n_B$ \end{tabular}
&  $2 - 2/\sqrt{n_A}$ \hspace*{4mm} & $2 -2/\sqrt{n_A}$ \hspace*{4mm} 
& $1$ for $n_A=2$  &
\\[1mm]
\hline
Value for pure states &   $2 - 2 \sqrt{\mu_\mmax}$  &  $2 - 2 K^{-\onehalf}$  &      ?         &      $1 - K^{-1}$
\\[1mm]
\hline
\begin{tabular}{c} Relations and \\ cross inequalities \end{tabular}
                  &   \multicolumn{4}{|c|}{$2 -2 \sqrt{ 1 - D_\Hel^G(\rho) /2} \;\leq\; D_\Bu^G (\rho) \; \leq\;  D_\Hel^G(\rho)
\;= \;  2 -2 \sqrt{ 1 - D_\HS^G(\sqrt{\rho})}$}
\\[1mm]
\hline
\begin{tabular}{c} Computability  \\ for two qubits \end{tabular}
                 &   Bell-diagonal states    &    all states
                 & $\left\{ \begin{array}{l} \text{X-states} \\ \text{quantum-classical states} \end{array} \right.$
                 &   all states
\\[1mm]
\hline
\begin{tabular}{c} Closest classical-\\ quantum state \end{tabular}
                & \begin{tabular}{c} given by Eq.~(\ref{eq-again_I_was_stupid}) \\ ((\ref{eq-closest_A_clas_state_Bures}) for pure states) \end{tabular}
                &  \begin{tabular}{c} given by Eq.~(\ref{eq-CCL_Hel}) \\ ((\ref{eq-Hellinger_CCQ_state}) for pure states) \end{tabular}
 &    ?  & \begin{tabular}{c} given by
                    Eq.~(\ref{eq-closest_states_meas_ind_geo_disc_HS})
                    \\    ((\ref{eq-closest_states_meas_ind_geo_disc})
                    for pure states) \end{tabular}
\\
\hline

\end{tabular}
\caption{\label{tab1} Summary of the original results from
  Section~\ref{sec-geometric_discord}, as well as of previous
  results obtained in
  Refs.~\cite{Daki'c2010,Aaronson2013,Spehner2013,Ciccarello2014}, for
  the geometric discord with the Bures, Hellinger, trace, and
  Hilbert-Schmidt distances. Here $n_A$ denotes the Hilbert space
  dimension of the reference subsystem $A$. The quantities
  $\mu_\mmax=\max \{ \mu_i \}$ and $K= (\sum_i \mu_i^2 )^{-1}$ are,
  respectively, the maximal Schmidt coefficient and the Schmidt number
  of a pure state. The question marks ``?'' indicate unsolved  problems.}
\end{table}

\begin{table}
\scriptsize
\begin{tabular}{|c||c|c|c|c|}

\hline
                             &   \multicolumn{4}{|c|}{Measurement-induced geometric discord $D^M$}
\\[1mm]
\hline
Distance                     &          Bures     &     Hellinger &          Trace   &          Hilbert Schmidt \\
\hline
\hline
\begin{tabular}{c}  Proper measure of\\  quantum correlations \end{tabular}
                    &    \checkmark          &  \checkmark        &     \checkmark   &    no
\\[1mm]
\hline
Satisfies  Axiom (v)
                    &  \checkmark          &   for $n_A=2$ (conjecture) &  proved for $n_A=2$ &
\\[1mm]
\hline
\begin{tabular}{c} Maximal value  \\ if $n_A \leq n_B$ \end{tabular}      &  \begin{tabular}{r} $2 - 2/\sqrt{n_A}$ \hspace*{4mm} \\  \end{tabular}
  &  \begin{tabular}{r} $2 - 2/\sqrt{n_A}$ \hspace*{4mm} \\  \end{tabular} &   $(2-2/n_A)^2$  &
\\[1mm]
\hline
Value for pure states &   $2 - 2 K^{-\onehalf}$   &   $2 - 2 \sum_i \mu_i^{\frac{3}{2}}$  &    see Theorem 3.3 in~\cite{Piani2014}         &      $1 - K^{-1}$
\\[1mm]
\hline
\begin{tabular}{c} Comparison with the \\ geometric discord \end{tabular}
                  &  $D_\Bu^G \leq D^M_\Bu  \leq 2 D^G_\Bu - \onehalf ( D^G_\Bu )^2$
                  &  $D_\Hel^G \leq D^M_\Hel  \leq 2 D^G_\Hel -\onehalf ( D^G_\Hel )^2$
                  &  $\begin{cases} D_{\tr}^M = D_{\tr}^G & \text{for } n_A=2 \\  D_{\tr}^M \geq  D_{\tr}^G & \text{for } n_A>2 \end{cases}$
                  &  $D_\HS^M = D_\HS^G$
\\[1mm]
\hline
\begin{tabular}{c} Computability \\ for two qubits \end{tabular}
                 &    ?    &    ?
                 & $\left\{ \begin{array}{l} \text{X-states} \\ \text{quantum-classical states} \end{array} \right.$
                 &   all states
\\[1mm]
\hline
\begin{tabular}{c} Closest post-measu-\\ rement state \end{tabular}
                & \begin{tabular}{c} for pure states, given \\ by Eq.~(\ref{eq-closest_states_meas_ind_geo_disc})  \end{tabular}
                & \begin{tabular}{c} for pure states, given \\  by Eq.~(\ref{eq-closest_states_meas_ind_geo_disc}) \end{tabular}
                &  \begin{tabular}{c}   for pure states, given \\  by Eq.~(\ref{eq-closest_states_meas_ind_geo_disc}) \end{tabular}
                & \begin{tabular}{c} given by
                    Eq.~(\ref{eq-closest_states_meas_ind_geo_disc_HS}) 
                    \\    ((\ref{eq-closest_states_meas_ind_geo_disc})
                    for pure states) \end{tabular}
\\
\hline

\end{tabular}
\caption{\label{tab2} Summary of the original results from
  Section~\ref{sec-meas_ind_geo_disc}, as well as of previous
  results obtained in
  Refs.~\cite{Luo2010,Daki'c2010,Piani2014,Nakano2013,Ciccarello2014}, for the
  measurement-induced geometric discord with the Bures, Hellinger,
  trace, and Hilbert-Schmidt distances. The  notations are the same as the ones introduced and explained in the caption of Table~\ref{tab1}.}
\end{table}

\begin{table}
\scriptsize
\begin{tabular}{|c|l||c|c|c|c|}

\hline
\multicolumn{2}{|c||}{}                 &   \multicolumn{4}{|c|}{Discord of response $D^R$}
\\[1mm]
\hline
\multicolumn{2}{|c||}{Distance}    & Bures    &    Hellinger    &          Trace   &          Hilbert Schmidt \\
\hline
\hline
\multicolumn{2}{|c||}{\begin{tabular}{c}  Proper measure of\\  quantum correlations \end{tabular}}
                              &    \checkmark          &  \checkmark        &   \checkmark   &    no
\\[1mm]
\hline
\multicolumn{2}{|c||}{Satisfies  Axiom (v)}          &  \checkmark
     &  \checkmark   &  \checkmark  & no if $n_B \geq 2 n_A$
\\[1mm]
\hline
\multicolumn{2}{|c||}{\begin{tabular}{c} Maximal value  \\ if $n_A
    \leq n_B$ \end{tabular}}                 &   $1$     &   $1$    &
$1$   & $1$
\\[1mm]
\hline
\multicolumn{2}{|c||}{Value for pure states}
                              &  $1 - \sqrt{1- E^R}$ &   $E^R$ &      $E^R$        &      $E^R$
\\[1mm]
\hline
           & $n_A=2$         &  $D^R_{\Bu}   =2D^G_{\Bu}  -\frac{1}{2} ( D^{G}_{\Bu})^2$
                             &  $D^R_{\Hel}   =2D^G_{\Hel}  -\frac{1}{2} ( D^{G}_{\Hel})^2$
                             &  $D^R_{\tr}=  D^G_{\tr} $
                             &  $D^R_\HS = 2 D^G_\HS$
\\[2mm] \cline{2-6}
\begin{tabular}{c} Comparison  \\  with $D^G$ \end{tabular}             &  $n_A=3$         &
\multirow{2}[2]{*}{
$\begin{array}{c}  1-\sqrt{1- \frac{1}{n_An_B}  \sin^2(\frac{\pi}{n_A}) (D^G_{\Bu} )^2}
\\[1mm]  \leq D_\Bu^R \leq
\\[1mm] \sqrt{2 n_A n_B  \Bigl( 2 D^G_\Bu - \onehalf  (D^G_{\Bu} )^2
  \Bigr)}  \end{array}$
}
                             &  $D^R_{\Hel}   =  \frac{3}{2} D_{\Hel}^G   - \frac{3}{8} (D^{G}_{\Hel})^2$
                             &  \multirow{2}[2]{*}{$\begin{array}{c}
    \frac{1}{n_An_B} \sin^2 \bigl(\frac{\pi}{n_A} \bigr) D^G_{\tr}
    \\[1mm] \leq D^R_{\tr} \leq \\[1mm] n_A n_B D^G_{\tr}\end{array}$ }
                             &   $D^R_\HS = \frac{3}{2} D^G_\HS$
\\[2mm] \cline{2-2}\cline{4-4}\cline{6-6}
         &  $n_A >3$        &
                             &  $\begin{array}{c} \sin^2 \big(
  \frac{\pi}{n_A} \big) \big( 2 D_{\Hel}^G   - \frac{1}{2} (D^{G}_{\Hel} )^2 \big)
                                  \\[1mm] \leq D^R_{\Hel} \leq  2 D_{\Hel}^G   - \frac{1}{2} (D^{G}_{\Hel} )^2 \\[1mm] \end{array}$
                             &
                             &  $\begin{array}{c} 2 \sin^2 ( \frac{\pi}{n_A}) D^G_{\HS} \\ \leq D^R_\HS  \leq 2 D^G_{\HS} \end{array}$
\\[2mm]
\hline
       & $n_A = 2$       &   $D_\Bu^M \leq 2 - \sqrt{2} \sqrt{1  +  (1-D_\Bu^R)^2}$
                            & \multirow{3}[2]{*}{$\begin{array}{c} \\[2mm]
    \sin^2 ( \frac{\pi}{n_A} ) D^M_\Hel \leq D^R_\Hel \\[2mm] \end{array}$}
                            & $D_{\tr}^R = D_{\tr}^M$
                           & $D_\HS^R= 2 D^M_\HS$
\\[2mm] \cline{2-3} \cline{5-6}
\begin{tabular}{c} Comparison  \\  with $D^M$ \end{tabular}
     &  $n_A=3$         &  $D_\Bu^M \leq 2 - \frac{2}{\sqrt{3}} \sqrt{1  + 2 (1-D_\Bu^R)^2}$
                         &
                         &     ?
                         &     $D^R_\HS = \frac{3}{2} D^M_\HS$
\\[2mm] \cline{2-3} \cline{6-6}
         &  $n_A >3$     & $D^M_\Bu \leq 2 - \frac{2}{\sqrt{n_A}} (1 - D^R_\Bu )$
                         &
                         &
                         &  $\begin{array}{c} 2 \sin^2 ( \frac{\pi}{n_A}) D^M_{\HS} \\ \leq D^R_\HS  \leq 2 D^M_{\HS} \end{array}$
\\[2mm]
\hline
\multicolumn{2}{|c||}{\begin{tabular}{c} Cross inequalities \\ and relations \end{tabular}}
                  &   \multicolumn{4}{|c|}{
$D^R_\Bu \leq D^R_\Hel \leq  1 - ( 1 - D^R_\Bu )^2\quad $,
$\quad  ( D^R_\Hel )^2 \leq D^R_{\tr}
  \leq 1 - ( 1 - D^R_\Bu )^2 \quad $,   $\quad  D^R_\Hel ( \rho) = D_\HS^R( \sqrt{\rho})$}
\\[1mm]
\hline
\multicolumn{2}{|c||}{\begin{tabular}{c} Computability \\ for two qubits \end{tabular}}
                 &   Bell-diagonal states
                 &   all states
                 &   $\left\{ \begin{array}{l} \text{X-states} \\ \text{quantum-classical states} \end{array} \right.$
                 &   all states
\\
\hline

\end{tabular}
\caption{\label{tab3} Summary of the original results from
  Section~\ref{sec-disc_of_response} and
  Appendix~\ref{app-proof_theo5-7}, as well as of previous
  results obtained in Ref.~\cite{Roga2014}, for the discord of
  response with the Bures, Hellinger, trace, and Hilbert-Schmidt
  distances. Here  $E^R$ is the entanglement of
  response, see Eq.~(\ref{eq-entanglement_of_response}).
The remaining notations are the same as the ones introduced and explained in the caption of Table~\ref{tab1}.}
\end{table}

\subsection{Exact relations and bounds between the geometric
  measures: arbitrary bipartite systems} \label{sec-main_results_general_bounds}

The Tables~\ref{tab1}-\ref{tab3} summarize most of our results on the
properties of the geometric measures of quantum correlations, most notably
the relations and bounds between them which
are derived in
Theorems~\ref{theo-rel_geo_disc_Hel_HS},~\ref{theo-bounds_Bures_geo},~\ref{theo-comparison_meas_ind_geo_and_geo_discord_Hel},
and \ref{theo-comparison_meas_ind_geo_and_geo_discord_bures}-\ref{qtmeqdisc} below.
When not stated otherwise, all identities and bounds hold for
arbitrary finite
dimensions $n_A$ and $n_B$ of the Hilbert spaces $\Hh_A$ and
$\Hh_B$.
Many  bounds are non trivial and are established by using the
general expressions given in
Eqs.~(\ref{eq-formula_Hellinger_geo_discord})-(\ref{eq-formula_Bures_discord_of_response}).
In addition, we also report in Tables~\ref{tab1}-\ref{tab3} some
straightforward but important consequences
 of general inequalities between the trace,
Hilbert-Schmidt, Bures, and Hellinger distances, which
are recalled in Section~\ref{wdtc} below.
Tables~\ref{tab1}-\ref{tab3} do not contain all such
trivial bounds, so we write them explicitly here for the geometric discord:
\begin{eqnarray} \label{eq-trivial_bounds}
& & \frac{1}{n_A n_B} D_{\tr}^G ( \rho)  \leq  D_\HS^G ( \rho) \leq D_{\tr}^G ( \rho)
\\ \label{eq-trivial_boundsbis}
& & D_\Bu^G ( \rho)^2  \leq  D_\Hel^G (\rho)^2  \leq   D_{\tr}^G ( \rho)
\leq 2 g( D_{\Bu}^G (\rho ) )\;,
\end{eqnarray}
where we have introduced the function $g(d) \equiv 2 d - d^2/2$.
The same inequalities hold for the measurement-induced geometric
discord and the discord of response, except that the latter appears
multiplied by an extra normalization factor ${\cal{N}}=4$ for the
trace distance and ${\cal{N}}=2$ for the Hilbert-Schmidt, Bures, and
Hellinger distances. This is a  trivial consequence of the
normalization introduced in the definition of $D^R$, see Eq.~(\ref{def:quantumn}).

\subsection{Computability of the Hellinger geometric discord and Hellinger
  discord of response} \label{sec-computability}

Let us point out the simple expressions  found in
Tables~\ref{tab1} and \ref{tab3} for the
Hellinger geometric discord and discord of response in terms of the
corresponding measures for the Hilbert-Schmidt distance of the square
root of $\rho$,
\begin{equation} \label{eq-rel_disc_resp_HS_Hel}
D_\Hel^G (\rho) = 2 - 2 \sqrt{ 1 - D^G_\HS ( \sqrt{\rho} )}
\quad , \quad
D^R_\Hel ( \rho) = D^R_\HS ( \sqrt{\rho} ) \; .
\end{equation}
The first identity is the content of
Theorem~\ref{theo-rel_geo_disc_Hel_HS} below and the
second one is a trivial consequence of the definitions, see
Eqs.~(\ref{eq-Q_Hellinger_distance}) and~(\ref{def:quantumn}).
As a result, since the geometric measures with Hilbert-Schmidt distance
are known to be easy to compute~\cite{Daki'c2010,Luo2010,Modi_review}, so are the Hellinger geometric
discord and discord of response. We emphasize that $D_\Hel^G$ and
$D^R_\Hel$ are {\it bona fide} measures of quantum correlations
satisfying the basic Axioms (i-iv) of Sec.~\ref{sec-intro}, as opposed to the
Hilbert-Schmidt measures which do not obey the monotonicity Axiom (iii).
Hence  $D^G_\Hel$ and $D^R_\Hel$ have the appealing feature of being
at the same time  physically reliable and easy to compute.

In fact, we can do better and determine directly with the help of
Eq.~(\ref{eq-formula_Hellinger_geo_discord}) an explicit expression
for the Hellinger geometric discord  whenever $A$ is
a qubit  and $B$ is an arbitrary
system with a $n_B$-dimensional Hilbert space (qudit). Note that in this case
$D^G_\Hel$ and $D^R_\Hel$  are simply related to each
other, as well as when $A$ is a qutrit  (see Table~\ref{tab3}). Hence, 
if one is able to compute $D^G_\Hel$ then the
computability of $D^R_\Hel$ for $n_A=2, 3$ immediately follows.
In the case  $n_A=2$, let us introduce the vector $\vec{\sigma}$
formed by the three Pauli matrices acting on $A$. Similarly, let the
vector $\vec{\gamma}$ be formed by the $(n_B^2-1)$ self-adjoint operators
$\gamma_p$ acting on $B$ such that $\{ \idty/\sqrt{n_B} ,
\gamma_p/\sqrt{n_B} \}_{ p=1}^{n_B^2-1}$ is an \ONB of the Hilbert
space of all $n_B \times n_B$
matrices. This means that $\tr \gamma_p = 0$ and  $\tr \gamma_p
\gamma_q  = n_B \delta_{pq}$ for any  $p,q=1,\ldots , n_B^2-1$. The
 square root of $\rho$ can be decomposed as
\begin{equation} \label{eq-Boch_dec_square_root}
\sqrt{\rho} = \frac{1}{\sqrt{2 n_B}}
\Bigl( t_0 \idty \otimes \idty + \vec{x} \cdot  \vec{\sigma} \otimes \idty + \idty \otimes \vec{y} \cdot \vec{\gamma} + \sum_{m=1}^3\sum_{p=1}^{n_B^2-1}
t_{mp} \, \sigma_m \otimes \gamma_p \Bigr)
\end{equation}
with $t_0 \in [-1,1]$, $\vec{x} \in \real^3$, and $\vec{y} \in
\real^{n_B^2-1}$. We denote by $T$ the $3 \times
(n_B^2-1)$ complex matrix with coefficients $t_{mp}$. The condition
$\tr ( \sqrt{\rho})^2  = 1$ entails $t_0^2 + \| \vec{x} \|^2 + \|
\vec{y} \|^2 + \tr (T T^{\rm T} ) = 1$ (here $T^{\rm T}$ stands for
the  transpose of $T$). For any \ONB  $\{ \ket{\alpha_i} \}_{i=0,1}$ for qubit $A$, one finds
\begin{equation}
\sum_{i=0,1}  \tr [ \bra{\alpha_i} \sqrt{\rho} \ket{\alpha_i}^2 ] =
t_0^2 + \| \vec{y} \|^2 + \vec{u}^{\rm T} ( \vec{x} \vec{x}^{\rm T} + T
T^{\rm T} ) \vec{u} \; ,
\end{equation}
where we have introduced the unit vector 
$\vec{u}= \bra{\alpha_0} \vec{\sigma} \ket{\alpha_0} = - \bra{\alpha_1} \vec{\sigma} \ket{\alpha_1}$.
Maximizing over all such vectors and using Eq.~(\ref{eq-formula_Hellinger_geo_discord}), we have
\begin{equation} \label{eq-explicit_formula_geo_disc_2_qubits}
D_\Hel^G ( \rho) = 2 - 2 \sqrt{ t_0^2 + \| \vec{y}\|^2 + k_{\rm max} } \; \; ,
\end{equation}
where  $k_{\rm max}$ is the largest eigenvalue of the $3 \times 3$
matrix $K = \vec{x} \vec{x}^{\rm T} + T T^{\rm T}$.
Therefore, the calculation of $D_\Hel^G ( \rho) $ is straightforward
once one has determined the decomposition~(\ref{eq-Boch_dec_square_root}) of the square root of $\rho$. The
Hellinger geometric discord is thus easily computable on all qubit-qudit states.

The computability for qubit-qudit states was also noticed in
Ref.~\cite{Luo2013} for  a modified version of the Hellinger measurement-induced
geometric discord, defined as
\begin{equation} \label{eq-Luo_Fu_meas_induced_discord}
\Delta_\Hel^M ( \rho) = \min_{\{ \Pi_{i}^A \} } \Big\| \sqrt{\rho}
- \sum_{i=1}^{n_A} \Pi_{i}^{A} \otimes \idty \,\sqrt{\rho}\,\Pi_{i}^{A} \otimes
\idty\Big\|_\HS^2 \;,
\end{equation}
where the minimum is taken over all families
$\{ \Pi_{i}^{A} \}$  of rank-one orthogonal projectors on $\Hh_A$.
When  $A$ is a qubit,  $\Delta_\Hel^M(\rho) = D^R_\Hel (\rho)/2$
coincides with the Hellinger discord of response up to a factor of one
half.
In this case,  $D^R_\Hel ( \rho)$ is also equal to the local quantum
uncertainty (LQU) measure $\c U_A^\Lambda (\rho)$ for bipartite systems introduced in
Ref.~\cite{Girolami2013}
(we refer the reader to Section~\ref{sec-def_discords} for the
definition of $\c U_A^\Lambda (\rho)$). This measure was
evaluated explicitly for qubit-qudit states in Ref.~\cite{Girolami2013}. 
From this result, one finds (see Sec.~\ref{sec-def_discords})
\begin{equation} \label{eq-closed_formula_discresp_Hel}
 D_\Hel^R (\rho) = \c U_A^\Lambda ( \rho)
 = 2 - 2 (t_0^2 + \| \vec{y}\|^2 + k_{\rm max} ) \; .
\end{equation}
This expression is consistent
with Eqs.~(\ref{eq-explicit_formula_geo_disc_2_qubits}) and the
relation $D_\Hel^R ( \rho) = 2 D_\Hel^G ( \rho) - D_\Hel^G (
\rho)^2/2$ from Table~\ref{tab3}.

\subsection{Inequalities between the geometric measures: qubit-qudit
 systems} \label{eq-results_bounds_qubit}

\begin{figure}
\includegraphics[width=17.7cm]{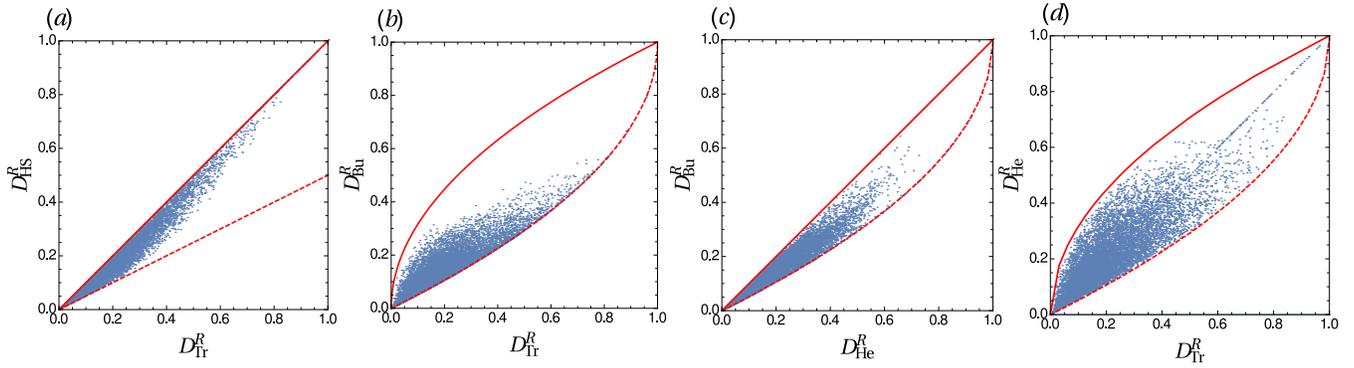}
\caption{Comparison of the discords of response based on the four
distances introduced in Sect.~\ref{sec-intro}. The points
represent $10^4$  randomly generated two-qubit  states (see main text for details).
The lines correspond to equalities in the inequalities of
Eqs.~(\ref{eq-comparison_geomeas_trace_HS}) and~(\ref{eq-bounds_geo_meas_Bu_Hel}).
(a) Hilbert-Schmidt and trace discords of response $D^R_\HS$ and
$D^R_{\tr}$. Red solid line $D^R_\HS= D^R_{\tr}$ achieved for pure
states, see Eq.~(\ref{eq-comparison_geomeas_trace_HS}); red dashed line:
$D^R_\HS= D^R_{\tr}/2$, see Eq.~(\ref{eq-comparison_geomeas_trace_HS}).
(b) Bures and trace discords of response $D^R_\Bu$ and
$D^R_{\tr}$. Red solid line: $D^R_{\Bu} = \sqrt{D^R_{\tr}}$, see Eq.~(\ref{eq-bounds_geo_meas_Bu_Hel}); red
dashed line: $D_\Bu^R = 1 - \sqrt{1-D_{\tr}^R} \;\Leftrightarrow \;
D^R_{\tr}= 2 D_\Bu^R -(D^R_\Bu)^2$ achieved for pure states, see Eq.~(\ref{eq-bounds_geo_meas_Bu_Hel}).
(c) Bures and Hellinger discords of response $D^R_\Bu$ and
$D^R_{\Hel}$.
Red solid line: $D^R_{\Bu} = D^R_{\Hel}$,  see Eq.~(\ref{eq-bounds_geo_meas_Bu_Hel}); dotted line:
$D_\Bu^R = 1 -\sqrt{1-D_\Hel^R}$, achieved for pure
states, see Eq.~(\ref{eq-bounds_geo_meas_Bu_Hel}).
(d) Hellinger and trace discords of response $D^R_\Hel$ and
$D^R_{\tr}$.
Red solid line: $D^R_\Hel = \sqrt{D^R_{\tr}}$, see Eq.~(\ref{eq-bounds_geo_meas_Bu_Hel});
red dashed line: $D_\Hel^R = 1 - \sqrt{1-D_{\tr}^R} \;\Leftrightarrow \; D_{\tr}^R= 2 D_\Hel^R - (D_\Hel^R)^2$,
see Eq.~(\ref{eq-bounds_geo_meas_Bu_Hel}).}
\label{responsegrid}
\end{figure}

In this subsection we consider the specific case of a reference
subsystem $A$ being a qubit ($n_A=2$), while subsystem $B$ is of
arbitrary space dimension. We summarized in Tables~\ref{tab1}-\ref{tab3}
an ample set of bounds holding between the geometric measures.
It is important to establish whether these bounds are tight or not. A lower or
upper bound on a measure $D$ varying in the interval $[0,D_\mmax]$ is said to be
{\it tight} if for every value $d \in [0,D_\mmax]$,
there exists a bipartite state $\rho$ such that $D(\rho)=d$ and $\rho$ saturates
the bound.

In general, proving that a bound is tight can be challenging.
To get some insight into this problem, we have generated numerically random two-qubit
states, computed $D^M$ and $D^R$ for the four distances considered in
this paper, and drawn
in Figs.~\ref{responsegrid}-\ref{RespMeas} the
corresponding distributions in the planes formed by the pairs of measures we wish
to compare.
The random two-qubit states of rank $k$ are obtained by taking the partial trace over a $k$-dimensional ancillary
system of  randomly generated pure states of the composed  (two qubits + ancilla)
system.  The ensemble of pure states is distributed according to the unitarily invariant Fubini-Study
measure on the projective space of the composed
system~\cite{Bengtsson}. The rank $k$ is chosen randomly.
  Consider the set of points formed by the values of a given pair of
measures we wish to compare for all
randomly generated states. When  the line defined by the
equality in a bound between these two measures is very close
to the border of this set, we say that the
bound is {\it almost tight}. Note that, although this gives an indication that the 
bound could be
tight, the tightness property can only be established 
by finding analytically a family of states with arbitrary 
discord fulfilling the equality. 

Our analytical bounds involve the following real increasing functions
from $[0,2-\sqrt{2}]$ onto $[0,1]$:
\begin{equation} \label{eq-def_function_g_and_f}
g (d) \equiv 2 d - \onehalf d^2
\quad , \quad
h( d) \equiv 2 g(d) - \big( g(d) \big)^2 \; .
\end{equation}
The inverse of $g$ is $g^{-1} (d) = 2 - 2 \sqrt{1-d/2}$.


{\it Comparison of the geometric measures for the trace and Hilbert Schmidt
distances}. From  Tables~\ref{tab1}-\ref{tab3} we get
\begin{equation} \label{eq-comparison_geomeas_trace_HS}
D^G_{\tr}= D_{\tr}^M=D^R_{\tr} \quad , \quad D^G_\HS = D_\HS^M = \onehalf  D^R_\HS \quad , \quad
\frac{1}{n_B} D^R_{\tr}  \leq  D^R_\HS \leq D^R_{\tr} \; .
\end{equation}
The equality between $D^G$ and $D^M$  for the
trace and Hilbert-Schmidt distances is already known,  see
Refs.~\cite{Nakano2013} and~\cite{Luo2010}. The relations
$D^G_{\tr}=D^R_{\tr}$ and $D^G_\HS= D^R_\HS/2$ are proven
in Appendix~\ref{app-proof_theo5-7}.
The first inequality in Eq.~(\ref{eq-comparison_geomeas_trace_HS})  is a trivial bound analog to
Eq.~(\ref{eq-trivial_bounds}). The second one is the content
of Theorem~\ref{qtmeqdisc} proven in
Appendix~\ref{app-proof_theo5-7}. 

The numerical results   displayed
in Fig.~\ref{responsegrid}(a) indicate that the first
inequality is unlikely to be tight in the whole range $[0,1]$,
although it is
almost optimal for weakly discordant states.
The second inequality is 
saturated for pure states, as can be checked by using the values reported in
Table~\ref{tab3} for such states. Hence this inequality is tight. 


{\it Comparison of the three geometric measures based on the Bures
  distance}. The discords $ D^G_\Bu$, $D_\Bu^M$, and $D^R_\Bu$ are ordered as follows:
\begin{equation} \label{eq-bounds_Bu_geodisc_measdisc}
 D^G_\Bu \leq D_\Bu^M \leq g^{-1} \circ h ( D^G_\Bu )
  \leq D^R_\Bu = g ( D^G_\Bu ) \leq g ( D_\Bu^M )
\quad , \quad  D^G_\Bu  \leq  (2 -\sqrt{2}) D^R_\Bu\;.
\end{equation}
The first inequality is a straightforward consequence of the
definitions of $D^G$ and $D^M$ and   is almost tight for two qubits
(see below). The second one is established by using
$D_\Bu^R = g ( D_\Bu^G)$ and $D_\Bu^M \leq 2 - (2  + 2 ( 1 -D_\Bu^R)^2)^{1/2}$ from Table~\ref{tab3}.
It is saturated for pure states (see
Theorem~\ref{theo-upper_bound_D_Bu^M_intermsof_D_Bu^R} below). The
third inequality comes from $g^{-1} \circ h (d) \leq g (d)$ for any $d \in
[0,2-\sqrt{2}]$. It is not tight (it is an equality for
classical-quantum states only). The fourth inequality follows from
the first one and the monotonicity of the function $g$. We find
numerically that this inequality is almost tight for
  two-qubit states, see Fig.~\ref{RespMeas}(a). This indicates that the same is
true for the first inequality.

The last inequality in Eq. (\ref{eq-bounds_Bu_geodisc_measdisc}) can be proven by exploiting
$D_\Bu^R = g ( D_\Bu^G)$, the bound
$d \leq (2- \sqrt{2}) g(d)$ for any $d \in [0 , 2 - \sqrt{2}]$,
and the fact that the highest value of $D_\Bu^G$ is $2 - \sqrt{2}$ when $A$ is a qubit~\cite{Spehner2013}.
This inequality  is not tight (it is saturated for
classical-quantum and maximally entangled states only).
It has been conjectured in
Ref.~\cite{Roga2014} to hold for Bell-diagonal two-qubit states by
relying 
on numerical investigations with randomly generated states. The above
argument provide an analytical proof valid for arbitrary
states of a bipartite system with $n_A=2$ and $n_B \geq 2$. 

Let
us also point out that closed expressions for
$D^G_{\Bu} ( \rho) $ and $D^R_\Bu (\rho)$  have been 
determined for Bell-diagonal states $\rho$ in
Refs.~\cite{Aaronson2013,Spehner2014} and  Ref.~\cite{Roga2014},
respectively. It is
straightforward to verify that these results are
related to  each other by specializing $D_\Bu^R = g ( D_\Bu^G)$ to the case
of Bell-diagonal states.


{\it Comparison of the geometric measures based on the
  Hellinger distance}. Similarly, one has 
\begin{equation}  \label{eq-bounds_Hel_geodisc_measdisc}
D^G_\Hel \leq D_\Hel^M \leq D^R_\Hel = g ( D^G_\Hel ) \leq g( D_\Hel^M )
\quad , \quad D^G_\Hel \leq  (2 -\sqrt{2} ) D^R_\Hel\;.
\end{equation}
Here, the second inequality  is established by using
$D_\Hel^R = g ( D_\Hel^G)$ from Table~\ref{tab3} and
$D_\Hel^M \leq g ( D_\Hel^G)$ from  Table~\ref{tab2}.
This inequality does not seem
to be tight according to Fig.~\ref{RespMeas}(b). 
In contrast, the third inequality (and thus also the first inequality)  
is almost tight for   two-qubit states.


{\it Comparison of the discords of response for the Bures,
Hellinger, and trace distances}. One has
\begin{equation} \label{eq-bounds_geo_meas_Bu_Hel}
1 - \sqrt{1-D_\Hel^R} \leq
D_\Bu^R \leq D_\Hel^R  \leq \sqrt{D_{\tr}^R}
\leq \sqrt{2 D_\Bu^R - (  D_\Bu^R)^2}
\leq \sqrt{2 D_\Hel^R - (  D_\Hel^R)^2} \; .
\end{equation}
The first inequality is equivalent to the bound
$D_\Hel^R \leq 1-(1-D_\Bu^R)^2$ from Table~\ref{tab3}. It is
tight and saturated for pure states (see
Theorem~\ref{prop_bound_D_Bu^R_D_Hel^R} below). The other
inequalities are trivial consequences of general bounds between the
Bures, Hellinger, and trace distances, see
Eq.~(\ref{eq-trivial_boundsbis}).
In particular, the last inequality follows from the second one and the monotonicity of $d \mapsto 2d - d^2$ on $[0,1]$.
 
 According to our numerical results in
Figs.~\ref{responsegrid}(c) and~\ref{responsegrid}(d), 
the second and third inequalities in
Eq.~(\ref{eq-bounds_geo_meas_Bu_Hel}) are almost tight for two qubits. The
fourth inequality is saturated for pure states (see Sec.~\ref{wdtc}
below). 
One sees on Fig.~\ref{responsegrid}(d) that the bound $D_{\tr}^R \leq
2 D_\Hel^R - (D_\Hel^R)^2$ is almost tight. This seems to indicate
that there exists
a family of two-qubit states saturating the second and fourth
inequalities in Eq.~(\ref{eq-bounds_geo_meas_Bu_Hel}). Such states
cannot be pure since $D_\Bu^R < D_\Hel^R$ for pure states when
$D_\Bu^R \not= 0,1$ (see Table~\ref{tab3}). 


{\it Comparison of the (measurement-induced) geometric discords for the Bures and
Hellinger distances}. We find
\begin{equation} \label{eq-bounds_geo_disc_Bu_Hel}
D_\Bu^G \leq D_\Hel^G  \leq g^{-1} \circ h ( D_\Bu^G )
\quad , \quad D_\Bu^M \leq D_\Hel^M \;.
\end{equation}
The first and last inequalities are as in Eq.~(\ref{eq-trivial_boundsbis}). The
second one is a consequence of
$D_\Bu^R= g( D_\Bu^G)$, $D_\Hel^R= g( D_\Hel^G)$, and
$D_\Hel^R \leq 1- (1-D_\Bu^R)^2$ from  Table~\ref{tab3}.
The latter bound being saturated for pure states (see
Theorem~\ref{prop_bound_D_Bu^R_D_Hel^R} below), the same is true for the
second inequality,
which is tight.
Note the similarities between the lower and upper bounds on
 $D_\Bu^M$ and $D_\Hel^G$ in
Eqs.~(\ref{eq-bounds_Bu_geodisc_measdisc})
and~(\ref{eq-bounds_geo_disc_Bu_Hel}).
 
According to the numerical results shown in
Fig.~\ref{measurementgrid}(b), the last inequality on the measurement-induced discords is almost
tight for two qubits.


{\it Comparison of the
geometric and measurement-induced geometric discords for the trace,
Bures, and Hellinger distances}.
Finally, we obtain a set of inequalities enabling to compare $D^G$ and $D^M$:
\begin{equation}
\label{eq-relations_trace_and_HS}
D^G_{\tr} = D^M_{\tr} \leq h( D_\Bu^G )
\leq \min \{ h ( D_\Hel^G ) \, ,\, h ( D_\Bu^M) \} \leq h ( D_\Hel^M )
\quad, \quad
D_\Bu^G \leq D_\Hel^G \leq ( 2 -\sqrt{2}) \sqrt{ D_{\tr}^M} \; .
\end{equation}
The first inequality follows by combining the relations
$D_{\tr}^G =D_{\tr}^M = D_{\tr}^R$ and
$D_\Bu^R = g( D_\Bu^G)$ with the fourth bound in Eq.~(\ref{eq-bounds_geo_meas_Bu_Hel}).
As the latter bound, this inequality is saturated for pure states.
The second and third inequalities in Eq.~(\ref{eq-relations_trace_and_HS}) are straightforward consequences of the  monotonicity of the function $h$
and of the trivial bounds $D_\Bu^G \leq D_\Hel^G $,
$ D_\Bu^G\leq D_\Bu^M \leq D_\Hel^M$, and $D_\Hel^G \leq D_\Hel^M$.
The fifth inequality follows from the last
bound in Eq.~(\ref{eq-bounds_Hel_geodisc_measdisc}) and the
third  bound in Eq.~(\ref{eq-bounds_geo_meas_Bu_Hel}).

The numerical
results shown in panels (a) and (c) of
Fig.~\ref{measurementgrid} indicate that there exists a family of
mixed two-qubit states which nearly saturate the first,
second, and third inequalities in
Eq.~(\ref{eq-relations_trace_and_HS}), \ie, such that
$D_{\tr}^M \simeq h ( D_\Bu^G ) \simeq h ( D_\Hel^G ) \simeq  h (D_\Bu^M) \simeq h ( D_\Hel^M )$. 
This provides a numerical hint that
these three inequalities could be tight. Moreover, since $h$ is an increasing
function, this also tells us that the
first bounds in
Eqs.~(\ref{eq-bounds_Bu_geodisc_measdisc}),~(\ref{eq-bounds_Hel_geodisc_measdisc}),
and~(\ref{eq-bounds_geo_disc_Bu_Hel}) and the last bound in
Eq.~(\ref{eq-bounds_geo_disc_Bu_Hel}) (trivial bounds) are    nearly saturated
by this family of two-qubit states. This is in agreement with
the conclusions drawn above, namely, the saturation
of the first bounds in Eqs.~(\ref{eq-bounds_Bu_geodisc_measdisc}) and~(\ref{eq-bounds_Hel_geodisc_measdisc}) inferred respectively from
Figs.~\ref{RespMeas}(a) and~\ref{RespMeas}(b), and the saturation
of the last bound in Eq.~(\ref{eq-bounds_geo_disc_Bu_Hel}) inferred from
Fig.~\ref{measurementgrid}(b).


\begin{conjecture}
{\rm 
From the result of Fig.~\ref{measurementgrid}(c) we conjecture that the
following inequality, which is stronger than the fifth inequality in Eq.~(\ref{eq-relations_trace_and_HS}), holds for two-qubit systems:}
\begin{equation} \label{eq-conjecture_bound}
D_\Hel^M \leq ( 2 - \sqrt{2} ) \sqrt{D_{\tr}^M} \; .
\end{equation}
\end{conjecture}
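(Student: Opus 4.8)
\textit{Proof proposal.} Squaring, the claim is equivalent to
\begin{equation}\label{eq-squared_conjecture}
\bigl(D_\Hel^M(\rho)\bigr)^2\ \le\ (6-4\sqrt2)\,D_{\tr}^M(\rho)
\end{equation}
for every two-qubit state $\rho$, since $(2-\sqrt2)^2=6-4\sqrt2$. The plan is to bound $D_{\tr}^M$ from below and $D_\Hel^M$ from above by quantities that are explicit for two qubits, and then to reduce \eqref{eq-squared_conjecture} to a scalar optimization over the von Neumann measurement direction on the qubit $A$. The first ingredient is an auxiliary inequality valid whenever $A$ is a qubit and $B$ is arbitrary:
\begin{equation}\label{eq-lemma_trace_vs_HS_measdisc}
D_{\tr}^M(\rho)\ \ge\ 2\,D_\HS^M(\rho)\;.
\end{equation}
Indeed, fix a rank-one measurement $\{\Pi_i=\ketbra{\alpha_i}{\alpha_i}\}_{i=0,1}$ on $A$ and set $C=\bra{\alpha_0}\rho\ket{\alpha_1}$, an operator on $\Hh_B$. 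The coherence operator $\rho-\rho_{\rm p.m.}^{\{\Pi_i\}}=\ketbra{\alpha_0}{\alpha_1}\otimes C+\ketbra{\alpha_1}{\alpha_0}\otimes C^\dagger$ is block off-diagonal in the $A$-basis, so its eigenvalues come in pairs $\{\pm s_k\}_k$ with $s_k\ge0$ the singular values of $C$; hence $\|\rho-\rho_{\rm p.m.}^{\{\Pi_i\}}\|_{\tr}^2=4\bigl(\sum_k s_k\bigr)^2\ge 4\sum_k s_k^2=2\,\|\rho-\rho_{\rm p.m.}^{\{\Pi_i\}}\|_\HS^2$, and minimizing over the measurement yields \eqref{eq-lemma_trace_vs_HS_measdisc}. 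Equality requires $C$ to have rank $\le1$ at the optimal measurement, which is automatic for pure states (and also for Werner and Bell-diagonal states), so \eqref{eq-lemma_trace_vs_HS_measdisc} is saturated there.

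Combining \eqref{eq-squared_conjecture} with \eqref{eq-lemma_trace_vs_HS_measdisc}, it is enough to prove the sharper inequality $\bigl(D_\Hel^M(\rho)\bigr)^2\le(12-8\sqrt2)\,D_\HS^M(\rho)$ for two-qubit $\rho$ (note $12-8\sqrt2=2(6-4\sqrt2)$), which has the advantage that \emph{both sides are computable}. One should keep in mind that this reduction is lossless only if the family of states saturating the conjectured bound consists of states for which \eqref{eq-lemma_trace_vs_HS_measdisc} is tight (rank-one $C$ at the optimal measurement); should the numerics of Fig.~\ref{measurementgrid}(c) reveal otherwise, one must instead use the exact two-qubit expression $D_{\tr}^M(\rho)=\min_{\vec u}4\bigl(\|C(\vec u)\|_\HS^2+2\,|\det C(\vec u)|\bigr)$ and carry the determinant term, which can only help.

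To make both sides explicit, the right-hand side is the Daki\'c--Brukner--Vedral closed form for $D_\HS^G=D_\HS^M$ in terms of the Bloch vector $\vec a$ and correlation matrix $T'$ of $\rho$~\cite{Daki'c2010,Luo2010}. For $D_\Hel^M$ one parametrizes the ONB of $A$ by a unit vector $\vec u$ via $\ketbra{\alpha_{0,1}}{\alpha_{0,1}}=\tfrac12(\idty\pm\vec u\cdot\vec\sigma)$ and inserts the Bloch decomposition \eqref{eq-Boch_dec_square_root} of $\sqrt\rho$ (with $\gamma_p=\sigma_p$, $n_B=2$): then $\bra{\alpha_0}\sqrt\rho\ket{\alpha_0}=\tfrac{1}{2\sqrt2}\bigl(c_0(\vec u)\,\idty+\vec c(\vec u)\cdot\vec\sigma\bigr)$ with $c_0=t_0+\vec u\cdot\vec x$ and $\vec c=\vec y+T^{\rm T}\vec u$, and $\bra{\alpha_0}\rho\ket{\alpha_0}$ is a positive $2\times2$ operator whose eigenvalues have elementary closed forms, so Eq.~\eqref{eq-formula_Hellinger_meas_ind_discord} renders $D_\Hel^M(\rho)=2-2\max_{\vec u}F(\vec u)$ with $F$ an explicit function of $(t_0,\vec x,\vec y,T)$. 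The desired inequality thus becomes a finite-dimensional inequality in the Bloch data of $\sqrt\rho$ (equivalently, via the nonlinear map $\rho\mapsto\sqrt\rho$, of $\rho$). On pure states it collapses to the one-variable inequality $2-2\sum_i\mu_i^{3/2}\le(2-\sqrt2)\sqrt{4\mu_0\mu_1}$ in the Schmidt weights $\mu_0+\mu_1=1$, which is elementary and an equality only at $\mu_0=\mu_1=\tfrac12$; this identifies the maximally entangled pure states as the extremal case, consistently with the numerics.

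The main obstacle will be the scalar optimization itself. The maximization over $\vec u$ defining $D_\Hel^M$ and the (implicit) minimization over $\vec u$ in $D_\HS^M$ (or $D_{\tr}^M$) are attained at different directions in general, and $D_\Hel^M$ couples $\sqrt\rho$ to $\rho$ through the operator square root $\sqrt{\bra{\alpha_i}\rho\ket{\alpha_i}}$, so a termwise comparison as in the $D^G$--$D^R$ relations is not available. Worse, the naive operator--Jensen estimate $\bra{\alpha_i}\sqrt\rho\ket{\alpha_i}\le\sqrt{\bra{\alpha_i}\rho\ket{\alpha_i}}$ only reproduces the known bound $D_\Hel^M\le g(D_\Hel^G)=D_\Hel^R$ of Table~\ref{tab2}, whose constant is $1$ rather than $2-\sqrt2$ and which is therefore too weak. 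One must exploit \emph{quantitatively} the defect $\bra{\alpha_0}\rho\ket{\alpha_0}-\bigl(\bra{\alpha_0}\sqrt\rho\ket{\alpha_0}\bigr)^2=\bra{\alpha_0}\sqrt\rho\ket{\alpha_1}\bra{\alpha_1}\sqrt\rho\ket{\alpha_0}$, whose trace over $B$ is precisely the Hilbert--Schmidt coherence of $\sqrt\rho$ controlling $D_\Hel^G$ through \eqref{eq-rel_disc_resp_HS_Hel}. A natural route is: at the direction $\vec u^\star$ realizing $D_\Hel^M$, insert a quantitative lower bound for $\sqrt{A^2+W}-A$ (with $A=\bra{\alpha_i}\sqrt\rho\ket{\alpha_i}\ge0$, $W$ the above defect) in terms of $\|W\|$ and $\|A\|$, combine with Cauchy--Schwarz on $\Hh_B$, and then bound the resulting Hilbert--Schmidt-type expression from below by its minimum over all $\vec u$, i.e.\ by $1-\tfrac12 D_\HS^M(\rho)$ up to the determinant correction above. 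The hope is that the constants then close to give exactly the factor $2-\sqrt2$, with equality only when $t_0=\|\vec x\|=\|\vec y\|=0$ and $TT^{\rm T}=\tfrac12\idty$, i.e.\ at the maximally entangled pure states. The delicate point is making the lower bound on $\sqrt{A^2+W}-A$ strong enough: this will likely require splitting according to whether the conditional $B$-states $\bra{\alpha_i}\rho\ket{\alpha_i}/\eta_i$ are close to pure or close to maximally mixed, treating the near-pure regime by perturbing the pure-state identity above (where the bound is saturated) and the complementary regime by concavity of $t\mapsto\sqrt t$; and, if a fully analytic argument remains out of reach, confirming the final low-parameter inequality by an explicit Lagrange-multiplier extremality analysis of the ratio $D_\Hel^M(\rho)^2/D_{\tr}^M(\rho)$ over the relevant orbit of two-qubit states, guided by Fig.~\ref{measurementgrid}(c).
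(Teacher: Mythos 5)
You should first note that the paper itself does not prove this statement: it is presented explicitly as a conjecture, supported only by the numerics of Fig.~\ref{measurementgrid}(c), and the authors state plainly that they could not find an analytical proof. So there is no proof in the paper to compare against, and the question is only whether your proposal actually closes the gap. It does not. What you have written is a programme, not a proof: the decisive analytic step --- a quantitative lower bound on $\sqrt{A^2+W}-A$ strong enough that ``the constants then close to give exactly the factor $2-\sqrt{2}$'' --- is stated as a hope and never carried out. Everything that is actually verified in your text (the identity $\|X\|_{\tr}^2\ge 2\|X\|_{\HS}^2$ for the block-off-diagonal coherence operator, the closed forms for $D_\HS^M$ and $D_\Hel^M$ in Bloch coordinates, the pure-state case $2-2\sum_i\mu_i^{3/2}\le (2-\sqrt{2})\,2\sqrt{\mu_0\mu_1}$) is either already in the paper in some form ($D_{\tr}^M\ge 2D_\HS^M$ follows from $D_{\tr}^R\ge D_\HS^R$, $D_{\tr}^M=D_{\tr}^R$, and $D_\HS^R=2D_\HS^M$ for $n_A=2$) or concerns only the boundary cases where the conjecture is easy.

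There is also a structural problem with your very first reduction. Replacing $D_{\tr}^M$ by its lower bound $2D_\HS^M$ converts the conjecture into the strictly stronger claim $(D_\Hel^M)^2\le(12-8\sqrt{2})\,D_\HS^M$, and the inequality $D_{\tr}^M\ge 2D_\HS^M$ is saturated only when the optimal coherence block $C$ has rank one. The numerics of Fig.~\ref{measurementgrid}(c) indicate that the conjectured bound is nearly saturated along a one-parameter family of \emph{mixed} two-qubit states, for which there is no reason to expect rank-one $C$; if $D_{\tr}^M>2D_\HS^M$ strictly on that family while $D_\Hel^M\simeq(2-\sqrt{2})\sqrt{D_{\tr}^M}$, your strengthened target inequality is simply false, and the whole reduction collapses. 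You acknowledge this and offer the fallback of carrying the $|\det C|$ term in $D_{\tr}^M=\min_{\vec u}4(\|C(\vec u)\|_\HS^2+2|\det C(\vec u)|)$, but then the two optimizations (the maximization over $\vec u$ in $D_\Hel^M$ and the minimization over $\vec u$ in $D_{\tr}^M$) are attained at different directions and couple $\sqrt{\rho}$ to $\rho$ nonlinearly, which is exactly the obstruction you identify yourself. In short: the statement remains a conjecture, and your proposal, while it correctly identifies the relevant quantities and the extremal pure states, does not constitute a proof and rests on a reduction that may not even preserve the truth of the claim.
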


However, so far, we could not find an analytical proof of Eq.~(\ref{eq-conjecture_bound}).

\begin{figure}
\includegraphics[height=5.5cm]{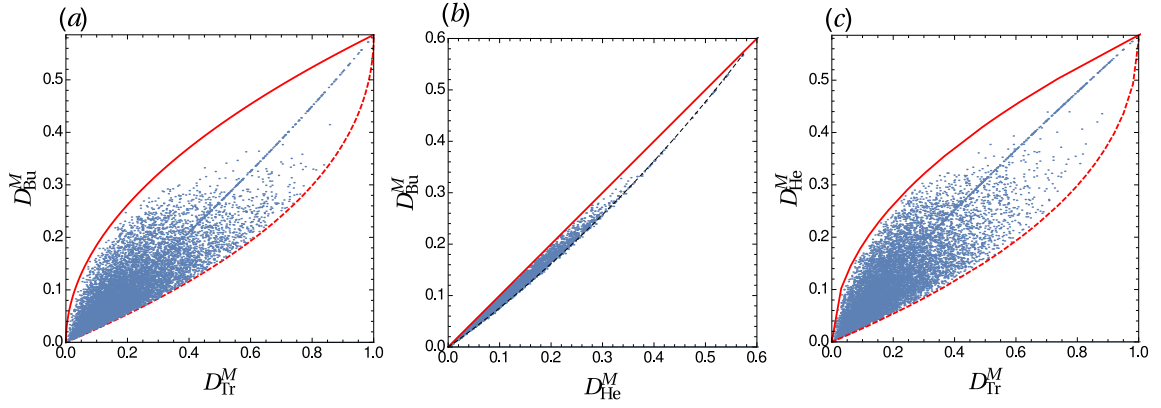}
\caption{
Comparison of the measurement-induced geometric discords based on the four
  distances introduced in Sect.~\ref{sec-intro}.  The points represent
  $10^4$ randomly generated two-qubit states (see main text
  for details).
The lines correspond to equalities in the inequalities of
Eqs.~(\ref{eq-bounds_geo_disc_Bu_Hel})-(\ref{eq-conjecture_bound}).
(a) Bures and trace measurement-induced geometric discords $D^M_\Bu$
and $D^M_{\tr}$. Red solid line: $D^M_\Bu =
(2-\sqrt{2})\sqrt{D^M_{\tr}}$, see
Eqs.~(\ref{eq-bounds_geo_disc_Bu_Hel}) and~(\ref{eq-conjecture_bound}); red dashed line: $D_\Bu^M = h^{-1} (D_{\tr}^M )$, see Eq.~(\ref{eq-relations_trace_and_HS}).
(b) Bures and Hellinger measurement-induced geometric discords
$D^M_\Bu$ and $D^M_{\Hel}$. Red solid line:
$D^M_\Bu = D^M_{\Hel}$, see Eq.~(\ref{eq-bounds_geo_disc_Bu_Hel});
blue dashed line: relation $D_\Hel^M = 2 - [(1+ \sqrt{1-g(D_\Bu^M)})^{3/2}+(1- \sqrt{1-g( D_\Bu^M)})^{3/2}]/\sqrt{2}$
satisfied by pure states.
(c) Hellinger and
trace measurement-induced geometric discords $D^M_\Hel$ and $D^M_{\tr}$. Red solid line:
$D^M_\Hel = (2-\sqrt{2})\sqrt{D^M_{\tr}}$, corresponding to the conjectured
upper bound in Eq.~(\ref{eq-conjecture_bound});
red dashed line: $D_{\Hel}^M = h^{-1}( D_{\tr}^M)$, see Eq.~(\ref{eq-relations_trace_and_HS}).}
\label{measurementgrid}
\end{figure}

\begin{figure}
\includegraphics[height=6cm]{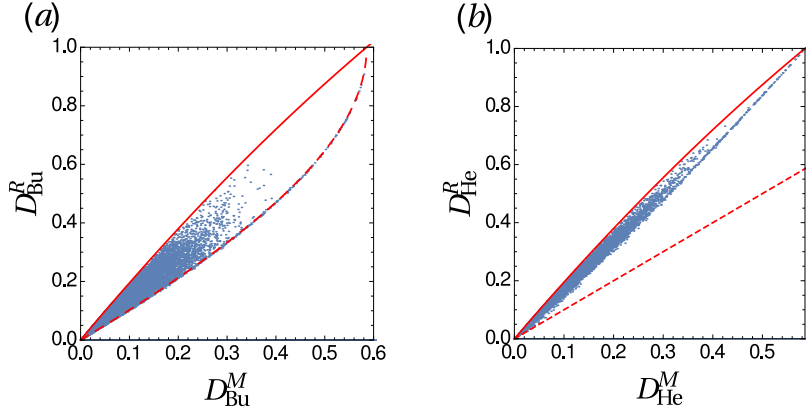}
\caption{
Comparison of the discords of response and measurement-induced geometric discords based on the same distances.
The points represent $10^4$ randomly generated two-qubit states 
(see main text for details).
The lines correspond to equalities in the inequalities of
Eqs.~(\ref{eq-bounds_Bu_geodisc_measdisc}) and~(\ref{eq-bounds_Hel_geodisc_measdisc}).
(a) Bures discord of response $D^R_\Bu$ and measurement-induced geometric discord $D^M_{\Bu}$.
Red solid line:
$D^R_{\Bu} = g( D^M_{\Bu})$, see  Eq.~(\ref{eq-bounds_Bu_geodisc_measdisc}); red dashed line:
$D_\Bu^R = g \circ h^{-1} \circ g ( D_\Bu^M) =1 - \sqrt{1-g ( D_\Bu^M  )}$,
corresponding to the saturation achieved for pure states of the second
inequality in Eq.~(\ref{eq-bounds_Bu_geodisc_measdisc}) with $D_\Bu^G=g^{-1} ( D_\Bu^R)$.
(b) Hellinger discord of response $D^R_\Hel$ and measurement-induced
geometric discord $D^M_{\Hel}$. Red solid line: $D_\Hel^R = g (D_\Hel^M)$, see Eq.~(\ref{eq-bounds_Hel_geodisc_measdisc});
red dashed line: $D^R_\Hel = D^M_{\Hel}$, see Eq.~(\ref{eq-bounds_Hel_geodisc_measdisc}).}
\label{RespMeas}
\end{figure}

\section{Review of previous results} \label{sec-preliminaries}

In this section we review some results already known in the literature that will be used later
on in the paper. We begin by recalling the definitions of the entropic
quantum discord and the local quantum uncertainty (see Sec.~\ref{sec-def_discords}).
Next, we discuss some important properties of the four distances of
interest on the set of quantum states, in particular some bounds between them (see Sec.~\ref{wdtc}). Finally, we
summarize the main arguments showing that $D^G$, $D^M$, and $D^R$  are
{\em bona fide} measures of quantum correlations
(see Sec.~\ref{sec-def_geo_discords_proper}) and briefly review the
results of Ref.~\cite{Spehner2013} on the Bures geometric discord
(see Sec.~\ref{sec-Bures_geometric_disc}).

\subsection{The entropy-based quantum discord and local quantum uncertainty} \label{sec-def_discords}

We recall in this subsection the definitions of the entropic
quantum discord introduced by Ollivier and
Zurek~\cite{Zurek2000,Ollivier2001} and by Henderson and
Vedral~\cite{Henderson2001} and of the LQU introduced by Girolami,
Tufarelli, and Adesso~\cite{Girolami2013}.  

Consider a bipartite quantum system
composed of subsystems $A$ and $B$, in the state $\rho \equiv
\rho_{AB}$. The total correlations between the two parties are
characterized by the  mutual information
\begin{equation}
I_{A:B} ( \rho) \equiv S(\rho_B) + S(\rho_A) - S(\rho ) \; ,
\label{mutual1}
\end{equation}
where the information (ignorance) about the state of $AB$
is given by the von Neumann entropy $S(\rho)\equiv -\tr \rho \log\rho $, and similarly for subsystems $A$ and $B$. In
classical information theory, the mutual information is equal to the
difference between the Shannon entropy of $B$ and the conditional
entropy of $B$ conditioned  on $A$. In the quantum setting, the corresponding quantity is
\begin{equation}
J_{A:B}^{\{\Pi^{A}_i \}} ( \rho) \equiv  S(\rho_B)-\sum_ip_i S(\rho_{B|i} ) \; ,
\end{equation}
where $p_i=\tr ( \rho\, \Pi_{i}^{A}\otimes \idty  )$ is the probability
of outcome $i$ in a local von Neumann measurement  on $A$ defined by
the orthogonal projectors  $\Pi_{i}^{A}$, and $\rho_{B |i}= p_i^{-1}
\tr_A( \rho \, \Pi_{i}^{A} \otimes \idty)$  is the corresponding
conditional post-measurement state of $B$.
It turns out that $I_{A:B}$ and $J_{A:B}$ are
in general not equal for quantum systems. Moreover, $I_{A:B}$ is never smaller than $J_{A:B}$, whatever
the local measurement. Therefore, one can define the {\em entropic quantum discord} as~\cite{Zurek2000,Ollivier2001,Henderson2001}
\begin{equation}
D_{A}^{\rm \,ent} ( \rho ) \equiv  I_{A:B} ( \rho)  - \max_{\{\Pi_{i}^{A} \}} J_{A:B}^{\{\Pi^{A}_i \}} (\rho)  \geq 0 \; .
\label{def:discord}
\end{equation}
This quantity is interpreted as a quantifier of the non-classical
features of the bipartite state $\rho$. Note that it is not
symmetric under the exchange of the two parties. One defines the
quantum discord $D_{B}^{\rm \,ent} ( \rho )$  analogously, by considering local
measurements on subsystem $B$. The two entropic discords $D_A^{\rm \,ent}$ and
$D_B^{\rm \,ent}$ give the amount of bipartite mutual
information   that cannot be retrieved by measuring one of the
subsystems. Actually,  one has
\begin{equation} \label{def:discord_bis}
D_{A}^{\rm \,ent} ( \rho ) = \min_{\{\Pi^{A}_i \}} \bigl\{ I_{A:B} ( \rho) - I_{A:B} ( \rho_{\rm p.m.}^{\{ \Pi_{i}^A \} } )\bigr\} \; ,
\end{equation}
where $\rho_{\rm p.m.}^{\{ \Pi_{i}^A \} }$ is the post-measurement
state after a local measurement on $A$ with
rank-one orthogonal projectors $\Pi_{i}^A$, see
Eq.~(\ref{def:geomdisc2}), and the minimum is taken over the set of all such measurements.

It can be shown that the entropic discord $D_A^{\rm \,ent} $ obeys Axioms
(i-v) stated in the Introduction, so that it is a proper measure of
quantum correlations (see Ref.~\cite{Spehner_review} for a thorough
review). For instance, condition (iv) is fulfilled since $D_A^{\rm \,ent} $
coincides for pure states with the entanglement of formation,
an entanglement monotone that reduces  on pure states to the von Neumann
entropy of the reduced states~\cite{Bennett96}. For mixed
states, the entropic discord captures quantum correlations different
from entanglement. Indeed, most separable (\ie, unentangled) mixed states
have non-vanishing discords. Moreover, a nonzero discord might
be responsible for the improvement (quantum speed-up) of certain quantum
algorithms with respect to their classical analogs~\cite{Datta2008},
although this claim is still debated (see e.g. the survey article in Ref.~\cite{Modi_review}). On the other hand, the analytical
evaluation of the entropic discord remains a formidably
challenging task, even for two-qubit states, because of the
difficulty of the optimization problem over the quantum measurements.
In this respect, distance-based measures of quantum correlations are usually less challenging
to evaluate, as it has been illustrated in
Sec.~\ref{sec-computability}. Moreover, they often have
operational interpretations in terms of state and channel discrimination.

A different measure of quantum correlations called the {\it local quantum
uncertainty} (LQU) was introduced in Ref.~\cite{Girolami2013}.
It is defined as follows.
One says that a quantum observable is measured in a
given state $\rho$ without quantum uncertainty if the measurement does not
disturb  $\rho$. The degree of perturbation is quantified
by the {\em skew information}~\cite{Wigner1963,Luo2003}
\begin{equation}
\c L(\rho,K)\equiv-\frac{1}{2}\tr\left([\sqrt{\rho},K]^2\right) \; ,
\label{skewinform}
\end{equation}
where $K$ is
the measured observable and $[X,Y]\equiv XY-YX$ denotes the
commutator. The skew information is considered a proper measure of the
measurement uncertainty because of the following properties:
it is non-negative, it vanishes if and only if the operators commute,
and it is convex  in $\rho$. Moreover, it is bounded by
the variance of the  measured observable:
$\c L(\rho,K) \leq  \tr K^2 \rho - (\tr K\rho )^2$, with
equality holding for pure states~\cite{Luo2003} (see also
Ref.~\cite{Audenaert2008}).
In particular, in protocols of parameter
estimation the skew information
is related to the quantum Fisher information and to the Cram{\'e}r-Rao bound~\cite{Girolami2014, Luo2003}.
Applying the skew information to local observables $K_{A}^{\Lambda}$ acting on subsystem $A$
with spectrum $\Lambda$ and
minimizing over all such self-adjoint operators,
one defines the LQU as~\cite{Girolami2013}:
\begin{equation}
\c U_A^{\Lambda}(\rho )\equiv \min_{K^{\Lambda}_A}
\c L(\rho ,K^{\Lambda}_A \otimes \idty ) \; ,
\label{locquun}
\end{equation}
where the spectrum $\Lambda$  is fixed and non-degenerate in order to
exclude the identity operator.
If $A$ is a qubit then the dependence on $\Lambda$ of  $\c U_A^{\Lambda}(\rho )$ reduces to a multiplication by a constant
factor~\cite{Girolami2013}. Therefore, one can restrict the definition
of the LQU to the case where $\Lambda$ is the
harmonic spectrum: $\Lambda=\{1,-1\}$.  All Hermitian $2\times 2$ matrices with harmonic spectrum are
unitary matrices. It follows from this observation that if $A$ is a
qubit then the LQU
is equal  to the Hellinger discord of  response $D^R_\Hel$.

The LQU was evaluated explicitly
for arbitrary qubit-qudit states  in Ref.~\cite{Girolami2013}. It was found in
this reference that
$1-\c U_A^{\{1,-1\}}(\rho ) = \lambda_\mmax ( W)$ is the highest eigenvalue of the $3\times 3$
matrix with elements $W_{ij}= \tr \sqrt{\rho}\,
\sigma_i \otimes \idty \sqrt{\rho}\, \sigma_j \otimes \idty$. A simple
calculation shows that $\c U_A^{\{1,-1\}} (\rho)$ is then given by the
right-hand side of Eq. (\ref{eq-closed_formula_discresp_Hel}).

\subsection{Properties and comparison of the trace, Hilbert-Schmidt,
  Bures, and Hellinger distances}\label{wdtc}

In this subsection we recall some known facts about the four distances defined in
Eqs.~(\ref{eq-trace_and_HS_dist})-(\ref{eq-Q_Hellinger_distance}).
In quantum information theory, well-behaved
distances $d$ on the set of quantum states must be contractive
under Completely Positive Trace Preserving (CPTP) maps, that is, they
must be such that  $d (\Phi(\rho ),\Phi(\sigma ) ) \leq d (\rho ,
\sigma )$ for any states $\rho$ and $\sigma$ and any CPTP map $\Phi$
acting on the algebra of operators from
the  system Hilbert space ${\cal H}$ to a (possibly different)
space ${\cal H}'$ (see Sec.~\ref{qbc}
for a definition of CPTP maps)~\cite{Nielsen,Spehner_review}.
Notice also that a contractive distance $d$ is in particular unitarily invariant
(\ie, $d ( U \rho \,U^\dagger ,  U \sigma \,U^\dagger)  = d( \rho ,\sigma )$
for any unitary operator $U$).

Let us first focus on the trace and Hilbert-Schmidt distances
$d_{\tr}$ and  $d_\HS$. The former is contractive under CPTP
maps, but this is not the case for the latter (more generally, as already mentioned,
the distances associated to the $p$-norms $\| X\|_p = ( \tr |X|^p )^{1/p}$
are not contractive for
$p>1$)~\cite{Perez-Garcia2006,Nielsen,Bengtsson,Spehner_review}.
The trace distance achieves an operational meaning in the light of the
Helstrom formula $P_{\rm err} (\rho ,\sigma) =1/2 -d_{\tr}(\rho,\sigma)/4$ for the minimum probability of error in
discriminating two equiprobable quantum states $\rho$ and $\sigma$.
The Hilbert-Schmidt distance can be used to bound from above and below
the trace distance as follows:
\begin{equation}
\frac{1}{\sqrt{r}} d_{\tr}(\rho ,\sigma ) \leq  d_{\HS}(\rho ,\sigma ) \leq d_{\tr}(\rho ,\sigma ) \; ,
\label{univineq}
\end{equation}
where $r$ is the rank of $\rho -\sigma $.
The inequalities on the geometric discords given in Eq.~(\ref{eq-trivial_bounds}) are trivial
consequences of Eq.~(\ref{univineq}). 

The Bures distance $d_\Bu$  defined in Eq.~(\ref{eq-Burea_dist}) coincides with the Fubini-Study distance for pure
states. It is Riemannian and contractive
under CPTP maps. In fact,
it is the smallest distance featuring these two properties~\cite{Petz1996} (see also
Ref.~\cite{Spehner_review}).
It can be bounded in terms of the trace distance and vice versa thanks to the following inequalities~\cite{Fuchs1999,Nielsen}:
\begin{equation}
d_{\Bu}(\rho ,\sigma )^2\leq d_{\tr}(\rho ,\sigma )
\leq  \sqrt{2 g \big(  d_{\Bu}(\rho ,\sigma )^2 \big)} \; ,
\label{burestracerelation}
\end{equation}
where $g$ is the function given in Eq.~(\ref{eq-def_function_g_and_f}).

The (quantum) Hellinger distance $d_\Hel$ defined in
Eq.~(\ref{eq-Q_Hellinger_distance}) appears naturally in the context
of the quantum Chernoff bound~\cite{Audenaert2007}. For commuting states, it reduces, just like the
Bures distance, to the (classical) Hellinger distance
between probability distributions.
Although its definition bears some similarity with that of the
Hilbert-Schmidt distance, it shares many properties of the Bures
distance. Indeed, $d_{\Hel}$ is Riemannian and contractive with
respect to CPTP maps (the contractivity can be derived from Lieb's
concavity theorem~\cite{Lieb1973}, see Ref.~\cite{Audenaert2007}).
Since $d_\Bu$ is the smallest Riemannian contractive
distance, for any states $\rho$ and $\sigma$ one has
\begin{equation}
d_{\Bu}(\rho ,\sigma )\leq d_\Hel (\rho ,\sigma ) \; .
\label{buchel}
\end{equation}
Remarkably, there exists also an important bound of the trace distance
in terms of the Hellinger distance~\cite{Holevo1972}:
\begin{equation}
d_{\Hel}(\rho ,\sigma )^2 \leq d_{\tr}(\rho ,\sigma )\leq
\sqrt{2 g \big(  d_{\Hel} (\rho ,\sigma )^2\big)} \; .
\label{Holevo2}
\end{equation}

The first, second, and third inequalities on the geometric discords in
Eq.~(\ref{eq-trivial_boundsbis}) are
trivial consequences of
Eqs.~(\ref{buchel}),~(\ref{Holevo2}), and~(\ref{burestracerelation}), respectively.
The corresponding bounds for the discord of response, which
are obtained
by taking into account the normalization factor in
Eq.~(\ref{def:quantumn}), read
\begin{equation} \label{eq-trivial_bound_disc_response}
D_\Bu^R ( \rho)^2
\leq D_\Hel^R ( \rho)^2
\leq D_{\tr}^R ( \rho) \leq 1 - \big( 1 - D_\Bu^R ( \rho) \big)^2 \; .
\end{equation}
The last inequality holds as an equality for pure
states. Indeed, the upper bound on the trace distance
in Eq.~(\ref{burestracerelation}) is saturated when both $\rho$
and $\sigma$ are pure~\cite{Nielsen}.
This explains why the fourth bound in
Eq.~(\ref{eq-bounds_geo_meas_Bu_Hel}) and the first bound in
Eq.~(\ref{eq-relations_trace_and_HS}) are tight and saturated for pure
two-qubit states.
Our numerical results displayed in Fig.~\ref{responsegrid}(c) indicate that the first bound
in Eq.~(\ref{eq-trivial_bound_disc_response}) is almost tight when $n_A=n_B=2$.

Let us also point out that the Bures and Hellinger distances are monotonic under tensor products, that is,
if two states $\rho_1$ and $\sigma_1$ are closer to each other than
two other states $\rho_2$ and $\sigma_2$, then the same is true for the states
$\rho_1^{\otimes 2}$, $\sigma_1^{\otimes 2}$ and $\rho_2^{\otimes 2}$,
$\sigma_2^{\otimes 2}$, when considering two identical
copies of the system. We remark that the trace distance does not enjoy this property.

\subsection{Geometric measures as proper measures of quantum correlations} \label{sec-def_geo_discords_proper}

By using known results in the literature, we show in this subsection
that  $D^G$,  $D^M$, and   $D^R$ are {\em bona fide} measures of quantum
correlations satisfying Axioms (i-iv) of Section~\ref{sec-intro}
when the distance is the trace, Bures, or Hellinger distance.

Let us first prove that $D^G$, $D^M$, and $D^R$ satisfy Axiom (i), irrespective of the choice of the distance.
This is obvious for the geometric discord. For the  measurement-induced geometric discord, this comes from the fact that
a state is classical-quantum if and only if it is invariant under a
von Neumann measurement on $A$ with rank-one
projectors~\cite{Ollivier2001,Spehner_review}. Note that
this would not be true if the minimization in
Eq.~(\ref{def:geomdisc2}) was performed over projectors $\Pi_i^A$
with ranks larger than  one.
For the discord of response, the validity of Axiom (i) is a consequence of the following
result derived in Ref.~\cite{Roga2014}: the only bipartite states for which
there exists a local unitary
transformation on $A$ such that the unitary operator has a non-degenerate
spectrum and the state is
invariant under the transformation are the classical-quantum states.

The fact that the three geometric measures obey Axiom (ii) holds for any
unitary invariant distance, and thus in particular for the
trace, Hilbert-Schmidt, Bures, and Hellinger distances.

It is easy to verify that $D^G$,  $D^M$, and  $D^R$ satisfy the
monotonicity Axiom (iii) provided that $d$ is contractive under
CPTP maps (see, e.g., Ref.~\cite{Spehner_review}). As explained above, this is the case for the trace,
Bures, and Hellinger distances, but not for the Hilbert-Schmidt distance.

It remains to prove that the geometric measures satisfy Axiom (iv), \ie, that
they reduce to
entanglement monotones for pure states. To this end, one can exploit
the following known results:
\begin{itemize}
\item[(a)] For pure states, the Bures discord of response is given by~\cite{Roga2014}:
\begin{equation} \label{eq-Bures_disc_resp_pure_states}
D_\Bu^R(\ket{\Psi})= 1  - (1- E^R ( \ket{\Psi} ))^\onehalf \; ,
\end{equation}
where $E^R(\ket{\Psi})$ is the {\it entanglement of response}~\cite{Monras2011} (or {\em unitary entanglement}), originally introduced in
Ref.~\cite{Giampaolo2007}, which is equal to one minus the maximum fidelity
between the pure state $\ket{\Psi}$ and the state obtained  
 by perturbing $\ket{\Psi}$ with a local unitary operator $U_A \in \Uu_\Lambda$:
\begin{equation} \label{eq-entanglement_of_response}
E^R ( \ket{\Psi})
\equiv  \min_{U_A \in \Uu_\Lambda} \big\{ 1 -  | \bra{\Psi} U_A \otimes
\idty \ket{\Psi} |^2 \big\} \; .
\end{equation}
The entanglement of response is an entanglement monotone and can be extended to mixed states via the convex roof construction~\cite{Monras2011}.

\item[(b)] The trace and Hilbert-Schmidt discords of response
  $D^R_{\tr}$ and $D_\HS^R$ coincide exactly with the entanglement of
  response $E^R$ on pure states.

\item[(c)] The Bures geometric discord $D_\Bu^G$ reduces for pure states to~\cite{Spehner2013,Spehner_review}:
\begin{equation} \label{eq-Bures_geo_disc_pure_states}
D_\Bu^G(\ket{\Psi})= 2  - 2 (1- E^W ( \ket{\Psi} ))^\onehalf \; ,
\end{equation}
where
$E^W ( \ket{\Psi} ) \equiv \min \{ 1 -  | \braket{\Phi_\sep}{\Psi} |^2 \}$ is
the Wei-Goldbart measure of global geometric
entanglement~\cite{Wei03}. Here, the minimum is taken over all separable
pure  states $\ket{\Phi_\sep}$ (\ie, product states). The  measure
$E^W$ is an entanglement monotone (see
e.g. Ref.~\cite{Spehner_review}). It has been extended to pure multipartite hierarchies in Ref.~\cite{Blasone08} and to
mixed bipartite states in Ref.~\cite{Streltsov10}.

\item[(d)]   Let $\delta (\rho,\sigma)$ 
be a non-negative function defined on the set of pairs of quantum
states, which is contractive 
with respect to CPTP maps and  satisfies the `flags' condition 
$\delta ( \sum_i p_i \rho_i \otimes \ketbra{i}{i}, \sum_i p_i \sigma_i \otimes \ketbra{i}{i} )=
\sum_i p_i \delta ( \rho_i, \sigma_i )$. Let $\Ll_A$ be a family of CPTP maps on subsystem $A$ which is closed 
under unitary conjugations. It is proven in Ref.~\cite{Piani2014} that the measure of quantumness $Q_{\delta,\Ll_A}$ defined by
\begin{equation}
Q_{\delta, \Ll_A} (\rho) = \inf_{\Lambda_A \in \Ll_A} \delta ( \rho, \Lambda_A \otimes 1 ( \rho))
\end{equation}
is entanglement monotone when restricted to pure
states.
\end{itemize}

Exploiting statements (a-b) above and Eq.~(\ref{eq-rel_disc_resp_HS_Hel}), it follows that on pure states
\begin{equation} \label{eq-def-entanglement_of_response}
D^R_\Hel ( \ket{\Psi})  =  D^R_\HS ( \ket{\Psi})= D^R_{\tr} (\ket{\Psi}) = 1- ( 1 - D^R_\Bu( \ket{\Psi}))^2
 = E^R ( \ket{\Psi}) \; .
\end{equation}
Therefore, the discord of response satisfies Axiom (iv) for the
trace, Hilbert-Schmidt, Bures, and Hellinger distances. 

Note that in this paper we call {\it entanglement monotone} any
measure $E$ on pure states such that 
$E(\ket{\Psi}) \geq E ( \ket{\Phi} )$ whenever $\ket{\Phi}$ can be obtained from $\ket{\Psi}$
by local operations and classical communication between the two parties
$A$ and $B$. We do not ask here that
$E$ obeys the stronger monotonicity requirement $ E(\ket{\Psi}) \geq \sum_i
p_i E (\ket{\Phi_i})$, where the right-hand side is the average entanglement of the
post-selected state
$\ket{\Phi_i}$ conditioned to the  measurement outcome $i$ and
$p_i$ is the corresponding probability, the inequality being true for 
arbitrary measurements (including non-projective ones). Hence, according to our definition, if
$E$ is an entanglement monotone and $f$ is a non-decreasing function
(not necessarily concave), then $f(E)$ is still an
entanglement monotone.
Given that the entanglement of response $E^R$ is strongly
entanglement monotone~\cite{Monras2011}, by
Eq. (\ref{eq-def-entanglement_of_response}) 
the discords of response $D^R_\Hel$, $D^R_\HS$, and $D^R_{\tr}$
actually satisfy a stronger version of Axiom (iv) in which entanglement
monotone is replaced by
strongly entanglement monotone. 
This statement is also true for the
Bures discord of response $D^R_\Bu$, as it can be shown by using
Eqs. (\ref{eq-Bures_disc_resp_pure_states}) and (\ref{eq-entanglement_of_response}) 
 and the
characterization of strongly entanglement monotones in
Ref.~\cite{Vidal00}.  

By the property (c) 
above, the Bures geometric discord $D_{\Bu}^G$ satisfies  Axiom (iv).
Later on, we will show that this is also the case for the Hellinger geometric discord $D_{\Hel}^G$ (see Section~\ref{sec-geometric_discord}) and
for the Bures and Hellinger measurement-induced geometric discords
$D_{\Bu}^M$ and $D_{\Hel}^M$ (see
Section~\ref{sec-meas_ind_geo_disc}).
Note that the right-hand side of Eq.~(\ref{eq-Bures_geo_disc_pure_states}) is not strongly entanglement
monotone~\cite{Spehner_review}, so that $D_{\Bu}^G$ does not satisfy the aforementioned stronger version of
Axiom~(iv).  
 
The general property (d) implies that 
 the measurement-induced geometric discord $D^M$ and the discord of
 response $D^R$ fulfill  Axiom (iv) for the trace, Bures, and Hellinger distances.
Indeed, one easily checks that the trace distance and the square Bures and
Hellinger
distances obey the `flags' condition. By taking $\Ll_A$ to be the
 family 
of rank-one projective measurement CPTP maps $\rho \mapsto \rho_{\rm p.m.}^{\{ \Pi_i^A\}}$
on $A$ defined by Eq. (\ref{def:geomdisc2}), one finds that 
$D^M$ 
is entanglement monotone on pure states for the trace, Bures, and Hellinger
distances. 
The same statement holds for $D^R$ and is obtained
by taking $\Ll_A$ to be 
the family of local unitary conjugations $\rho \mapsto U_A \otimes \idty \,\rho \, U_A^\dagger \otimes \idty $  
for all unitary operators $U_A$ with spectrum $\Lambda$.
Actually, one can deduce from Theorem 2.4 of Ref.~\cite{Piani2014} that
$(D^M_{\tr})^{1/2}$, $D^M_\Bu$ and $D^M_\Hel$ satisfy the 
strong version of Axiom (iv), and similarly for $(D^R_{\tr})^{1/2}$,
$D^R_\Bu$, and $D^R_\Hel$, in agreement with our observation above.

Finally, the trace geometric discord $D^G_{\tr}$ obeys Axiom
(iv) at least when subsystem $A$ is a qubit,  because in this case one
has $D^G_{\tr} = D^M_{\tr}=D_{\tr}^R $, see Eq.~(\ref{eq-comparison_geomeas_trace_HS}).

In summary,  for the trace,
Bures, and Hellinger distances, $D^G$, $D^M$, and $D^R$
are all {\em bona fide} measures of quantum correlations whatever the dimensionality $n_A$ of
subsystem $A$ (except perhaps for $D_{\tr}^G$, for which  we are not
aware of 
a proof of Axiom (iv) in the case $n_A>2$).
In contrast, if one chooses the Hilbert-Schmidt
distance, then $D^G$,  $D^M$, and $D^R$ are not contractive under
local CPTP maps acting on subsystem $B$ and are thus not reliable measures of
quantum correlations. This comes from the lack of monotonicity of the Hilbert-Schmidt
distance already mentioned several times in this paper. Indeed, an explicit counter-example
showing that $D_{\HS}^G=D_{\HS}^M$ does not satisfy Axiom~(iii)
has been constructed in Ref.~\cite{Piani2012}.
It is enough to consider the quantum operation which consists
in tracing out over an ancilla $C$, \ie, the
CPTP map $\Phi_B (\rho )=\tr_C ( \rho )$. Here, the subsystem $B$ consists of
two parts $B'$ and $C$ (that is, $C$ is included in $B$). If the total
system $AB$  is in state $\rho = \rho_{AB'} \otimes \rho_C$ with no correlations between $AB'$ and $C$,
adding or removing the ancilla $C$ cannot affect the
quantum correlations between $A$ and $B$.
However, due to the property $d_\HS ( \rho_{AB'} \otimes \rho_C , \sigma_{AB'}
\otimes \rho_C) =  d_\HS ( \rho_{AB'} , \sigma_{AB'} ) \| \rho_C\|_\HS$, one finds
\begin{equation}
D_\HS^M ( \rho )  = D_\HS^M ( \Phi_B( \rho ) )
\tr (\rho_C^2) \; .
\end{equation}
If the ancilla $C$ is not in a pure state then $\tr (\rho_C^2)<1$ and thus
$D_\HS^M (\Phi_B ( \rho )) > D_\HS^M ( \rho )$. By the same argument, $D_\HS^R (\Phi_B ( \rho ))  > D_\HS^R ( \rho )$.
Therefore, $D_\HS^G=D_\HS^M$ and $D_\HS^R$ are not {\it bona fide} measures of quantum correlations.

\subsection{Bures geometric discord and quantum state discrimination} \label{sec-Bures_geometric_disc}

We describe in this subsection the operational interpretation of the Bures geometric
discord $D_\Bu^G$ and introduce the \CCQ states of a given state
$\rho$ for the Bures distance in terms of a quantum
state discrimination problem. We briefly review the
results of Ref.~\cite{Spehner2013}, which 
justify Eq.~(\ref{eq-formula_Bures_geo_discord}) and will be used several times in what follows.

Let us first introduce the maximum probability of success in
discriminating the states $\rho_i$ with prior probabilities $\eta_i$
by means of projective measurements  on
$AB$ of rank $n_B$, which is given by
\begin{equation} \label{eq-max_proba_vN}
P_{\rm S}^{\,\rm{opt\,v.N.}} ( \{ \rho_i,\eta_i \})
 =  \max_{ \{ \Pi_{i} \} } \sum_{i=1}^{n_A} \eta_i \tr ( \Pi_{i} \rho_i) \; ,
\end{equation}
the maximum being taken over all families $\{ \Pi_{i} \}$ of
orthogonal  projectors on $\Hh_A \otimes \Hh_B$ with rank $n_B$. Consider now that the Bures geometric discord
is expressed in terms of the maximum fidelity $F (\rho , \classQ  )$ between $\rho$
and a classical-quantum state
by (see Eqs.~(\ref{eq-Burea_dist}) and~(\ref{def:geomdisc1})):
\begin{equation} \label{eq-Bures_geo_disc_as_max_fidelity}
D^G_\Bu ( \rho) = 2 -2 \sqrt{F (\rho ,\classQ )} \quad , \quad  
F (\rho , \classQ  )\equiv   \max_{\sigma_\Aclas \in \classQ } F (\rho, \sigma_\Aclas )\; .
\end{equation}
The fidelity $F (\rho ,\classQ )$ is equal to~\cite{Spehner2013,Spehner_review}:
\begin{equation} \label{eq-variationnal_formula_bis}
F (\rho , \classQ  ) =
\max_{\{ \ket{\alpha_i} \} } P_{\rm S}^{\,\rm{opt\,v.N.}} ( \{ \rho_i,\eta_i \}) \; ,
\end{equation}
where
the states $\rho_i$ to be discriminated and their probabilities
$\eta_i$ depend on the \ONB $\{ \ket{\alpha_i}\}$ for $A$ and are given by
Eq.~(\ref{eq-state_Q_discrimination}), and the maximum is taken over all
such bases.
  If the density matrix $\rho$ is invertible, 
the optimal measurement 
in ambiguous quantum state discrimination to discriminate the $\rho_i$'s 
is a von Neumann measurement with projectors of rank $n_B$. Hence 
$F (\rho , \classQ  )$ is also equal to the maximum over all
orthonormal bases $\{ \ket{\alpha_i}\}$ of the optimal success
probability $P^{\,\rm{opt}}_{\rm S}  ( \{ \rho_i,\eta_i \})$ 
to  discriminate the states $\rho_i$ using arbitrary POVMs.

For the Bures distance, the classical-quantum states $\sigma_{\Bu, \rho}$ closest to $\rho$ are
expressed in terms of the optimal basis vectors $\ket{\alpha_i^{\opt}}$ and
optimal orthogonal projectors $\Pi_i^{\rm{opt}}$ maximizing
the right-hand sides of Eqs.~(\ref{eq-variationnal_formula_bis}) and~(\ref{eq-max_proba_vN}):
\begin{equation} \label{eq-again_I_was_stupid}
\sigma_{\Bu, \rho} = \frac{1}{F (\rho , \classQ )}
\sum_{i=1}^{n_A} \ketbra{\alpha_i^{\opt}}{\alpha_i^{\opt}}
\otimes \bra{\alpha_i^{\opt}} \sqrt{\rho}\, \Pi_i^{\opt} \sqrt{\rho} \ket{\alpha_i^{\opt}}  \; .
\end{equation}

When $A$ is a qubit, one has to discriminate $n_A=2$ states. It is then
a simple task to evaluate the maximum success probability. The latter is given by the
celebrated Helstrom  formula~\cite{Helstrom1976}
(see also Ref.~\cite{Spehner_review}, Corollary 11.B.6):
\begin{equation} \label{eq-Helstom_formula}
 P_{\rm S}^{\,\rm{opt\,v.N.}} ( \{ \rho_i,\eta_i \})  = \frac{1}{2} \Bigl( 1 + \| \eta_2 \rho_2 - \eta_1 \rho_1 \|_{\tr} \Bigr) \; .
\end{equation}

With the help of Eq.~(\ref{eq-variationnal_formula_bis}), one can
show that $D_\Bu^G$ 
satisfies the last Axiom (v) of
Section~\ref{sec-intro}, in addition to the other Axioms
(i-iv)~\cite{Spehner2013,Spehner_review}. When $n_A \leq n_B$, the
maximum value  of $D_\Bu^G$ is
$D_\mmax = 2 - 2/\sqrt{n_A}$, as reported in Table~\ref{tab1}.

The bounds on $D_\Bu^G$ reported in Tables~\ref{tab1} and~\ref{tab2}
will be derived in
Sections~\ref{sec-geometric_discord} and~\ref{sec-meas_ind_geo_disc} by combining
Eq.~(\ref{eq-variationnal_formula_bis}) with some known
bounds on the maximum success probability in quantum state
discrimination theory.

\section{Hellinger geometric discord: computable and {\em bona fide} measure of quantum correlations}
\label{sec-geometric_discord}

Geometric discords have been studied
in Refs.~\cite{Nakano2013,Paula2013,Ciccarello2014} for the trace distance and
in Refs.~\cite{Streltsov2011c,Abad2012,Aaronson2013,Spehner2013,Spehner2014}
for the Bures distance.
In contrast, to the best of our knowledge, the Hellinger geometric discord has not been
studied in previous works for finite-dimensional systems.

For two-mode Gaussian states $\rho$ of a continuous-variable system,
the Gaussian geometric discord, defined as the minimal Hellinger distance
between $\rho$ and a classical-quantum Gaussian state, has been investigated in
Ref.~\cite{Marian15}. However, since, quite remarkably, for Gaussian
states the classical-quantum states coincide with product states, this
Gaussian discord is
actually a measure of total (classical plus quantum) rather than
quantum correlations. Thus it only provides  an upper bound on the  
Hellinger geometric discord $D_\Hel^G (\rho)$ for Gaussian states
$\rho$ (in fact, the closest
classical-quantum state to $\rho$ is not necessarily Gaussian). A general study that analyzes and compares
geometric and entropic measures of quantum correlations for Gaussian
states is in preparation and  will appear in a forthcoming paper~\cite{Buono2015}.

In this section, we show that if the Hilbert spaces of $A$ and $B$ have finite dimensions,
for any pure state $\ket{\Psi}$,
$D_\Hel^G(\ket{\Psi})$ is a simple function of the
Schmidt number of $\ket{\Psi}$ and for any mixed state $\rho$, $D_\Hel^G (\rho)$ is simply related to the Hilbert-Schmidt geometric discord
$D_\HS^G$ of the square root of $\rho$. One deduces from the first
statement that $D_\Hel^G$ enjoys all the properties (i-iv) of {\em bona fide} measures of
quantum correlations,   in contrast to the Hilbert-Schmidt
geometric discord (see Sec.~\ref{sec-def_geo_discords_proper}). Given the computability of
$D_\HS^G$, it follows from the second statement
that $D_\Hel^G$ is also easy to compute, as confirmed by
the  closed formula for qubit-qudit states derived in
Sec.~\ref{sec-computability}.
We also determine the closest classical-quantum state of a given state $\rho$
with respect to the Hellinger distance.
Finally, we show that the  Hellinger geometric
discord provides upper and lower bounds on the Bures geometric
discord.

\subsection{General expressions and closest classical-quantum states} \label{sec-geo_disc_Hell_mixed_states}

Recall that the Schmidt number of a pure state $\ket{\Psi}$ of a
bipartite system $AB$ is defined as $K ( \ket{\Psi}) = ( \sum_i \mu_i^2 )^{-1}$, where $\mu_i$ are the eigenvalues
of the reduced state $[\rho_\Psi]_A=\tr_B \ketbra{\Psi}{\Psi}$, that is, the non-negative coefficients
appearing in the Schmidt decomposition
\begin{equation} \label{eq-Schmidt_decomposition}
\ket{\Psi} = \sum_{i=1}^{n} \sqrt{\mu_i} \ket{\varphi_i} \ket{\chi_i} \; .
\end{equation}
Here, $n =\min \{ n_A,n_B\}$ and $\{ \ket{\varphi_i} \}_{i=1}^{n_A}$
(respectively $\{ \ket{\chi_j} \}_{j=1}^{n_B}$) is
an \ONB for $A$ (for $B$).

\begin{theorem} \label{eq-theo_geo_disc_Hell_mixed_states}
\begin{itemize}
\item[(a)] If $\ket{\Psi}$ is a pure state of $AB$, then
\begin{equation}  \label{eq-Hellinger_geo_discord_for_pure_states}
D_\Hel^G ( \ket{\Psi}) = 2  - 2 K  ( \ket{\Psi} )^{-\onehalf} \; ,
\end{equation}
where $K ( \ket{\Psi})$ is the Schmidt number of $ \ket{\Psi}$.
Furthermore, the \CCQ state to $\ket{\Psi}$ for the Hellinger distance is the classical-classical state
\begin{equation} \label{eq-Hellinger_CCQ_state}
\sigma_{\Hel, \Psi} = K ( \ket{\Psi} ) \sum_{i=1}^n \mu_i^2 \ketbra{\varphi_i}{\varphi_i} \otimes \ketbra{\chi_i}{\chi_i} \; .
\end{equation}
\item[(b)] If $\rho$ is a mixed state of $AB$, then
\begin{equation} \label{eq-formula_Hellinger_geo_discordbis}
D_\Hel^G ( \rho) = 2  - 2  \max_{ \{ \ket{\alpha_i} \} }  \biggl\{ \sum_{i=1}^{n_A}  {\tr}_B [ \bra{\alpha_i} \sqrt{\rho} \ket{\alpha_i}^2 ]\biggr\}^\onehalf
\; ,
\end{equation}
where the maximum is over all \ONBs  $ \{ \ket{\alpha_i} \} $ for
$A$. Let this maximum be reached for the basis
$ \{ \ket{\alpha_i^\opt} \} $. Then the \CCQ state to $\rho$ for the
Hellinger distance is
\begin{equation} \label{eq-CCL_Hel}
\sigma_{\Hel , \rho}
  =
   \left( 1 - \frac{D_\Hel^G ( \rho)}{2} \right)^{-2} \sum_{i=1}^{n_A} \ketbra{\alpha_i^\opt}{\alpha_i^\opt} \otimes
    \bra{\alpha_i^\opt} \sqrt{\rho} \ket{\alpha_i^\opt}^2\;.
\end{equation}
\end{itemize}
\end{theorem}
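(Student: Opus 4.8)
The plan is to prove part~(b) first, since part~(a) follows from it by specialization. Starting from $D_\Hel^G(\rho)=\min_{\sigma_\Aclas\in\classQ}d_\Hel(\rho,\sigma_\Aclas)^2$ of Eq.~(\ref{def:geomdisc1}) and the closed form $d_\Hel(\rho,\sigma)^2=2-2\tr\sqrt{\rho}\sqrt{\sigma}$ from Eq.~(\ref{eq-Q_Hellinger_distance}), one has $D_\Hel^G(\rho)=2-2\max_{\sigma_\Aclas\in\classQ}\tr\sqrt{\rho}\sqrt{\sigma_\Aclas}$, so the whole content of part~(b) is the evaluation of this maximum. Writing a classical-quantum state as $\sigma_\Aclas=\sum_i q_i\ketbra{\alpha_i}{\alpha_i}\otimes\rho_{B|i}$ as in Eq.~(\ref{eq:cq}) and using that this operator is block diagonal in the orthonormal basis $\{\ket{\alpha_i}\}$ of $\Hh_A$, one gets $\sqrt{\sigma_\Aclas}=\sum_i\sqrt{q_i}\ketbra{\alpha_i}{\alpha_i}\otimes\sqrt{\rho_{B|i}}$, hence $\tr\sqrt{\rho}\sqrt{\sigma_\Aclas}=\sum_i\sqrt{q_i}\,\tr_B[M_i\sqrt{\rho_{B|i}}]$ with the positive operators $M_i\equiv\bra{\alpha_i}\sqrt{\rho}\ket{\alpha_i}$ on $\Hh_B$. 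The task becomes the maximization of $\sum_i\sqrt{q_i}\,\tr_B[M_i\sqrt{\rho_{B|i}}]$ over all orthonormal bases $\{\ket{\alpha_i}\}$ of $\Hh_A$, all probability vectors $\{q_i\}$, and all states $\{\rho_{B|i}\}$ of $B$.

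I would carry out this maximization in two Cauchy--Schwarz steps. For a fixed basis and fixed $\{q_i\}$, each $\rho_{B|i}$ enters only one term, so writing $\tau_i=\sqrt{\rho_{B|i}}$ (positive, with $\|\tau_i\|_\HS=1$) and using Cauchy--Schwarz in the Hilbert--Schmidt inner product gives $\tr_B[M_i\tau_i]\le\|M_i\|_\HS$, with equality at $\tau_i=M_i/\|M_i\|_\HS$; the corresponding $\rho_{B|i}=M_i^2/\tr_B[M_i^2]$ is automatically a legitimate state because $M_i\ge0$. It then remains to maximize $\sum_i\sqrt{q_i}\,\|M_i\|_\HS$ over the probability vector, and a second Cauchy--Schwarz gives $\big(\sum_i\|M_i\|_\HS^2\big)^{1/2}$, attained at $q_i=\|M_i\|_\HS^2/\sum_j\|M_j\|_\HS^2$. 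Since the set of orthonormal bases of $\Hh_A$ is compact and the objective is continuous, the outer maximum over $\{\ket{\alpha_i}\}$ is attained, and, recalling $\|M_i\|_\HS^2=\tr_B[\bra{\alpha_i}\sqrt{\rho}\ket{\alpha_i}^2]$, one obtains precisely Eq.~(\ref{eq-formula_Hellinger_geo_discordbis}). Substituting the two optimizers into $\sigma_\Aclas$ and using $\sum_i\|M_i\|_\HS^2=(1-D_\Hel^G(\rho)/2)^2$, which follows from the formula just derived, yields the closest classical-quantum state~(\ref{eq-CCL_Hel}). This completes part~(b).

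For part~(a) I would specialize to $\rho=\ketbra{\Psi}{\Psi}$, so $\sqrt{\rho}=\rho$, and use the Schmidt decomposition~(\ref{eq-Schmidt_decomposition}). For any basis $\{\ket{\alpha_i}\}$ the vector $\ket{v_i}\equiv(\bra{\alpha_i}\otimes\idty)\ket{\Psi}\in\Hh_B$ obeys $M_i=\ketbra{v_i}{v_i}$ and $\tr_B[M_i^2]=\|v_i\|^4$, with $\|v_i\|^2=\sum_k p_{ik}\mu_k$ where $p_{ik}=|\braket{\alpha_i}{\varphi_k}|^2$ defines a doubly stochastic matrix ($\sum_i p_{ik}=\sum_k p_{ik}=1$). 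Hence $\sum_i\tr_B[\bra{\alpha_i}\sqrt{\rho}\ket{\alpha_i}^2]=\sum_i\big(\sum_k p_{ik}\mu_k\big)^2\le\sum_i\sum_k p_{ik}\mu_k^2=\sum_k\mu_k^2=K(\ket{\Psi})^{-1}$, the inequality being Jensen's applied rowwise and the last equality using $\sum_i p_{ik}=1$; equality holds for the Schmidt basis $\{\ket{\alpha_i}\}=\{\ket{\varphi_i}\}$. This gives $D_\Hel^G(\ket{\Psi})=2-2K(\ket{\Psi})^{-1/2}$, that is, Eq.~(\ref{eq-Hellinger_geo_discord_for_pure_states}). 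Finally, inserting the Schmidt basis — for which $M_i=\mu_i\ketbra{\chi_i}{\chi_i}$ — into~(\ref{eq-CCL_Hel}) reproduces the classical-classical state~(\ref{eq-Hellinger_CCQ_state}).

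I do not anticipate a genuine obstacle: the argument reduces to two applications of the Cauchy--Schwarz inequality and one of Jensen's inequality. The only points needing care are (i) verifying that the operators produced by the optimality conditions are genuine states and probability vectors, which is automatic from $M_i\ge0$; (ii) the validity of the block-diagonal square-root identity, which holds precisely because $\{\ket{\alpha_i}\}$ is an orthonormal basis of the whole space $\Hh_A$; and (iii) the non-uniqueness of the optimal basis — hence of the closest classical-quantum state — when $\rho$ is spectrally degenerate or the Schmidt coefficients of $\ket{\Psi}$ coincide, in which case the displayed state is one closest state among possibly several (the $i$-th term of~(\ref{eq-CCL_Hel}) vanishing harmlessly whenever $\bra{\alpha_i^\opt}\sqrt{\rho}\ket{\alpha_i^\opt}=0$).
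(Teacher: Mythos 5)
Your proof is correct and follows essentially the same route as the paper's: both reduce the problem to maximizing $\tr \sqrt{\rho}\sqrt{\sigma_\Aclas}$ via the block structure of $\sqrt{\sigma_\Aclas}$, the paper doing the optimization through a single Cauchy--Schwarz over the joint index $(i,j)$ of a fully spectrally decomposed $\sigma_\Aclas$ plus the bound $\sum_j \bra{\beta_{j|i}} B_i \ket{\beta_{j|i}}^2 \leq \tr B_i^2$, while you split it into a Hilbert--Schmidt Cauchy--Schwarz over $\rho_{B|i}$ followed by one over $\{q_i\}$ --- the optimizers and the resulting formulas coincide. Your pure-state step (Jensen with a doubly stochastic matrix) is just an explicit proof of the same inequality $\sum_i \bra{\alpha_i}\rho_A\ket{\alpha_i}^2 \leq \tr \rho_A^2$ that the paper invokes.
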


Since $K ( \ket{\Psi})$ is an entanglement monotone,
one infers from Eq.~(\ref{eq-Hellinger_geo_discord_for_pure_states})
that $D_\Hel^G$  satisfies Axiom  (iv) of Section~\ref{sec-intro}.
Moreover, it also satisfies the remaining Axioms (i-iii) because the Hellinger distance is contractive under
CPTP maps  (see
Sec.~\ref{sec-def_geo_discords_proper}). Therefore one has:

\begin{corollary}
The Hellinger geometric discord $D_\Hel^G$ is a \emph{bona fide} measure of
quantum correlations.
Its maximum value is given by $D_\mmax = 2 -2 /\sqrt{n_A}$ when $n_A
\leq n_B$.
\end{corollary}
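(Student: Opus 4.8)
The plan hinges on the observation that taking square roots makes the block structure of classical-quantum states explicit, which reduces both parts of the theorem — and then the Corollary — to a single optimization over orthonormal bases. Every $\sigma_\Aclas\in\classQ$ is block-diagonal for the orthogonal decomposition $\Hh_A\otimes\Hh_B=\bigoplus_{i=1}^{n_A}(\Cx\ket{\alpha_i}\otimes\Hh_B)$, so its square root has the form $\sqrt{\sigma_\Aclas}=\sum_i\ketbra{\alpha_i}{\alpha_i}\otimes Y_i$ with $Y_i\geq 0$; conversely, every operator of this form with $\sum_i\tr(Y_i^2)=1$ is the square root of a unique classical-quantum state (take $q_i=\tr(Y_i^2)$ and $\rho_{B|i}=Y_i^2/q_i$). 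Using $d_\Hel(\rho,\sigma)^2=2-2\tr(\sqrt\rho\sqrt\sigma)$ and writing $R_i\equiv\bra{\alpha_i}\sqrt\rho\ket{\alpha_i}\geq 0$ on $\Hh_B$, the definition~(\ref{def:geomdisc1}) becomes
\[
D_\Hel^G(\rho)=2-2\max_{\{\ket{\alpha_i}\}}\;\max_{\{Y_i\}}\;\sum_{i=1}^{n_A}{\tr}_B(R_iY_i)\,,
\]
where the inner maximum runs over tuples of positive operators $Y_i$ on $\Hh_B$ subject to $\sum_i\tr(Y_i^2)=1$.

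For a fixed basis I would carry out the inner optimization by a two-step Cauchy--Schwarz argument in the Hilbert--Schmidt inner product: $\sum_i{\tr}_B(R_iY_i)\leq\sum_i\|R_i\|_\HS\|Y_i\|_\HS\leq\big(\sum_i\|R_i\|_\HS^2\big)^{1/2}\big(\sum_i\|Y_i\|_\HS^2\big)^{1/2}=\big(\sum_i\|R_i\|_\HS^2\big)^{1/2}$, with equality precisely when $Y_i=\lambda R_i$ for the common constant $\lambda=(\sum_i\|R_i\|_\HS^2)^{-1/2}$ (the first step forces $Y_i\parallel R_i$, the second handles the normalization). A point worth flagging is that this optimizer is automatically positive semidefinite because each $R_i\geq 0$, so the cone constraint never binds and no Lagrange multipliers are needed; one checks $\sum_i\|R_i\|_\HS^2\geq(\tr\sqrt\rho)^2/(n_An_B)>0$, so $\lambda$ is well defined. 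Since $\|R_i\|_\HS^2={\tr}_B[\bra{\alpha_i}\sqrt\rho\ket{\alpha_i}^2]$, this yields~(\ref{eq-formula_Hellinger_geo_discordbis}). The set $\classQ$ is compact (a continuous image of the compact product of orthonormal bases, the probability simplex, and states of $B$), so the outer maximum is attained at some $\{\ket{\alpha_i^\opt}\}$; inserting $Y_i=\lambda R_i^\opt$ into $\sqrt\sigma$ and squaring produces the closest state~(\ref{eq-CCL_Hel}), whose prefactor is identified with $(1-D_\Hel^G(\rho)/2)^{-2}=(\sum_i\|R_i^\opt\|_\HS^2)^{-1}$.

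For part~(a) I would specialize to $\rho=\ketbra{\Psi}{\Psi}$, where $\sqrt\rho=\rho$. With the Schmidt decomposition $\ket{\Psi}=\sum_k\sqrt{\mu_k}\ket{\varphi_k}\ket{\chi_k}$ one gets $\bra{\alpha_i}\rho\ket{\alpha_i}=\ketbra{\psi_i}{\psi_i}$ with $\ket{\psi_i}=\sum_k\sqrt{\mu_k}\,\braket{\alpha_i}{\varphi_k}\,\ket{\chi_k}$, hence ${\tr}_B[\bra{\alpha_i}\rho\ket{\alpha_i}^2]=\|\psi_i\|^4=\big(\sum_k\mu_k p_{ik}\big)^2$ where $p_{ik}=|\braket{\alpha_i}{\varphi_k}|^2$ is doubly stochastic (extend $\{\ket{\varphi_k}\}$ to a full \ONB of $\Hh_A$, with $\mu_k=0$ for $k>n$). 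Convexity of $t\mapsto t^2$ and Jensen's inequality then give $\sum_i\big(\sum_k\mu_k p_{ik}\big)^2\leq\sum_{i,k}p_{ik}\mu_k^2=\sum_k\mu_k^2=K(\ket{\Psi})^{-1}$, with equality when $\{\ket{\alpha_i}\}$ is the Schmidt basis $\{\ket{\varphi_k}\}$ (and, when the $\mu_k$ are nondegenerate, only then up to phases and relabeling). Substituting into~(\ref{eq-formula_Hellinger_geo_discordbis}) gives~(\ref{eq-Hellinger_geo_discord_for_pure_states}); taking $\ket{\alpha_i^\opt}=\ket{\varphi_i}$ in~(\ref{eq-CCL_Hel}) and using $(\bra{\varphi_i}\otimes\idty)\ket{\Psi}=\sqrt{\mu_i}\ket{\chi_i}$ produces the classical-classical state~(\ref{eq-Hellinger_CCQ_state}).

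The Corollary then follows quickly: since the Schmidt number $K(\ket{\Psi})$ is an entanglement monotone, part~(a) gives Axiom~(iv), while Axioms~(i)--(iii) hold because the Hellinger distance is contractive under CPTP maps (Section~\ref{sec-def_geo_discords_proper}), so $D_\Hel^G$ is a {\em bona fide} measure; and when $n_A\leq n_B$ the general principle recalled in the Introduction (Axioms~(iii) and~(iv) force the overall maximum to be attained at maximally entangled pure states) combined with part~(a) evaluated at such a state, which has Schmidt number $K=n_A$, gives $D_\mmax=2-2/\sqrt{n_A}$. I do not expect a serious analytic obstacle — Cauchy--Schwarz and Jensen do all the work. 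The step that demands the most care is the equality analysis in the double Cauchy--Schwarz: one must pin down the optimal $Y_i$ tightly enough to reconstruct the closest state~(\ref{eq-CCL_Hel}), not merely the value $D_\Hel^G(\rho)$, while keeping in mind that the optimal basis $\{\ket{\alpha_i^\opt}\}$ need not be unique when $\rho$ (respectively $\rho_A$) has degenerate spectrum, so strictly speaking one exhibits one closest classical-quantum state. The other point to check with some care is that the square-root parametrization of $\classQ$ described above is indeed exhaustive.
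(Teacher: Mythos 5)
Your proposal is correct and follows essentially the same route as the paper: both reduce the minimization over $\classQ$ to the block structure of $\sqrt{\sigma_\Aclas}$, apply Cauchy--Schwarz to obtain Eq.~(\ref{eq-formula_Hellinger_geo_discordbis}) together with the closest state~(\ref{eq-CCL_Hel}), specialize to pure states via the Schmidt decomposition to get $2-2K^{-1/2}$, and then invoke contractivity of $d_\Hel$ plus the monotonicity of the Schmidt number to conclude the Corollary, including the maximal value $2-2/\sqrt{n_A}$ attained on maximally entangled states. The only cosmetic difference is that you perform the inner optimization directly over the positive blocks $Y_i$ of $\sqrt{\sigma_\Aclas}$ in a single Hilbert--Schmidt Cauchy--Schwarz step, whereas the paper splits the same optimization into the probability weights $q_{ij}$ and the eigenbases $\{\ket{\beta_{j|i}}\}$; the two computations are equivalent and identify the same optimizer.
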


The last statement follows from
Eq.~(\ref{eq-Hellinger_geo_discord_for_pure_states}) and the fact that
if $n_A \leq n_B$ then
$D_\Hel^G$ is maximum for maximally entangled pure states (see the Introduction).  
 
It is  enlightening  to compare the results of Theorem~\ref{eq-theo_geo_disc_Hell_mixed_states} to
the corresponding results
for the Bures distance. For pure states, one has (see
Table~\ref{tab1} and Ref.~\cite{Spehner2013})
\begin{equation} \label{eq-geo_disc_pure_state}
D_\Bu^G ( \ket{\Psi}) = 2 ( 1 - \sqrt{\mu_{\rm max} } )
\end{equation}
and
\begin{equation} \label{eq-closest_A_clas_state_Bures}
\sigma_{\Bu, \Psi} = \ketbra{\varphi_{{\rm max}}}{\varphi_{{\rm max}} } \otimes \ketbra{\chi_{{\rm max}}}{ \chi_{{\rm max}}} \; ,
\end{equation}
where $\mu_{\rm max}= \max \{ \mu_i\} $ is the largest Schmidt
coefficient of $\ket{\Psi}$ and
$\ket{\varphi_{{\rm max}}}$, $\ket{ \chi_{{\rm max}}}$ are the Schmidt vectors corresponding to $\mu_{\rm max}$
in Eq.~(\ref{eq-Schmidt_decomposition}).
If the largest Schmidt coefficient is degenerate, $\ket{\Psi}$ admits
an infinite family of \CCQ states  for the Bures distance, formed by
convex combinations of the states given in Eq.~(\ref{eq-closest_A_clas_state_Bures}). Remarkably, these states are also the closest separable states
to $\ket{\Psi}$.
This means that $D^G_\Bu$ coincides for pure states with the {\it Bures geometric measure of entanglement}
$E^G_\Bu  (\rho) \equiv \min_{\sigma_\sep} d_\Bu ( \rho, \sigma_\sep )^2$~\cite{Vedral98},
where the minimum is over all separable states $\sigma_\sep$. It has
been shown in Ref.~\cite{Streltsov10} that the latter entanglement measure
is simply related to the convex-roof extension $E^W(\rho)$ to mixed
states $\rho$ of  the
Wei-Goldbart geometric entanglement   by
$E^G_\Bu  (\rho)=  2 - 2 (1- E^W(\rho))^{1/2}$.
As already remarked in Sec.~\ref{sec-def_geo_discords_proper}, this
implies that the Bures geometric discord is a {\em bona fide} measure
of quantum correlations,  just like the Hellinger geometric discord.

We now proceed to establish Theorem~\ref{eq-theo_geo_disc_Hell_mixed_states}.

\vspace{1mm}

\proof
Let us  first prove part (b) of the theorem.
From Eqs.~(\ref{eq-Q_Hellinger_distance}) and~(\ref{def:geomdisc1}) it follows that
\begin{equation}
D_\Hel^G (\rho) = 2  - 2 \max_{\sigma_{\Aclas} \in \classQ}  \tr \sqrt{\rho} \sqrt{\sigma_{\Aclas}}  \; .
\end{equation}
By using the spectral decompositions of the states $\rho_{B|i}$ in Eq.~(\ref{eq:cq}), any classical-quantum state can be written as
\begin{equation} \label{eq-A_class_state}
\sigma_\Aclas
=
\sum_{i=1}^{n_A} \sum_{j=1}^{n_B} q_{ij} \ketbra{\alpha_i}{\alpha_i} \otimes \ketbra{\beta_{j|i}}{\beta_{j|i}} \; ,
\end{equation}
where  $\{ q_{ij} \}$ is a probability distribution, $\{ \ket{\alpha_i} \}_{i=1}^{n_A}$ is an orthonormal basis for $A$
and, for any $i$, $\{ \ket{\beta_{j|i}} \}_{j=1}^{n_B}$ is an orthonormal basis for $B$
(note that the $\ket{\beta_{j|i}}$ need not be orthogonal for distinct $i$'s).
The square root of $\sigma_\Aclas $ is obtained by replacing $q_{ij}$ by $\sqrt{q_{ij}}$ in the r.h.s. of Eq.~(\ref{eq-A_class_state}).
Hence
\begin{equation} \label{eq-Hellinger_fidelity_to classical_state}
\tr \sqrt{\rho} \sqrt{\sigma_\Aclas}
 =  \sum_{i,j} \sqrt{q_{ij}} \bra{\alpha_i \otimes \beta_{j|i}} \sqrt{\rho} \ket{\alpha_i \otimes \beta_{j|i}}
\leq
\biggl( \sum_{i,j} \bra{\alpha_i \otimes \beta_{j|i}} \sqrt{\rho } \ket{\alpha_i \otimes \beta_{j|i}}^2 \biggr)^{\onehalf} \; .
\end{equation}
The last bound follows from the Cauchy-Schwarz inequality and the identity $\sum_{i,j} q_{ij} =1$. It is saturated when
\begin{equation} \label{eq-optimal_q_ij}
q_{ij} = \frac{\bra{\alpha_i \otimes \beta_{j|i}} \sqrt{\rho } \ket{\alpha_i \otimes \beta_{j|i}}^2}
{\sum_{i,j} \bra{\alpha_i \otimes \beta_{j|i}} \sqrt{\rho } \ket{\alpha_i \otimes \beta_{j|i}}^2}
\; .
\end{equation}
Therefore
\begin{equation} \label{eq-Hellinger_fidelity_to classical_state2}
\max_{\{ q_{ij}\}} \tr \sqrt{\rho} \sqrt{\sigma_\Aclas }
=
\biggl( \sum_{i,j} \bra{\beta_{j|i}} B_i \ket{\beta_{j|i}}^2 \biggr)^{\onehalf}
\end{equation}
with $B_i = \bra{\alpha_i} \sqrt{\rho } \ket{\alpha_i}$. Note that
$B_i$ is a self-adjoint operator acting on $\Hh_B$.
Now, for any fixed $i$, one has
\begin{equation} \label{eq-ineq_sum_matrix_element_square}
\sum_j  \bra{\beta_{j|i}}  B_i \ket{\beta_{j|i}}^2 \leq \tr [ B_i^2 ] \; .
\end{equation}
This inequality is saturated when $\{ \ket{\beta_{j|i}} \}$ is an eigenbasis of $B_i$.
Maximizing over all classical-quantum states amounts to maximize over all
probability distributions $\{ q_{ij}\}$ and all \ONBs $\{ \ket{\alpha_i} \}$ and $\{ \ket{\beta_{j|i}} \}$. Thus
\begin{equation} \label{eq-formula_Hellinger_geo_discordbis_classicalquantum}
\left( 1 - \frac{D_\Hel^G (\rho)}{2} \right)^2
 =
\max_{ \{ \ket{\alpha_i} \} } \sum_{i} {\rm Tr}_B \big[ \bra{\alpha_i} \sqrt{\rho } \ket{\alpha_i}^2  \big]
\; .
\end{equation}
Eq.~(\ref{eq-formula_Hellinger_geo_discordbis}) in Theorem~\ref{eq-theo_geo_disc_Hell_mixed_states}
follows immediately from this relation.
The \CCQ state $\sigma_{\Hel ,\rho}$ to $\rho$ is given by Eq.~(\ref{eq-A_class_state})
where $\ket{\alpha_i} = \ket{\alpha_i^\opt}$ are the vectors realizing the maximum
in Eq.~(\ref{eq-formula_Hellinger_geo_discordbis_classicalquantum}),
$\ket{\beta_{j|i}} = \ket{\beta_{j|i}^\opt}$ are the eigenvectors
of $B_i^\opt = \bra{\alpha_i^\opt} \sqrt{\rho} \ket{\alpha_i^\opt}$, and (see Eq.~(\ref{eq-optimal_q_ij})):
\begin{equation} \label{eq-optimal_q_ij_bis}
q_{ij} = \frac{\bra{\beta_{j|i}^\opt} (B_i^\opt)^2 \ket{\beta_{j|i}^\opt}}{\sum_i \tr [ (B_i^\opt)^2]} \; .
\end{equation}
The expression  for $\sigma_{\Hel ,\rho}$  in
Theorem~\ref{eq-theo_geo_disc_Hell_mixed_states} readily follows.

We now establish part (a) of the theorem. Let $\rho = \ketbra{\Psi}{\Psi}$ be a pure state with reduced state
$\rho_A = \tr_B \ketbra{\Psi}{\Psi}$. Then $B_i =
\ketbra{\beta_i}{\beta_i}$, where $\ket{\beta_i} =
\braket{\alpha_i}{\Psi}$ has square norm $\| \beta_i\|^2 =
\bra{\alpha_i } \rho_A \ket{\alpha_i}$. Thus Eq.~(\ref{eq-formula_Hellinger_geo_discordbis_classicalquantum}) yields
\begin{equation} \label{eq-formula_Hellinger_geo_discord_oure_state}
\left( 1 - \frac{D_\Hel^G ( \ket{\Psi} )}{2} \right)^2
=
\max_{ \{ \ket{\alpha_i } \} } \sum_{i} \bra{\alpha_i} \rho_A \ket{\alpha_i}^2 \; .
\end{equation}
In analogy with Eq.~(\ref{eq-ineq_sum_matrix_element_square}), the sum in the r.h.s. is bounded from above by
$\tr \rho_A^2 = K( \ket{\Psi})^{-1}$, and the bound is saturated
when $\{ \ket{\alpha_i} \}$ is an  eigenbasis of $\rho_A$. This leads to Eq.~(\ref{eq-Hellinger_geo_discord_for_pure_states}). The \CCQ
state to $\ket{\Psi}$ is given by Eq.~(\ref{eq-CCL_Hel}) with
$\ket{\alpha_i} = \ket{\varphi_i}$. In view of the Schmidt
decomposition~(\ref{eq-Schmidt_decomposition}), one obtains Eq.~(\ref{eq-Hellinger_CCQ_state}).
\finpro

\subsection{Relation to the Hilbert-Schmidt geometric discord} \label{sec-rel_geo_dic_HS_Hel}

The Hilbert-Schmidt geometric discord  can be determined in a similar way as
the Hellinger geometric discord. Let us give here for completeness a self-contained
short derivation of the result, originally derived in Ref.~\cite{Luo2010}.

By definition, the Hilbert-Schmidt geometric discord is
\begin{equation}
D_\HS^G  ( \rho )
= \min_{\sigma_\Aclas \in {\mathcal CQ}} \| \rho - \sigma_\Aclas \|_\HS^2
=  \tr \rho^2 + \min_{\sigma_\Aclas \in {\mathcal CQ}} \tr ( \sigma_\Aclas^2- 2 \rho \sigma_\Aclas ) \; .
\end{equation}
Thanks to Eq.~(\ref{eq-A_class_state}), the last trace is equal to
\begin{equation}
\sum_{i,j}
  \Bigl\{ \bigl( q_{ij} - \bra{\alpha_{i} \otimes \beta_{j|i}} \rho \ket{\alpha_{i} \otimes \beta_{j|i}} \bigr)^2
    - \bra{\alpha_{i} \otimes \beta_{j|i}} \rho \ket{\alpha_{i} \otimes \beta_{j|i}}^2  \Bigr\} \; .
\end{equation}
The minimum over the  probability distribution $\{ q_{ij}\}$ is
obviously achieved  for $ q_{ij} = \bra{\alpha_{i} \otimes
  \beta_{j|i}} \rho \ket{\alpha_{i} \otimes \beta_{j|i}}$.
Minimizing also over the \ONBs $\{ \ket{\alpha_i} \}$ and
$\{ \ket{\beta_{j|i}}\}$ and using
Eq.~(\ref{eq-ineq_sum_matrix_element_square}) again, one finds
\begin{equation}  \label{eq-HS_geo_discord}
D_\HS^G  ( \rho )
 = \tr  \rho^2 - \max_{ \{ \ket{\alpha_i} \}}  \sum_{i=1}^{n_A} {\rm Tr}_B \bra{\alpha_i} \rho \ket{\alpha_i}^2
= \min_{ \{ \ket{\alpha_i} \}} \sum_{i \not= j}^{n_A} {\rm Tr}_B  | \bra{\alpha_i} \rho \ket{\alpha_j} |^2 \; ,
\end{equation}
which is the expression originally found by Luo and Fu~\cite{Luo2010}.
The last equality follows from the relation
$\tr  \rho^2 = \sum_{i,j} \tr_B  | \bra{\alpha_i} \rho \ket{\alpha_j} |^2 $.

By the same argument as above, the closest classical-quantum
state  $\sigma_{\HS, \rho}$ to $\rho$ according to the Hilbert-Schmidt distance coincides
with the  post-measurement state
after a local  measurement on $A$,  namely
\begin{equation} \label{eq-closest_states_meas_ind_geo_disc_HS}
\sigma_{\HS, \rho} = \rho_{\rm p.m.}^\opt
= \sum_{i=1}^{n_A} \ketbra{\alpha_i^\opt }{\alpha_i^\opt }
\otimes \bra{\alpha_i^\opt} \rho \ket{\alpha_i^\opt} \; ,
\end{equation}
where the measurement basis  $\{ \ket{\alpha_i^\opt} \}$ is the \ONB
of $A$ maximizing  the first sum in Eq.~(\ref{eq-HS_geo_discord}). Therefore, as already observed in
Ref.~\cite{Luo2010}, the Hilbert-Schmidt geometric discord $D^G_\HS$
and the measurement-induced geometric discord
$D^M_\HS$ coincide, whatever the space dimensions of the two subsystems.

Next, by comparing Eqs.~(\ref{eq-HS_geo_discord}) and~(\ref{eq-formula_Hellinger_geo_discordbis}), one easily deduces the
following  result:

\begin{theorem} \label{theo-rel_geo_disc_Hel_HS}
For any bipartite state $\rho$ of a composite system $AB$, the
Hellinger geometric discord $D_\Hel^G ( \rho)$ is related to
the Hilbert-Schmidt geometric discord $D_\HS^G ( \sqrt{\rho} )$
of the square root of $\rho$ by
\begin{equation} \label{eq-rel_geo_disc_Hel_HS}
D_\Hel^G ( \rho ) = g^{-1} \big( 2  D_\HS^G ( \sqrt{\rho} ) \big) \equiv 2  - 2 \bigl( 1 - D_\HS^G ( \sqrt{\rho} ) \bigr)^{\onehalf} \; .
\end{equation}
\end{theorem}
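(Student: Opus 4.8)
The plan is to deduce the identity by directly comparing two variational formulas that are already available: the closed form (\ref{eq-HS_geo_discord}) of the Hilbert--Schmidt geometric discord, and the relation (\ref{eq-formula_Hellinger_geo_discordbis_classicalquantum}) obtained in the proof of Theorem~\ref{eq-theo_geo_disc_Hell_mixed_states}. Both reduce the underlying minimization to a maximization over orthonormal bases $\{ \ket{\alpha_i} \}$ of $\Hh_A$ of the \emph{same} functional, namely $\sum_i \tr_B \bra{\alpha_i} X \ket{\alpha_i}^2$, evaluated at $X=\rho$ in the Hilbert--Schmidt case and at $X=\sqrt{\rho}$ in the Hellinger case. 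So the whole argument amounts to performing the substitution $X \mapsto \sqrt{\rho}$ in one formula and reading off the other.

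First I would note that the short derivation leading to (\ref{eq-HS_geo_discord}), i.e.\ to
\[
D_\HS^G(X) = \tr X^2 - \max_{\{ \ket{\alpha_i} \}} \sum_{i=1}^{n_A} \tr_B \bra{\alpha_i} X \ket{\alpha_i}^2 ,
\]
uses only that $X$ is a positive operator: positivity guarantees that the optimal block coefficients $q_{ij} = \bra{\alpha_i \otimes \beta_{j|i}} X \ket{\alpha_i \otimes \beta_{j|i}}$ and the blocks $\bra{\alpha_i} X \ket{\alpha_i}$ are themselves positive, so that the associated operator has the admissible $A$-classical form; the normalization $\tr X = 1$ is never used in the optimization, it only fixes the numerical value of $\tr X^2$. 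Hence the formula applies verbatim with $X = \sqrt{\rho}$, the unique positive square root of $\rho$, and since $\tr (\sqrt{\rho})^2 = \tr \rho = 1$ one obtains
\[
D_\HS^G(\sqrt{\rho}) = 1 - \max_{\{ \ket{\alpha_i} \}} \sum_{i=1}^{n_A} \tr_B \bra{\alpha_i} \sqrt{\rho} \ket{\alpha_i}^2 .
\]

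Next I would invoke (\ref{eq-formula_Hellinger_geo_discordbis_classicalquantum}), which identifies exactly this maximum with $\bigl( 1 - D_\Hel^G(\rho)/2 \bigr)^2$. Substituting, $D_\HS^G(\sqrt{\rho}) = 1 - \bigl( 1 - D_\Hel^G(\rho)/2 \bigr)^2$, equivalently $\bigl( 1 - D_\Hel^G(\rho)/2 \bigr)^2 = 1 - D_\HS^G(\sqrt{\rho})$. Because $D_\Hel^G(\rho) \le 2$ (the Hellinger distance squared being at most $2$), the factor $1 - D_\Hel^G(\rho)/2$ is nonnegative, so taking the nonnegative square root yields $1 - D_\Hel^G(\rho)/2 = \sqrt{1 - D_\HS^G(\sqrt{\rho})}$, i.e.\ $D_\Hel^G(\rho) = 2 - 2\sqrt{1 - D_\HS^G(\sqrt{\rho})}$. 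Finally, recalling from (\ref{eq-def_function_g_and_f}) that $g(d) = 2d - d^2/2$ and $g^{-1}(d) = 2 - 2\sqrt{1-d/2}$, the right-hand side is precisely $g^{-1}\bigl( 2 D_\HS^G(\sqrt{\rho}) \bigr)$, which is the asserted relation.

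The only delicate point, and it is really a matter of bookkeeping rather than a genuine difficulty, is the one addressed in the second paragraph: one must make sure the closed form (\ref{eq-HS_geo_discord}) of $D_\HS^G$ is legitimately applied to the non-normalized positive operator $\sqrt{\rho}$. I would handle this by re-reading the derivation of (\ref{eq-HS_geo_discord}) with $\tr X^2$ kept symbolic; no step breaks down. Everything else in the proof is elementary algebra, and — worth emphasizing — no new optimization has to be carried out, since the maximizing basis for $D_\Hel^G(\rho)$ is literally the maximizing basis for $D_\HS^G(\sqrt{\rho})$.
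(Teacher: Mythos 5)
Your proof is correct and follows essentially the same route as the paper: both compare the variational formula (\ref{eq-HS_geo_discord}) for $D_\HS^G$, applied to the positive operator $\sqrt{\rho}$, with the formula (\ref{eq-formula_Hellinger_geo_discordbis_classicalquantum}) for $D_\Hel^G$, observing that they optimize the same functional $\sum_i \tr_B\bra{\alpha_i}X\ket{\alpha_i}^2$ at $X=\sqrt{\rho}$. Your explicit check that the derivation of (\ref{eq-HS_geo_discord}) never uses the normalization of $X$ is the same bookkeeping the paper disposes of in the remark following the theorem, where $D_\HS^G(\sqrt{\rho})$ is defined via rescaling.
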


Note that the Hilbert-Schmidt geometric discord is evaluated for the
square root of $\rho$, which is not a state but is nevertheless a
non-negative operator. Thus $\sigma = \sqrt{\rho} \, /\tr  \sqrt{\rho}
$ is a density operator and $D_\HS^G ( \sqrt{\rho})$ is defined as
\begin{equation}
D_\HS^G ( \sqrt{\rho}) \equiv  \| \sqrt{\rho} \|_{\tr}^2 D_{\HS}^G ( {\sigma } ) 
=  D_{\HS}^M ( \sqrt{\rho} ) \; ,
\end{equation}
where $D_{\HS}^M ( \sqrt{\rho} ) $ is given by replacing $\rho$ by
$\sqrt{\rho}$ in Eq.~(\ref{def:geomdisc2}).

For pure states, Eq.~(\ref{eq-rel_geo_disc_Hel_HS}) yields a direct relation between the Hellinger and Hilbert-Schmidt geometric discords. Namely,
\begin{equation}
D_\Hel^G ( \ket{\Psi} ) = 2  - 2 (1 - D_\HS^G ( \ket{\Psi} ))^\onehalf \; .
\end{equation}
Consequently, as a further corollary, 
Eq.~(\ref{eq-Hellinger_geo_discord_for_pure_states}) can be recast in the form
$D_\HS^G ( \ket{\Psi} )  = 1 - K( \ket{\Psi})^{-1}$, a result already
known in the literature, see e.g. Refs.~\cite{Luo12,Luo2013}.

As explained in Sec.~\ref{sec-computability},
the calculation of $D_\Hel^G ( \rho) $ is straightforward for
qubit-qudit states $\rho$ once one has
determined the decomposition~(\ref{eq-Boch_dec_square_root}) of
the square root of $\rho$. One can use for this purpose the formula
given in Eq.~(\ref{eq-explicit_formula_geo_disc_2_qubits}).
An alternative derivation of this formula for two-qubit states may be obtained
by combining Eq.~(\ref{eq-rel_geo_disc_Hel_HS}) 
with the result of Ref.~\cite{Daki'c2010} on the Hilbert-Schmidt geometric
discord. Since a generalization of the latter result to bipartite
systems with arbitrary finite space dimensions
$n_A$ and $n_B$ is available~\cite{Luo2010}, a corresponding formula
for $D^G_\Hel(\rho)$ for higher-dimensional systems can be obtained as
well. 

\subsection{Comparison between the Bures and Hellinger geometric discords}

\begin{theorem}  \label{theo-bounds_Bures_geo}
Let us recall the increasing function $g(d)$ defined in Eq.~(\ref{eq-def_function_g_and_f}), and its inverse $g^{-1} (d) \equiv 2 - 2 \sqrt{1-d/2}$.
The Bures and Hellinger geometric discords satisfy
\begin{equation} \label{eq-bounds_Bures_geo}
g^{-1} ( D_\Hel^G ( \rho ))
\leq D_\Bu^G ( \rho ) \leq D_\Hel^G ( \rho ) \; .
\end{equation}
\end{theorem}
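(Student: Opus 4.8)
The plan is to deduce the theorem from a purely \emph{pointwise} comparison of the squared Bures and Hellinger distances, valid for an arbitrary pair of states, and then to minimize over the set $\classQ$ of classical-quantum states. Precisely, I would first establish that for all states $\rho,\sigma$ of $AB$,
\begin{equation}
2 - 2\sqrt{\,\tr\sqrt{\rho}\sqrt{\sigma}\,} \;\le\; d_\Bu(\rho,\sigma)^2 \;\le\; d_\Hel(\rho,\sigma)^2 \;,
\end{equation}
i.e. $g^{-1}\!\bigl(d_\Hel(\rho,\sigma)^2\bigr)\le d_\Bu(\rho,\sigma)^2\le d_\Hel(\rho,\sigma)^2$, and then take the minimum over $\sigma=\sigma_\Aclas\in\classQ$. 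Since $g^{-1}(d)=2-2\sqrt{1-d/2}$ is continuous and strictly increasing, the minimum of $g^{-1}(d_\Hel(\rho,\sigma_\Aclas)^2)$ is attained at the same $\sigma_\Aclas$ as the minimum of $d_\Hel(\rho,\sigma_\Aclas)^2$, so it equals $g^{-1}(D_\Hel^G(\rho))$; combined with $\min_{\sigma_\Aclas}d_\Bu(\rho,\sigma_\Aclas)^2=D_\Bu^G(\rho)$ and $\min_{\sigma_\Aclas}d_\Hel(\rho,\sigma_\Aclas)^2=D_\Hel^G(\rho)$ (definition~(\ref{def:geomdisc1})), this yields exactly Eq.~(\ref{eq-bounds_Bures_geo}).

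The right-hand inequality $d_\Bu\le d_\Hel$ is nothing but the bound already recalled in Eq.~(\ref{buchel}) (the Bures distance being the smallest Riemannian contractive distance), so it is immediate. For the left-hand inequality, I would rewrite it using $d_\Bu(\rho,\sigma)^2=2-2\sqrt{F(\rho,\sigma)}$ and $d_\Hel(\rho,\sigma)^2=2-2\,\tr\sqrt{\rho}\sqrt{\sigma}$ from Eqs.~(\ref{eq-Burea_dist}) and~(\ref{eq-Q_Hellinger_distance}): it is then equivalent to the ``affinity dominates fidelity'' inequality
\begin{equation}
F(\rho,\sigma)\;\le\;\tr\sqrt{\rho}\sqrt{\sigma}
\end{equation}
for arbitrary states $\rho,\sigma$.

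To prove this last inequality I would factor $\sqrt{\rho}\sqrt{\sigma}=\rho^{1/4}\bigl(\rho^{1/4}\sigma^{1/4}\bigr)\sigma^{1/4}$ and invoke the generalized Hölder inequality for the Schatten norms $\|X\|_p=(\tr|X|^p)^{1/p}$, namely $\|ABC\|_1\le\|A\|_4\|B\|_2\|C\|_4$ (legitimate since $\tfrac14+\tfrac12+\tfrac14=1$). One has $\|\rho^{1/4}\|_4=(\tr\rho)^{1/4}=1=\|\sigma^{1/4}\|_4$ and $\|\rho^{1/4}\sigma^{1/4}\|_2^2=\tr\bigl(\sigma^{1/4}\rho^{1/2}\sigma^{1/4}\bigr)=\tr\sqrt{\rho}\sqrt{\sigma}$ by cyclicity, whence $\sqrt{F(\rho,\sigma)}=\|\sqrt{\rho}\sqrt{\sigma}\|_1\le\sqrt{\,\tr\sqrt{\rho}\sqrt{\sigma}\,}$, which is the claim. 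Fractional powers of positive semidefinite operators are well defined, so no invertibility assumption on $\rho$ or $\sigma$ is needed.

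I expect the only real subtlety to be the recognition that, after expressing everything through $F$ and $\tr\sqrt{\rho}\sqrt{\sigma}$, the nontrivial (left) inequality of the theorem is precisely the pointwise fidelity/affinity comparison $F\le\tr\sqrt{\rho}\sqrt{\sigma}$; once this is seen, the Hölder estimate is routine. A minor point to spell out in the write-up is that $g^{-1}$ commutes with the minimization over $\classQ$ (being continuous and strictly increasing on $[0,2]$, which extends the domain $[0,2-\sqrt2]$ used elsewhere in the paper without harm, since all arguments here lie in $[0,2]$). As an alternative one could compare directly the general expressions~(\ref{eq-formula_Hellinger_geo_discord}) and~(\ref{eq-formula_Bures_geo_discord}) for $D_\Hel^G$ and $D_\Bu^G$, but the pointwise argument above is shorter and self-contained.
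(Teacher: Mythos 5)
Your proof is correct, but it follows a genuinely different route from the paper's. You reduce everything to the \emph{pointwise} inequality $F(\rho,\sigma)\le \tr\sqrt{\rho}\sqrt{\sigma}$ (fidelity dominated by affinity), proved via the three-factor H\"older inequality $\|ABC\|_1\le\|A\|_4\|B\|_2\|C\|_4$ applied to $\sqrt{\rho}\sqrt{\sigma}=\rho^{1/4}(\rho^{1/4}\sigma^{1/4})\sigma^{1/4}$; this is equivalent to $g^{-1}(d_\Hel^2)\le d_\Bu^2$, and minimizing over $\classQ$ (using that $g^{-1}$ is increasing) gives the theorem. The paper instead works at the level of the optimized quantities: it invokes the representation of $F(\rho,\classQ)$ as a maximal success probability in an ambiguous state-discrimination task, Eq.~(\ref{eq-variationnal_formula_bis}), and the Barnum--Knill bound, which sandwiches $F(\rho,\classQ)$ between $\max_{\{\ket{\alpha_i}\}}\sum_i\tr_B\bra{\alpha_i}\sqrt{\rho}\ket{\alpha_i}^2$ and its square root, then identifies these extremes with $D^G_\Hel$ via Theorem~\ref{eq-theo_geo_disc_Hell_mixed_states}. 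Your argument is more elementary and self-contained (it needs neither the discrimination formula nor the explicit expression for $D^G_\Hel$), it is strictly stronger in that it bounds the distance to \emph{every} classical-quantum state rather than only the optimal one, and it immediately yields the analogous bounds for $D^M$ and $D^R$ --- indeed, the paper's Theorem~\ref{prop_bound_D_Bu^R_D_Hel^R} proves precisely the special case $\sigma=U_A\otimes\idty\,\rho\,U_A^\dagger\otimes\idty$ of your inequality by an essentially equivalent Cauchy--Schwarz estimate, see Eq.~(\ref{eq-trace_inequality_proof_prop_bound_D_Bu^R_D_Hel^R}). What the paper's route buys in exchange is the operational reading of the bound in terms of optimal versus least-squares measurements, consistent with its treatment of the Bures geometric discord elsewhere. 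Your side remarks (continuity and monotonicity of $g^{-1}$ commuting with the minimization, no invertibility needed for fractional powers) are accurate.
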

In particular, $D_\Bu^G ( \rho )$  lies in the interval bounded by $D_\Hel^G  ( \rho ) /2$ and  $D_\Hel^G  ( \rho )$.
Note that the result of Theorem~\ref{theo-bounds_Bures_geo} can be
rewritten as:
\begin{equation} \label{eq-theo3bis}
D^G_\Bu (\rho) \leq D^G_\Hel (\rho)  \leq g ( D^G_\Bu ( \rho)) \; .
\end{equation}

\proof This is a consequence of
Eq.~(\ref{eq-variationnal_formula_bis}) and of the Barnum--Knill upper
bound on the probability of success in quantum state
discrimination~\cite{Barnum2002}. According to such bound, the maximum
probability of success $P_{\rm S}^{\,\rm{opt\,v.N.}} ( \{ \rho_i,\eta_i \})
$ is at most equal to the square root of the probability of success
obtained by discriminating the states $\rho_i$ with the least--square
measurement. This yields (see Ref.~\cite{Spehner_review} for more  details):
\begin{equation} \label{eq-inequality_square_root_meas}
 \max_{\{ \ket{\alpha_i } \} }  \sum_{i=1}^{n_A}  {\rm Tr}_B \bra{\alpha_i}  \sqrt{\rho} \ket{\alpha_i}^2
\leq
F (\rho , \classQ )
\leq
\max_{\{ \ket{\alpha_i} \} }  \biggl\{ \sum_{i=1}^{n_A}   {\rm Tr}_B  \bra{\alpha_i} \sqrt{\rho} \ket{\alpha_i}^2  \biggr\}^{\frac{1}{2}} \; .
\end{equation}
The second inequality in Eq.~(\ref{eq-inequality_square_root_meas})
together with Eqs.~(\ref{eq-Bures_geo_disc_as_max_fidelity}) and~(\ref{eq-formula_Hellinger_geo_discordbis})  lead
to the first bound in Eq.~(\ref{eq-bounds_Bures_geo}). The second
bound in Eq.~(\ref{eq-bounds_Bures_geo}) is an immediate
consequence of the fact that the Bures distance is always smaller or
at most equal to the  Hellinger distance.
We remark for completeness that by exploiting
Eqs.~(\ref{eq-Bures_geo_disc_as_max_fidelity})
and~(\ref{eq-formula_Hellinger_geo_discordbis}), this second bound
 is equivalent
precisely to  the lower bound in
Eq.~(\ref{eq-inequality_square_root_meas}). \finpro

\section{Measurement-induced geometric discord} \label{sec-meas_ind_geo_disc}

In this Section we derive an upper bound on  the measurement-induced
geometric discord $D^M$ in terms of the geometric discord $D^G$,  both
for the Hellinger and the Bures distances.
We also determine for these two metrics the value that $D^M$ acquires
for a pure state $\ket{\Psi}$ and
the closest post-measurement state to $\ket{\Psi}$ for local
measurements.

\subsection{Hellinger measurement-induced geometric discord}

In view of the definitions in Eqs.~(\ref{eq-Q_Hellinger_distance})
and~(\ref{def:geomdisc2}), the measurement-induced geometric discord
based on the Hellinger distance can be expressed as
\begin{equation} \label{eq-meas_ind_geo_disc_Hell}
D_\Hel^M ( \rho) =
2 - 2 \max_{\{ \ket{\alpha_i}\} } \sum_{i=1}^{n_A} {\tr}_B \bigl[ \bra{\alpha_i} \sqrt{\rho} \ket{\alpha_i} \sqrt{\bra{\alpha_i} \rho \ket{\alpha_i}} \bigr]
\; .
\end{equation}
Here, we have used the expression $( \rho_{\rm p.m.}^{ \{ \ket{ \alpha_i } \} })^{1/2}= \sum_i \ketbra{\alpha_i}{\alpha_i}
\otimes \sqrt{\bra{\alpha_i} \rho\, \ket{\alpha_i}}$ of the square
root of the post-measurement state in Eq.~(\ref{def:geomdisc2}). Let us first study the restriction of $D^M_\Hel$ to pure states.

\begin{theorem} \label{eq-theo_Hel_mes_ind_geo_disc_pure_states}
On pure states, the Hellinger measurement-induced geometric discord is
given by
\begin{equation} \label{eq-meas_ind_geo_disc_pure_states_Hel}
D_\Hel^M ( \ket{\Psi} ) = 2 -2 \sum_{i=1}^n \mu_i^{\frac{3}{2}} \; ,
\end{equation}
where $\mu_i$ are the Schmidt coefficients of
$\ket{\Psi}$. 
The measurement basis $\{ \ket{\alpha_i^\opt} \}$  on
subsystem $A$ which
produces the 
closest post-measurement state to $\ket{\Psi}$ is the
\ONB $\{ \ket{\varphi_i} \}$ formed by the Schmidt vectors in Eq.~(\ref{eq-Schmidt_decomposition})
(\ie, the eigenbasis of the reduced state $[\rho_\Psi]_A$).
\end{theorem}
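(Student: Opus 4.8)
The plan is to specialize the general formula~(\ref{eq-meas_ind_geo_disc_Hell}) for $D_\Hel^M$ to a pure state $\rho=\ketbra{\Psi}{\Psi}$ and then reduce the optimization over orthonormal bases of $\Hh_A$ to a majorization problem for the reduced state $\rho_A=[\rho_\Psi]_A$. The key starting observation is that a rank-one projector satisfies $\sqrt{\rho}=\rho$, so that for any orthonormal basis $\{\ket{\alpha_i}\}$ of $\Hh_A$ both operators $\bra{\alpha_i}\sqrt{\rho}\ket{\alpha_i}$ and $\bra{\alpha_i}\rho\ket{\alpha_i}$ entering~(\ref{eq-meas_ind_geo_disc_Hell}) coincide with the same rank-one positive operator $\ketbra{\beta_i}{\beta_i}$ on $\Hh_B$, where $\ket{\beta_i}\equiv\braket{\alpha_i}{\Psi}$ is the partial inner product, with $\|\beta_i\|^2=\bra{\alpha_i}\rho_A\ket{\alpha_i}$.

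First I would evaluate the $i$-th summand. Since the square root of the positive rank-one operator $\ketbra{\beta_i}{\beta_i}$ is $\|\beta_i\|^{-1}\ketbra{\beta_i}{\beta_i}$, a one-line computation gives
\[
\tr_B\bigl[\bra{\alpha_i}\sqrt{\rho}\ket{\alpha_i}\sqrt{\bra{\alpha_i}\rho\ket{\alpha_i}}\bigr]=\|\beta_i\|^{-1}\tr_B\bigl[(\ketbra{\beta_i}{\beta_i})^2\bigr]=\|\beta_i\|^{3}=\bigl(\bra{\alpha_i}\rho_A\ket{\alpha_i}\bigr)^{3/2}\,.
\]
Substituting this into~(\ref{eq-meas_ind_geo_disc_Hell}) turns the assertion into the statement that
\[
D_\Hel^M(\ket{\Psi})=2-2\max_{\{\ket{\alpha_i}\}}\sum_{i=1}^{n_A}\bigl(\bra{\alpha_i}\rho_A\ket{\alpha_i}\bigr)^{3/2}\,,
\]
so everything reduces to maximizing $\sum_i p_i^{3/2}$ over all vectors $(p_i)_{i=1}^{n_A}$ that occur as diagonals $p_i=\bra{\alpha_i}\rho_A\ket{\alpha_i}$ of $\rho_A$ in an orthonormal basis.

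Then I would close this optimization by majorization. By the Schur--Horn theorem, the diagonal $(p_i)$ of the Hermitian operator $\rho_A$ in any orthonormal basis is majorized by its spectrum $(\mu_i)$ (padded with zeros when $n_A>n_B$), while the eigenbasis realizes $(p_i)=(\mu_i)$ up to a permutation. Since $t\mapsto t^{3/2}$ is convex on $[0,1]$, the symmetric function $(p_i)\mapsto\sum_i p_i^{3/2}$ is Schur-convex, whence $\sum_i p_i^{3/2}\le\sum_i\mu_i^{3/2}=\sum_{i=1}^{n}\mu_i^{3/2}$, with equality exactly when $\{\ket{\alpha_i}\}$ is an eigenbasis of $\rho_A$, i.e.\ the Schmidt basis $\{\ket{\varphi_i}\}$ of~(\ref{eq-Schmidt_decomposition}). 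This delivers both~(\ref{eq-meas_ind_geo_disc_pure_states_Hel}) and the identification of the optimal measurement basis. Finally, using $\bra{\varphi_i}\rho\ket{\varphi_i}=\mu_i\ketbra{\chi_i}{\chi_i}$, again from~(\ref{eq-Schmidt_decomposition}), the associated closest post-measurement state is $\rho_{\rm p.m.}^{\opt}=\sum_i\mu_i\ketbra{\varphi_i}{\varphi_i}\otimes\ketbra{\chi_i}{\chi_i}$, in agreement with Eq.~(\ref{eq-closest_states_meas_ind_geo_disc}).

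The only delicate point --- and the main, if mild, obstacle --- is ensuring the supremum over orthonormal bases is genuinely attained at the eigenbasis, not merely bounded above by $\sum_i\mu_i^{3/2}$; the Schur--Horn characterization of the achievable diagonals together with Schur-convexity handles exactly this. One should also record that for degenerate $\rho_A$ the optimal basis is not unique, and that when $n_A>n_B$ the surplus basis vectors contribute nothing, neither of which affects the value. A self-contained alternative would be a direct variational (Lagrange-multiplier) analysis of $\sum_i\bra{\alpha_i}\rho_A\ket{\alpha_i}^{3/2}$ constrained by $\sum_i\bra{\alpha_i}\rho_A\ket{\alpha_i}=1$, but the majorization route is shorter and cleaner.
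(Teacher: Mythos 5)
Your proof is correct, and it opens exactly as the paper's does: both arguments reduce Eq.~(\ref{eq-meas_ind_geo_disc_Hell}) for $\rho_\Psi=\ketbra{\Psi}{\Psi}$ to the maximization of $\sum_i \|\beta_i\|^3=\sum_i \bra{\alpha_i}\rho_A\ket{\alpha_i}^{3/2}$ over \ONBs of $\Hh_A$, using $\sqrt{\rho_\Psi}=\rho_\Psi$ and $\bra{\alpha_i}\rho_\Psi\ket{\alpha_i}=\ketbra{\beta_i}{\beta_i}$ with $\ket{\beta_i}=\braket{\alpha_i}{\Psi}$. Where you genuinely diverge is in closing this optimization. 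The paper expands $\|\beta_i\|^2=\sum_j\mu_j|\braket{\alpha_i}{\varphi_j}|^2$ via the Schmidt decomposition and applies the H\"older inequality termwise, see Eq.~(\ref{eq-proof-meas_ind_geo_disc_pure_states2}), before summing over $i$; you instead invoke the Schur--Horn theorem (the diagonal of $\rho_A$ in any \ONB is majorized by its spectrum) together with Schur-convexity of $(p_i)\mapsto\sum_i p_i^{3/2}$, which follows from convexity of $t\mapsto t^{3/2}$. Both steps are valid and both correctly identify the eigenbasis of $\rho_A$ as a maximizer. The majorization route is the more conceptual and reusable one: it shows at a stroke that $\max_{\{\ket{\alpha_i}\}}\sum_i g(\bra{\alpha_i}\rho_A\ket{\alpha_i})=\sum_i g(\mu_i)$ for any convex $g$ (so it would equally handle $g(t)=t^2$, i.e.\ the analogous computation behind Eq.~(\ref{eq-almost_finished})), and strict convexity gives you the converse --- that \emph{only} eigenbases are optimal, up to degeneracies --- essentially for free, whereas the paper merely exhibits the eigenbasis as a maximizer. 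The H\"older route buys self-containedness: it needs nothing beyond the Schmidt decomposition and a standard inequality. Your side remarks on non-uniqueness for degenerate $\rho_A$ and on padding the spectrum with zeros when $n_A>n_B$ are accurate and consistent with the paper's treatment of the case $n<n_A$.
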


As a consequence, the  post-measurement state closest to $\ket{\Psi}$ after a
local von Neumann measurement on party $A$ takes the form
\begin{equation} \label{eq-closest_states_meas_ind_geo_disc}
[\rho_\Psi]_{\rm p.m.}^{\opt, \Hel} = \sum_{i=1}^n \mu_i \ketbra{\varphi_i}{\varphi_i} \otimes \ketbra{\chi_i}{\chi_i} \; .
\end{equation}
With the exception of the uniform case $\mu_i = 1/n$ $\forall\; i$,
that is, in all cases in which $\ket{\Psi}$  {\em is not} maximally
entangled, $[\rho_\Psi]_{\rm p.m.}^{\opt, \Hel}$ is distinct from the
closest classical-quantum state
 to $\ket{\Psi}$ (compare with
Eq.~(\ref{eq-Hellinger_CCQ_state})). Therefore, for such non-maximally
entangled pure states, $D_\Hel^M ( \ket{\Psi})$ is always strictly  
larger than the Hellinger geometric discord  $D_\Hel^G (  \ket{\Psi} )$.

\proof
Equation~(\ref{eq-meas_ind_geo_disc_Hell}) yields for $\rho_\Psi= \ketbra{\Psi}{\Psi}$
\begin{equation} \label{eq-proof-meas_ind_geo_disc_pure_states}
D_\Hel^M ( \ket{\Psi} ) =
2 - 2 \max_{\{ \ket{\alpha_i}\} } \sum_{i=1}^{n_A} \| \beta_i \|^3 \; ,
\end{equation}
with the unnormalized vector $\ket{\beta_i}$ in the Hilbert space of
$B$ defined as $\ket{\beta_i} = \braket{\alpha_i}{\Psi}$. The  Schmidt decomposition gives
\begin{equation} \label{eq-proof-meas_ind_geo_disc_pure_states2}
\| \beta_i \|^2   = \sum_{j=1}^n \mu_j | \braket{\alpha_i}{\varphi_j} |^2
 \leq \Biggl( \sum_{j=1}^n \mu_j^{\frac{3}{2}} | \braket{\alpha_i}{\varphi_j} |^2 \biggr)^{\frac{2}{3}} \; ,
\end{equation}
where the upper bound is obtained by combining the H\"older inequality
and $\sum_j  | \braket{\alpha_i}{\varphi_j} |^2 \leq 1$. It follows
that
\begin{equation}
\sum_{i=1}^{n_A} \| \beta_i \|^3 \leq \sum_{j=1}^n
\mu_j^{\frac{3}{2}}\;.
\end{equation}
This  bound is saturated by taking $\ket{\alpha_i} =
\ket{\varphi_i}$ for all $i=1,\ldots,n$, with $n=\min\{ n_A, n_B\}$ (if $n<n_A$, the remaining
vectors $\ket{\alpha_i}$ are chosen arbitrarily to form an orthonormal
basis of $\Hh_A$).  Equation~(\ref{eq-meas_ind_geo_disc_pure_states_Hel})
then follows upon replacing the sum in
Eq.~(\ref{eq-proof-meas_ind_geo_disc_pure_states}) by its upper bound.
\finpro

\vspace{2mm} 

Although we already know from Theorem 2.4 of Ref.~\cite{Piani2014}
that $D_\Hel^M$ satisfies Axiom (iv) (see
Sec.~\ref{sec-def_geo_discords_proper}), it is instructive to deduce
this result from
Eq.~(\ref{eq-meas_ind_geo_disc_pure_states_Hel}). Introducing the
unitarily invariant function $f(\rho_A) = 2 -2 \tr \rho_A^{3/2} $ on
the set of density operators on $\Hh_A$,
this equation can be rewritten as 
$D_\Hel^M ( \ket{\Psi}) = f ( [\rho_\Psi]_A )$.
The fact that this quantity is an entanglement
monotone follows directly from the characterization of convex 
strongly monotone entanglement measures provided by Vidal~\cite{Vidal00}. Indeed, according to Ref.~\cite{Vidal00},
$E(\ket{\Psi})= f([\rho_\Psi]_A )$ defines an entanglement monotone on pure states
if $\rho_A\mapsto f ( \rho_A)$ is concave (notice, however, that this condition is not necessary and sufficient: notable
exceptions are provided by the logarithmic negativity, which is not  convex but is
nevertheless strongly entanglement monotone, see
Ref.~\cite{Plenio2005}, and the Bures geometric
measure of entanglement $E_\Bu^G$,
which is convex but entanglement monotone in the weak sense discussed in
Section~\ref{sec-def_geo_discords_proper}, see Ref.~\cite{Spehner_review}). The concavity of $\rho_A\mapsto f ( \rho_A)$ 
is a consequence of the convexity of $\rho_A \mapsto \tr k (\rho_A)$
for real convex functions $k$,  in particular for $k(x)= x^{3/2}$, as
proved for instance in Ref.~\cite{Carlen}. Hence $D_\Hel^M$ satisfies  Axiom (iv)  of Section~\ref{sec-intro}.
Since the Hellinger distance is
contractive under CPTP maps, $D_\Hel^M$ also fulfills Axioms (i-iii) (see 
Sec.~\ref{sec-def_geo_discords_proper}). Summing up, $D_\Hel^M$ is
a {\em bona fide} measure of quantum correlations. The maximum value of $D_\Hel^M$  when $n_A \leq n_B$ is equal to
$2-2/\sqrt{n_A}$, as reported in Table~\ref{tab2} (this follows from
Eq.~(\ref{eq-meas_ind_geo_disc_pure_states_Hel}) and
the fact that $D_\Hel^M$ is maximum for maximally entangled pure
states).

\begin{theorem} \label{theo-comparison_meas_ind_geo_and_geo_discord_Hel}
The Hellinger geometric discord and Hellinger measurement-induced geometric discord satisfy
\begin{equation} \label{eq-meas_ind_geo_and_geo_discord_Hel}
D^G_\Hel(\rho) \leq D^M_\Hel (\rho) \leq  g ( D^G_\Hel ( \rho) ) \; ,
\end{equation}
with the function $g(d)$ defined by Eq.~(\ref{eq-def_function_g_and_f}).
\end{theorem}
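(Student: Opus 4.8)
The plan is to establish the two inequalities in Eq.~(\ref{eq-meas_ind_geo_and_geo_discord_Hel}) separately. The lower bound $D^G_\Hel(\rho)\le D^M_\Hel(\rho)$ needs no real argument: every post-measurement state $\rho_{\rm p.m.}^{\{\Pi_i^A\}}$ produced by a rank-one von Neumann measurement on $A$ (see Eq.~(\ref{def:geomdisc2})) belongs to $\classQ$, so the minimization defining $D^M_\Hel$ runs over a subset of the classical-quantum states and cannot give a value below $D^G_\Hel(\rho)=\min_{\sigma_\Aclas}d_\Hel(\rho,\sigma_\Aclas)^2$.

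For the upper bound I would work with the variational formulas~(\ref{eq-formula_Hellinger_geo_discordbis}) and~(\ref{eq-meas_ind_geo_disc_Hell}). Fix an \ONB $\{\ket{\alpha_i}\}$ of $\Hh_A$ and introduce the nonnegative operators $B_i\equiv\bra{\alpha_i}\sqrt{\rho}\ket{\alpha_i}$ and $C_i\equiv\bra{\alpha_i}\rho\ket{\alpha_i}$ on $\Hh_B$. Expanding $\sqrt{\rho}=\sum_{j,k}\ketbra{\alpha_j}{\alpha_k}\otimes A_{jk}$ with $A_{jk}=\bra{\alpha_j}\sqrt{\rho}\ket{\alpha_k}$ and $A_{jk}^\dagger=A_{kj}$ (self-adjointness of $\sqrt{\rho}$), one computes $C_i=\sum_k A_{ik}A_{ik}^\dagger\ge A_{ii}A_{ii}^\dagger=B_i^2$. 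Since $x\mapsto\sqrt{x}$ is operator monotone and $B_i\ge 0$, this yields $\sqrt{C_i}\ge B_i$; applying $\tr[PQ]\ge 0$ for $P,Q\ge 0$ with $P=B_i$ and $Q=\sqrt{C_i}-B_i$ gives $\tr_B[B_i\sqrt{C_i}]\ge\tr_B[B_i^2]$ for each $i$. Summing over $i$ shows that, for every fixed \ONB, the functional maximized in Eq.~(\ref{eq-meas_ind_geo_disc_Hell}) dominates the functional maximized in Eq.~(\ref{eq-formula_Hellinger_geo_discordbis}).

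The conclusion then follows by taking the supremum over \ONBs on both sides. Since the two maxima need not be attained at the same basis, I would note that for every \ONB the $D^M$-functional is bounded below by the $D^G$-functional at the same basis, so maximizing over bases gives
\begin{equation}
1-\onehalf D^M_\Hel(\rho)\;\ge\;\Bigl(1-\onehalf D^G_\Hel(\rho)\Bigr)^{2},
\end{equation}
the square on the right arising because the formula for $D^G_\Hel$ carries an outer square root of the maximized functional while that for $D^M_\Hel$ does not. Using the elementary identity $g(d)=2-2(1-d/2)^2$ (with $g$ from Eq.~(\ref{eq-def_function_g_and_f})), this rearranges exactly to $D^M_\Hel(\rho)\le g(D^G_\Hel(\rho))$. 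I do not expect a genuine obstacle: the only delicate point is the passage from pointwise domination of the two functionals to domination of their maxima over bases, which is handled by the remark just made, while operator monotonicity of the square root and positivity of $\tr[PQ]$ for $P,Q\ge 0$ are standard.
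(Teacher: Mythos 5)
Your proposal is correct, and its overall architecture coincides with the paper's: the lower bound is the trivial inclusion of post-measurement states in $\classQ$, and the upper bound reduces to the single operator inequality $\bra{\alpha_i}\sqrt{\rho}\ket{\alpha_i}\leq\sqrt{\bra{\alpha_i}\rho\ket{\alpha_i}}$, followed by the comparison of the variational functionals in Eqs.~(\ref{eq-formula_Hellinger_geo_discordbis}) and~(\ref{eq-meas_ind_geo_disc_Hell}) (including the correct bookkeeping of the outer square root, which is where the function $g$ comes from). The only genuine difference is how you justify that operator inequality. The paper invokes the operator Jensen inequality for the operator-concave function $x\mapsto\sqrt{x}$ applied to the pinching channel $\rho\mapsto\sum_i\ketbra{\alpha_i}{\alpha_i}\otimes\idty\,\rho\,\ketbra{\alpha_i}{\alpha_i}\otimes\idty$. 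You instead expand $\rho=(\sqrt{\rho})^2$ in block form to get the elementary intermediate bound $C_i=\sum_k A_{ik}A_{ik}^\dagger\geq B_i^2$, and then apply the L\"owner--Heinz operator monotonicity of the square root. The two justifications are equivalent in depth (operator monotonicity and operator concavity of $\sqrt{x}$ are two faces of the same theorem), but your block-decomposition step makes the mechanism more transparent: it shows explicitly that the gap between $C_i$ and $B_i^2$ is the positive sum of off-diagonal blocks $\sum_{k\neq i}A_{ik}A_{ik}^\dagger$, which vanishes exactly when $\rho$ is block-diagonal in the measurement basis, i.e.\ classical-quantum. All the auxiliary steps you flag ($\tr[PQ]\geq 0$ for $P,Q\geq 0$, and the passage from pointwise domination of the functionals to domination of their maxima) are handled correctly.
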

In particular, the ratio $D^M_\Hel(\rho)/D^G_\Hel(\rho)$  always lies in the interval $[1,2]$.

\proof
The first inequality follows  trivially from the definitions.
By the operator concavity of $f(x)=\sqrt{x}$ and the Jensen-type
inequality applied to the CPTP map $\rho \mapsto \sum_i
\ketbra{\alpha_i}{\alpha_i} \otimes \idty \,\rho
\,\ketbra{\alpha_i}{\alpha_i} \otimes \idty$ (see, e.g.,
Refs.~\cite{Carlen,Spehner_review}), the  following operator bound holds
\begin{equation}
 \sqrt{\bra{\alpha_i} \rho \ket{\alpha_i}} \geq \bra{\alpha_i} \sqrt{\rho} \ket{\alpha_i} \; .
\end{equation}
As a consequence, the trace in Eq.~(\ref{eq-meas_ind_geo_disc_Hell}) is bounded from below by $\tr    \bra{\alpha_i} \sqrt{\rho} \ket{\alpha_i}^2 $.
Comparing with the expression~(\ref{eq-formula_Hellinger_geo_discordbis}) of the Hellinger geometric discord,
the upper bound on $D^M_\Hel$ follows.
\finpro

\subsection{Bures measurement-induced geometric discord}
\label{sec-Bures_meas_ind_geo_disc}

For any state $\rho$ of the bipartite system $AB$, let us denote by
${\cal LM}_\rho$ the set of all post-measurement states
obtained from $\rho$ after local rank-one projective measurements on $A$,
that is,  the states given by Eq.~(\ref{def:geomdisc2}).
In analogy with Eq.~(\ref{eq-Bures_geo_disc_as_max_fidelity}), the Bures measurement-induced geometric discord
is equal to
$D_\Bu^M ( \rho) = 2 - 2 \sqrt{F( \rho, {\cal LM}_\rho)}$, where $F ( \rho , {\cal LM}_\rho)$ is the maximum fidelity between $\rho$ and
a state belonging to ${\cal LM}_\rho$. One easily finds
\begin{equation} \label{eq-max_fidelity_meas_ind_geo_disc}
F ( \rho ,{\cal LM}_\rho) = \max_{ \{ \ket{\alpha_i} \}} \left\{ \tr \sqrt{\sum_{i=1}^{n_A} \eta_i^2 \rho_i^2} \right\}^2 \; ,
\end{equation}
with the states $\rho_i$ and probabilities $\eta_i$ given by
Eq.~(\ref{eq-state_Q_discrimination}). This proves
Eq.~(\ref{eq-formula_Bures_meas_ind_discord})  reported in Section~\ref{sec-main_results_general_expressions_Bures_Hellinger}.
Moreover, the following theorem holds:

\begin{theorem} \label{theo-Bures_meas_ind_disc_pure_states}
\begin{itemize}
\item[(a)]
On  pure states, the Bures measurement-induced geometric discord is
given by
\begin{equation} \label{eq-meas_ind_geo_disc_pure_states}
D_\Bu^M ( \ket{\Psi} )  = D_\Hel^G ( \ket{\Psi} ) = 2 - 2 K ( \ket{\Psi} )^{-\onehalf} \; ,
\end{equation}
where $K ( \ket{\Psi} ) = (\sum_i \mu_i^2 )^{-1}$ is the Schmidt number of $\ket{\Psi}$.
In particular, $D_\Bu^M$ satisfies Axiom (iv) of Section~\ref{sec-intro}
and is thus a \emph{bona fide} measure of quantum
correlations. Moreover,
the measurement basis $\{ \ket{\alpha_i^\opt} \}$ which
produces the closest post-measurement state $[\rho_\Psi]_{\rm p.m.}^\opt \in {\cal LM}_\Psi$ to $\ket{\Psi}$
for the Bures distance is the eigenbasis $\{  \ket{\varphi_i}\}$ of
the reduced state $[\rho_\Psi]_A$.
\item[(b)] For any mixed state $\rho$, if $n_A \leq n_B$ then the maximum value $D_\mmax$ of
$D_\Bu^M (\rho)$ is $D_\mmax = 2 - 2/\sqrt{n_A}$. Moreover, $D_\Bu^M(\rho) = D_\mmax$ if and only if
$\rho$ has maximum entanglement of formation $\EoF ( \rho) = \ln n_A$. Thus, $D_\Bu^M$ is a proper measure of quantum correlations that, besides Axioms (i)-(iv), satisfies also the additional Axiom (v).
\end{itemize}
\end{theorem}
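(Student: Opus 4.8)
The plan is to prove the two parts of Theorem~\ref{theo-Bures_meas_ind_disc_pure_states} separately, exploiting the variational formula~(\ref{eq-max_fidelity_meas_ind_geo_disc}) together with the general bounds between the Bures and Hellinger distances recalled in Sec.~\ref{wdtc}.

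\textbf{Part (a).} First I would compute $D_\Bu^M(\ket\Psi)$ directly. For a pure state $\rho_\Psi=\ketbra\Psi\Psi$, the conditional states of Eq.~(\ref{eq-state_Q_discrimination}) become rank-one: $\eta_i\rho_i = \ketbra{\psi_i}{\psi_i}$ with $\ket{\psi_i}=\ket{\alpha_i}\otimes\ket{\beta_i}$, $\ket{\beta_i}=\braket{\alpha_i}{\Psi}$, and $\eta_i=\|\beta_i\|^2$. Since the $\ket{\psi_i}$ live in mutually orthogonal subspaces $\ket{\alpha_i}\otimes\Hh_B$, the operator $\sum_i\eta_i^2\rho_i^2=\sum_i \eta_i^2\,\ketbra{\hat\psi_i}{\hat\psi_i}$ (with $\ket{\hat\psi_i}$ normalized) is already diagonal, so $\tr\sqrt{\sum_i\eta_i^2\rho_i^2}=\sum_i\eta_i=\sum_i\|\beta_i\|^2$. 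Wait --- this needs more care: $\eta_i\rho_i$ is rank one with eigenvalue $\eta_i$ on a unit vector, so $\eta_i^2\rho_i^2$ is rank one with eigenvalue $\eta_i^2$, and $\sqrt{\eta_i^2\rho_i^2}$ has eigenvalue $\eta_i$; since the ranges are orthogonal, $\tr\sqrt{\sum_i\eta_i^2\rho_i^2}=\sum_i\eta_i=1$, which would give $D_\Bu^M=0$ always, contradicting the statement. So the subtlety is that $\rho_i$ as defined in Eq.~(\ref{eq-state_Q_discrimination}) involves $\sqrt\rho$, and for a pure state $\sqrt{\rho_\Psi}=\rho_\Psi$, so indeed $\eta_i\rho_i=\sqrt{\rho_\Psi}\,\ketbra{\alpha_i}{\alpha_i}\otimes\idty\,\sqrt{\rho_\Psi}=\braket{\alpha_i}{\Psi}\bra\Psi\,\ketbra{\alpha_i}{\alpha_i}\ket\Psi\braket{\Psi}{\alpha_i}$-type expression $=|\braket{\alpha_i}{\Psi}|\cdots$; carefully, $\eta_i\rho_i = \ket\Psi\bra\Psi(\ketbra{\alpha_i}{\alpha_i}\otimes\idty)\ket\Psi\bra\Psi=\eta_i\ketbra\Psi\Psi$ --- hence all $\rho_i=\ketbra\Psi\Psi$ coincide, $\sum_i\eta_i^2\rho_i^2=(\sum_i\eta_i^2)\ketbra\Psi\Psi$, and $\tr\sqrt{\cdot}=(\sum_i\eta_i^2)^{1/2}=(\sum_i\bra{\alpha_i}\rho_A\ket{\alpha_i}^2)^{1/2}$. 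Thus $F(\rho_\Psi,{\cal LM}_\Psi)=\max_{\{\alpha_i\}}\sum_i\bra{\alpha_i}\rho_A\ket{\alpha_i}^2$, which by the argument following Eq.~(\ref{eq-formula_Hellinger_geo_discord_oure_state}) equals $\tr\rho_A^2=K(\ket\Psi)^{-1}$, attained for $\{\ket{\alpha_i}\}$ the eigenbasis of $\rho_A$. This gives Eq.~(\ref{eq-meas_ind_geo_disc_pure_states}) and identifies the optimal measurement basis; comparison with Eq.~(\ref{eq-Hellinger_geo_disc_pure_states}) [i.e.~(\ref{eq-Hellinger_geo_discord_for_pure_states})] shows $D_\Bu^M(\ket\Psi)=D_\Hel^G(\ket\Psi)$. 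Since $K$ is an entanglement monotone, Axiom~(iv) holds, and combined with the contractivity of $d_\Bu$ (Sec.~\ref{sec-def_geo_discords_proper}) this yields that $D_\Bu^M$ is a \emph{bona fide} measure.

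\textbf{Part (b).} The strategy is the standard two-step ``maximum on pure states + rigidity'' argument used for $D_\Bu^G$ in Refs.~\cite{Spehner2013,Spehner_review}. First, $D_\Bu^M$ satisfies Axiom~(iii) (contractivity), so by the result of Ref.~\cite{Streltsov12} quoted in the Introduction it is maximal on pure states when $n_A\le n_B$; together with part~(a) and Eq.~(\ref{eq-Hellinger_geo_discord_for_pure_states}), the maximum over pure states is $2-2K^{-1/2}$, maximized at $K=n_A$, giving $D_\mmax=2-2/\sqrt{n_A}$. For the ``only if'' rigidity statement I would argue as follows: suppose $D_\Bu^M(\rho)=D_\mmax$, i.e.~$F(\rho,{\cal LM}_\rho)=1/n_A$. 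Pick an optimal basis $\{\ket{\alpha_i}\}$; then $\tr\sqrt{\sum_i\eta_i^2\rho_i^2}=1/\sqrt{n_A}$. The elementary inequality $\tr\sqrt{\sum_i\eta_i^2\rho_i^2}\ge \sqrt{\sum_i\eta_i^2\tr\rho_i^2}\cdots$ is the wrong direction; instead use that each $\eta_i\rho_i$ has trace $\eta_i$ and is supported on the orthogonal sector $\ket{\alpha_i}\otimes\Hh_B$, whence $\sqrt{\sum_i\eta_i^2\rho_i^2}=\bigoplus_i\sqrt{\eta_i^2\rho_i^2}$ and $\tr\sqrt{\sum_i\eta_i^2\rho_i^2}=\sum_i\eta_i\tr\sqrt{\rho_i^2}=\sum_i\eta_i\tr|\rho_i|=\sum_i\eta_i$; again this collapses to $1$. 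The genuine subtlety --- and the \textbf{main obstacle} --- is that $\rho_i$ defined via $\sqrt\rho$ in Eq.~(\ref{eq-state_Q_discrimination}) is \emph{not} supported on $\ket{\alpha_i}\otimes\Hh_B$ in general (that orthogonal-sector structure holds for $\rho_{\rm p.m.}$, not for the $\sqrt\rho$-twisted states), so $\tr\sqrt{\sum_i\eta_i^2\rho_i^2}<\sum_i\eta_i=1$ strictly unless the $\rho_i$ have pairwise orthogonal supports, and establishing exactly when equality $=1/\sqrt{n_A}$ forces maximal entanglement of formation requires a careful concavity/operator-monotonicity analysis of the map $\{\eta_i,\rho_i\}\mapsto\tr\sqrt{\sum\eta_i^2\rho_i^2}$. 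I would handle this by: (i) using the operator Jensen inequality to bound $\tr\sqrt{\sum_i\eta_i^2\rho_i^2}\le \big(\sum_i\eta_i\,\tr(\eta_i\rho_i\cdot\text{something})\big)$ in a way that reproduces the pure-state bound, with equality characterizing a Schmidt-rank-$n_A$ structure with uniform coefficients; (ii) then invoking $\EoF(\rho)\le\ln n_A$ with equality iff $\rho$ is maximally entangled (standard), and checking that the equality conditions match. Alternatively --- and this is probably cleaner --- I would deduce rigidity from the already-established rigidity of $D_\Bu^G$: since ${\cal LM}_\rho\subset\classQ$ we have $D_\Bu^G(\rho)\le D_\Bu^M(\rho)\le D_\mmax=2-2/\sqrt{n_A}$, and the upper bound $D_\Bu^M\le g^{-1}\circ h(D_\Bu^G)$ from Table~\ref{tab2} is strictly below $D_\mmax$ whenever $D_\Bu^G<D_\mmax$; hence $D_\Bu^M(\rho)=D_\mmax$ forces $D_\Bu^G(\rho)=D_\mmax$, which by Ref.~\cite{Spehner2013} forces $\rho$ maximally entangled. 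Conversely, a maximally entangled pure state has $K=n_A$, so by part~(a) $D_\Bu^M=2-2/\sqrt{n_A}=D_\mmax$. This establishes Axiom~(v); combined with part~(a) and Sec.~\ref{sec-def_geo_discords_proper} it completes the proof that $D_\Bu^M$ is a proper measure satisfying (i)--(v).
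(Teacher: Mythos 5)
Your part (a) is correct and essentially reproduces the paper's argument: for a pure state all the conditional operators collapse to $\eta_i\rho_i=\eta_i\ketbra{\Psi}{\Psi}$, so Eq.~(\ref{eq-max_fidelity_meas_ind_geo_disc}) reduces to maximizing $\sum_i\bra{\alpha_i}\rho_A\ket{\alpha_i}^2$, which equals $\tr\rho_A^2=K(\ket{\Psi})^{-1}$ with the optimum attained in the eigenbasis of $\rho_A$. Your route to the value $D_\mmax=2-2/\sqrt{n_A}$ in part (b) via the Streltsov--Adesso--Piani--Bru{\ss} result (maximality on pure states for any measure obeying Axiom (iii)) is a legitimate alternative to the paper's derivation, which instead obtains $F(\rho,{\cal LM}_\rho)\geq 1/n_A$ directly from the trace inequality $\|\sum_i X_i\|_{\tr}\leq\sqrt{n_A}\,\tr\sqrt{\sum_i|X_i|^2}$ applied to $X_i=\eta_i\rho_i$.

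The rigidity statement in part (b) is where your proposal has a genuine gap. Your ``cleaner'' argument invokes the bound $D_\Bu^M\leq g^{-1}\circ h(D_\Bu^G)$, but that bound is established only for $n_A=2$ (it is Eq.~(\ref{eq-bounds_Bu_geodisc_measdisc}), derived from $D_\Bu^R=g(D_\Bu^G)$, which is a qubit-$A$ identity); for $n_A=3$ one checks $g^{-1}\circ h(D_\mmax)=2-\tfrac{2}{3}\sqrt{5}<D_\mmax$, so such a bound cannot even hold in general. The bound that \emph{is} available for arbitrary $n_A$, namely $D_\Bu^M\leq g(D_\Bu^G)$ from Theorem~\ref{theo-comparison_meas_ind_geo_and_geo_discord_bures}, is not tight at the maximum --- the paper itself notes $g(2-2/\sqrt{n_A})=2-2/n_A>D_\mmax$ --- so $D_\Bu^M(\rho)=D_\mmax$ only yields $D_\Bu^G(\rho)\geq 2-2n_A^{-1/4}$, not $D_\Bu^G(\rho)=D_\mmax$. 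Your first sketch (operator Jensen plus an equality analysis) is not carried out. The paper's actual mechanism is the equality case of the trace inequality above: saturation of $F(\rho,{\cal LM}_\rho)=1/n_A$ forces $\eta_i\rho_i$ to be independent of $i$ for \emph{every} orthonormal basis $\{\ket{\alpha_i}\}$, whence $\eta_i=1/n_A$ and $\rho_i=\rho$ for all $i$, and one then concludes maximal entanglement of formation by the discrimination-theoretic argument of Ref.~\cite{Spehner2013}. Separately, your ``if'' direction only treats pure maximally entangled states; when $n_B\geq 2n_A$ there exist mixed states with $\EoF(\rho)=\ln n_A$, and for these you need the sandwich $D_\Bu^G(\rho)\leq D_\Bu^M(\rho)\leq D_\mmax$ together with the fact that $D_\Bu^G$ already satisfies Axiom (v), which is how the paper closes that case.
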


Quite remarkably, the post-measurement state 
$[\rho_\Psi]_{\rm  p.m.}^\opt \in {\cal LM}_\rho$ which is closest to the pure state
$\ket{\Psi}$ is the same for the   Hellinger, Bures, trace, and
Hilbert-Schmidt distances. The explicit expression of this state is given by
Eq.~(\ref{eq-closest_states_meas_ind_geo_disc}). This is a consequence of
Theorems~\ref{eq-theo_Hel_mes_ind_geo_disc_pure_states} and~\ref{theo-Bures_meas_ind_disc_pure_states} above and of 
Theorem~3.3. of Ref.~\cite{Piani2014} for the three first distances. For the Hilbert-Schmidt distance,
this follows from Eqs. (\ref{eq-HS_geo_discord}) and
(\ref{eq-closest_states_meas_ind_geo_disc_HS}) and from the bound in
Eq.~(\ref{eq-almost_finished}) below. 

Comparing Eqs.~(\ref{eq-geo_disc_pure_state}) and~(\ref{eq-meas_ind_geo_disc_pure_states}) we find that, as for the case
of the Hellinger distance, the trivial bound
 $D_\Bu^G(\ket{\Psi} ) \leq D_\Bu^M (\ket{\Psi})$ is strict  for all non-maximally
entangled pure states.

Let us also stress that statement (b) holds as well for the Bures geometric
discord $D_\Bu^G$, which has the same maximum value $D_\mmax$ when
$n_A \leq n_B$ (see Sec.~\ref{sec-Bures_geometric_disc}).

\proof We first consider the case of pure states: setting $\rho = \ketbra{\Psi}{\Psi}$ in Eq.~(\ref{eq-max_fidelity_meas_ind_geo_disc}), we have
\begin{equation}
F ( \ket{\Psi} ,{\cal LM}_\Psi ) = \max_{ \{ \ket{\alpha_i} \}} \sum_{i=1}^{n_A} \| \beta_i \|^4
\end{equation}
with $\ket{\beta_i} = \braket{\alpha_i}{\Psi}$ as before. Thanks to the
 first identity in
 Eq.~(\ref{eq-proof-meas_ind_geo_disc_pure_states2}) and to the Cauchy-Schwarz inequality, we also have
\begin{eqnarray} \label{eq-almost_finished}
\nonumber
\sum_i \| \beta_i \|^4    =  \sum_{i,j,k} \mu_j \mu_k | \braket{\alpha_i}{\varphi_j} |^2 | \braket{\alpha_i}{\varphi_k} |^2
& \leq  &  \biggl( \sum_{i,j,k} \mu_j^2  | \braket{\alpha_i}{\varphi_j} |^2 | \braket{\alpha_i}{\varphi_k} |^2 \Biggr)^\onehalf
 \biggl( \sum_{i,j,k} \mu_k^2  | \braket{\alpha_i}{\varphi_j} |^2 | \braket{\alpha_i}{\varphi_k} |^2 \Biggr)^\onehalf
\\
& \leq & \sum_{j=1}^n \mu_j^2 = K ( \ket{\Psi} )^{-1} \; .
\end{eqnarray}
The bound is saturated by taking $\ket{\alpha_i} = \ket{\varphi_i}$
for $i=1,\ldots, n$, hence the maximum of the l.h.s. coincides with $K ( \ket{\Psi} )^{-1}$.
The statement that $D_\Bu^M$ is a {\it bona fide} measure of quantum
correlations follows from the results of
Section~\ref{sec-def_geo_discords_proper} and the fact that $K ( \ket{\Psi} )$ is an entanglement monotone.

We now consider the case of mixed states.
The statement (b) follows from Eq.~(\ref{eq-max_fidelity_meas_ind_geo_disc}) and the following trace inequality: for any finite family of operators
$\{ X_i\}_{i=1}^{n_A}$,
\begin{equation} \label{eq-trace_inequality}
\biggl\| \sum_{i=1}^{n_A} X_i \biggr\|_{\tr} \leq \sqrt{n_A} \tr \sqrt{ \sum_{i=1}^{n_A} |X_i|^2} \; ,
\end{equation}
with equality  if and only if all $X_i$ are equal. This inequality is
a consequence of the operator monotonicity of the square root function
(see e.g. Refs.~\cite{Carlen,Bathia})
and of the operator bound $|\sum_{i=1}^{n_A} X_i |^2 \leq n_A
\sum_{i=1}^{n_A} |X_i|^2$, which in turn follows from
$X_i^\dagger X_j + X_j^\dagger  X_i \leq | X_i|^2 + |X_j|^2$.
Taking $X_i = \eta_i \rho_i$ in Eq.~(\ref{eq-trace_inequality}), using
Eq.~(\ref{eq-max_fidelity_meas_ind_geo_disc}), and recalling that
$\sum_i \eta_i \rho_i = \rho$, one finds that $F ( \rho ,{\cal LM}_\rho) \geq 1/n_A$. Hence
$D^M_\Bu ( \rho) \leq D_\mmax$. Together with the trivial bound
$D^G_\Bu ( \rho) \leq D^M_\Bu (\rho)$ and the identity $D^G_\Bu ( \rho) = D_\mmax$ for maximally
entangled states (as $D_\Bu^G$ satisfies Axiom (v)),  this implies that $D^M_\Bu ( \rho) = D_\mmax$ on
such states. Conversely,  let us consider a state $\rho$ such that
 $D^M_\Bu ( \rho) = D_\mmax$, \ie , $F ( \rho ,{\cal LM}_\rho) =1/n_A$.
Then the inequality in Eq.~(\ref{eq-trace_inequality}) with $X_i =
\eta_i \rho_i$ is saturated, so that
$\eta_i \rho_i$ is independent of $i$, for any \ONB 
$\{ \ket{\alpha_i}\}$. One deduces from  the relations
$\tr \rho_i = 1 =\sum_i \eta_i $ and
$\sum_i \eta_i \rho_i = \rho$  that $\eta_i = 1/n_A$ and $\rho_i=\rho$
for any $i=1,\ldots, n_A$ and any basis $\{ \ket{\alpha_i}\}$.
By using the same arguments as in the proof of the Proposition following Theorem~3 in Ref.~\cite{Spehner2013},
one concludes that $\rho$ is a maximally entangled state according to
the entanglement of formation. More specifically, $\rho$ is a convex combination of maximally
entangled pure states $\ket{\Psi_k}$, whose  expression is provided by
Eq.~(\ref{eq-max_entangled_state}) below and which satisfy the orthogonality
conditions given after this equation. Note that such maximally
entangled states are not necessarily pure if $n_B \geq 2 n_A$. 
\finpro

\vspace{1mm}

\begin{theorem} \label{theo-comparison_meas_ind_geo_and_geo_discord_bures}
The Bures geometric discord $D_\Bu^G$ and the Bures measurement-induced geometric
discord $D_\Bu^M$ satisfy a bound analogous to that established in
Theorem~\ref{theo-comparison_meas_ind_geo_and_geo_discord_Hel} for
$D^G_\Hel$ and $D^M_\Hel$, namely:
\begin{equation} \label{eq-meas_ind_geo_and_geo_discord_buresbis}
 D^G_\Bu ( \rho) \leq D^M_\Bu ( \rho) \leq g( D^G_\Bu ( \rho) ) \; ,
\end{equation}
with $g(d) = 2 d - d^2/2$.
\end{theorem}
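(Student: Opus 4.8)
The plan is to mirror exactly the strategy used for the Hellinger case in Theorem~\ref{theo-comparison_meas_ind_geo_and_geo_discord_Hel}, but working with the fidelity-based expressions rather than the square-root traces. The first (left) inequality $D^G_\Bu(\rho) \le D^M_\Bu(\rho)$ is immediate: the set ${\cal LM}_\rho$ of post-measurement states is contained in the set $\classQ$ of classical-quantum states, so the minimal Bures distance to $\classQ$ is no larger than the minimal Bures distance to ${\cal LM}_\rho$. Equivalently, in terms of fidelities, $F(\rho,\classQ) \ge F(\rho,{\cal LM}_\rho)$. The substance is in the second (right) inequality, which I would recast in terms of fidelities using $D^G_\Bu(\rho) = 2 - 2\sqrt{F(\rho,\classQ)}$ and $D^M_\Bu(\rho) = 2-2\sqrt{F(\rho,{\cal LM}_\rho)}$ together with Eqs.~(\ref{eq-variationnal_formula_bis}) and~(\ref{eq-max_fidelity_meas_ind_geo_disc}). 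A short computation shows that $D^M_\Bu \le g(D^G_\Bu) = 2D^G_\Bu - \tfrac12 (D^G_\Bu)^2$ is equivalent to the fidelity inequality $\sqrt{F(\rho,{\cal LM}_\rho)} \ge F(\rho,\classQ)$, i.e.\ $F(\rho,{\cal LM}_\rho) \ge F(\rho,\classQ)^2$.

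So the real task is: for every orthonormal basis $\{\ket{\alpha_i}\}$ of $\Hh_A$, with $\eta_i$ and $\rho_i$ as in Eq.~(\ref{eq-state_Q_discrimination}), to show that
\begin{equation} \label{eq-plan_key}
\tr \sqrt{\sum_{i=1}^{n_A} \eta_i^2 \rho_i^2} \; \ge \; P_{\rm S}^{\,\rm{opt\,v.N.}} ( \{ \rho_i,\eta_i \}) \; ,
\end{equation}
and then maximize both sides over $\{\ket{\alpha_i}\}$. This is precisely the statement that the ``square-root measurement'' (pretty-good measurement) bound of Barnum--Knill~\cite{Barnum2002} can be read in the other direction: the left-hand side of~(\ref{eq-plan_key}) is exactly the square root of the success probability $P_{\rm S}^{\rm sqm}$ of discriminating the $\rho_i$ with the square-root measurement — indeed $P_{\rm S}^{\rm sqm} = (\tr \sqrt{\sum_i \eta_i^2\rho_i^2})^2$ — and Barnum--Knill gives $P_{\rm S}^{\,\rm{opt}} \le \sqrt{P_{\rm S}^{\rm sqm}}$. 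The same bound was already invoked in the proof of Theorem~\ref{theo-bounds_Bures_geo} (the first inequality in Eq.~(\ref{eq-inequality_square_root_meas})), so I would reuse it verbatim. Since $P_{\rm S}^{\,\rm{opt\,v.N.}} \le P_{\rm S}^{\,\rm{opt}}$, inequality~(\ref{eq-plan_key}) follows.

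Taking the maximum over all orthonormal bases $\{\ket{\alpha_i}\}$ on both sides of~(\ref{eq-plan_key}), the left side gives $\sqrt{F(\rho,{\cal LM}_\rho)}$ by Eq.~(\ref{eq-max_fidelity_meas_ind_geo_disc}), and the right side gives $F(\rho,\classQ)$ by Eq.~(\ref{eq-variationnal_formula_bis}) — note one must be slightly careful that the maximizing bases need not coincide, but taking the max of the left side over $\{\ket{\alpha_i}\}$ only makes it larger, while the right side is maximized to exactly $F(\rho,\classQ)$, so the inequality survives. This yields $\sqrt{F(\rho,{\cal LM}_\rho)} \ge F(\rho,\classQ)$, hence $D^M_\Bu(\rho) \le g(D^G_\Bu(\rho))$. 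The main obstacle, such as it is, is bookkeeping: making sure the algebra translating $D^M_\Bu \le g(D^G_\Bu)$ into the clean fidelity inequality is done correctly (using $g^{-1}(d) = 2-2\sqrt{1-d/2}$ and the monotonicity of $g$), and correctly identifying $\tr\sqrt{\sum_i\eta_i^2\rho_i^2}$ with the root of the square-root-measurement success probability so that the Barnum--Knill bound applies directly. There is no genuinely new inequality to prove beyond what was already used for Theorem~\ref{theo-bounds_Bures_geo}.
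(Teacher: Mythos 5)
Your reduction of the statement to the fidelity inequality $F(\rho,{\cal LM}_\rho) \ge F(\rho,\classQ)^2$, i.e.\ to Eq.~(\ref{eq-plan_key}), is correct and is exactly how the paper proceeds; the first inequality and the maximization bookkeeping are also fine. The gap is in the central claim that Eq.~(\ref{eq-plan_key}) is ``the Barnum--Knill bound read in the other direction'' with no new inequality needed. Your identification $P_{\rm S}^{\rm sqm} = (\tr\sqrt{\sum_i\eta_i^2\rho_i^2})^2$ is false. For the ensemble of Eq.~(\ref{eq-state_Q_discrimination}) the pretty-good-measurement POVM elements are $\rho^{-1/2}\eta_i\rho_i\rho^{-1/2} = \ketbra{\alpha_i}{\alpha_i}\otimes\idty$, so its success probability is $P_{\rm S}^{\rm sqm} = \sum_i\tr_B\bra{\alpha_i}\sqrt{\rho}\ket{\alpha_i}^2$ — precisely the quantity entering $D^G_\Hel$ — whereas $(\tr\sqrt{\sum_i\eta_i^2\rho_i^2})^2 = F(\rho,\rho_{\rm p.m.}^{\{\Pi_i^A\}})$ is the Uhlmann fidelity to the post-measurement state, a different object. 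Consequently, reusing Barnum--Knill ``verbatim'' only reproduces the second inequality of Eq.~(\ref{eq-inequality_square_root_meas}), i.e.\ the bound $g^{-1}(D^G_\Hel)\le D^G_\Bu$ of Theorem~\ref{theo-bounds_Bures_geo}; it does not bound $F(\rho,\classQ)$ by $\sqrt{F(\rho,{\cal LM}_\rho)}$.

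What is actually needed is the Ogawa--Nagaoka estimate $P_{\rm S}^{\opt}(\{\rho_i,\eta_i\}) \le \tr\sqrt{\sum_i\eta_i^2\rho_i^2}$, which is a genuinely different inequality from $P_{\rm S}^{\opt}\le\sqrt{P_{\rm S}^{\rm sqm}}$ (for generic ensembles the two right-hand sides are incomparable: e.g.\ for $\rho_1=\ketbra{0}{0}$, $\rho_2=\idty/2$, $\eta_1=\eta_2=1/2$ one finds $\sqrt{P_{\rm S}^{\rm sqm}}\approx 0.8165$ while $\tr\sqrt{\sum_i\eta_i^2\rho_i^2}\approx 0.809$). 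This is the ingredient the paper invokes, together with $P_{\rm S}^{\,\rm{opt\,v.N.}}\le P_{\rm S}^{\opt}$ and a continuity argument replacing $\rho$ by $\rho_\varepsilon=(1-\varepsilon)\rho+\varepsilon\,\idty/(n_An_B)$ to cover non-invertible states. So your proof plan stands once you replace the Barnum--Knill step by the Ogawa--Nagaoka bound, but as written the key step does not go through, and the assertion that ``there is no genuinely new inequality to prove'' is incorrect.
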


Let us observe that the lower and upper bounds on $D^M_\Bu$ in
Eq.~(\ref{eq-meas_ind_geo_and_geo_discord_buresbis}) are identical to
the bounds  on $D_\Hel^G$ in Eq.~(\ref{eq-theo3bis}).
It is clear that the upper bound is not optimal
for strongly quantum-correlated states (in fact, one has
$g(2-2/\sqrt{n_A})= 2- 2/n_A > D_\mmax$). On the other hand, this  upper bound
is optimal in the limit of almost non-discordant states. Indeed,
consider a pure state $\ket{\Psi_\varepsilon}$  with maximum Schmidt
coefficient $\mu_\mmax = 1- \varepsilon$ with
$0 \leq \varepsilon \ll 1$. From Eqs.~(\ref{eq-Hellinger_geo_discord_for_pure_states}), 
(\ref{eq-geo_disc_pure_state}), and~(\ref{eq-meas_ind_geo_disc_pure_states}) it follows that
$g ( D_\Bu^G (\ket{\Psi_\varepsilon})) = D_\Bu^M (\ket{\Psi_\varepsilon}) =D_\Hel^G ( \ket{\Psi_\varepsilon } ) = 2
\varepsilon$ up to terms of order
$\varepsilon^2$. This means that the upper bounds  in
Eqs.~(\ref{eq-theo3bis})
and~(\ref{eq-meas_ind_geo_and_geo_discord_buresbis}) are saturated asymptotically
by pure states that are arbitrarily close to product states.
According to Eq.~(\ref{eq-trivial_boundsbis}), the r.h.s. of
Eq.~(\ref{eq-meas_ind_geo_and_geo_discord_buresbis}) places an upper bound on
$D_{\tr}^G(\rho)/2$ as well. However,  in this case it  is not clear
whether  the bound can be saturated asymptotically.

If subsystem $A$ is a qubit,
 inequalities stronger than the upper bounds in Eqs.~(\ref{eq-theo3bis})
and~(\ref{eq-meas_ind_geo_and_geo_discord_buresbis}) can be obtained from
Theorems~\ref{prop_bound_D_Bu^R_D_Hel^R}
and~\ref{theo-upper_bound_D_Bu^M_intermsof_D_Bu^R} below, by inserting the expressions
(\ref{eq-disc_resp_Hel_n=2,3}) and (\ref{burrg})  into
Eqs.~(\ref{eq-bound_D_Bu^R_D_Hel^R}) and~(\ref{eq-upper_bound_D_Bu^M_intermsof_D_Bu^R}). This yields the improved
tight bounds reported in Section~\ref{eq-results_bounds_qubit}, given
in Eqs.~(\ref{eq-bounds_Bu_geodisc_measdisc}) and~(\ref{eq-bounds_geo_disc_Bu_Hel}).

\proof
The first inequality is trivial.
When the density matrix $\rho$ is invertible, the second inequality follows by
combining Eqs.~(\ref{eq-variationnal_formula_bis}) and~(\ref{eq-max_fidelity_meas_ind_geo_disc}) with the upper bound  by
Ogawa and Nagaoka on the maximum probability of success  in quantum state discrimination~\cite{Ogawa99}:
\begin{equation} \label{eq-upper_bound_P_S}
P^\opt_{\rm S} ( \{ \rho_i, \eta_i\} ) \leq  \tr \sqrt{\sum_{i=1}^{n_A} \eta_i^2 \rho_i^2} \; \; .
\end{equation}
When $\rho$ is not invertible, we may approximate it by the invertible
density matrix
$\rho_\varepsilon = (1- \varepsilon) \rho + \varepsilon\,\idty/(n_A n_B)$
with $\varepsilon \in ( 0,1]$ and obtain the desired result
by continuity, letting $\varepsilon \rightarrow 0$. It is worth
noting that it is also known that the maximum
probability of success is bounded from below by the square of the
r.h.s. of Eq.~(\ref{eq-upper_bound_P_S}), see Ref.~\cite{Tyson09}. However, in our context
such bound is not very useful, as it yields the trivial inequality $D_\Bu^G ( \rho ) \leq D_\Bu^M ( \rho )$.
\finpro

\section{Discord of response: computable and {\em bona fide} measure of quantum correlations}
\label{sec-disc_of_response}

In this Section we show that whenever the reference party $A$ is a
qubit or a qutrit, the Hellinger discord of response is a simple
function of the Hellinger geometric discord, and the same holds true in
the Bures case when $A$ is a qubit
(Theorems~\ref{prophel} and~\ref{propbur}).
As a consequence, the Hellinger discord of response is computable for
all qubit-qudit states, as anticipated in Sec.~\ref{sec-computability}.

We also derive lower and upper bounds on $D^R_\Hel$ valid for arbitrary subsystems $A$ and $B$, first in terms
of $D_\Hel^G$ (Theorem~\ref{prophel}) and then in terms of
$D_\Bu^R$ (Theorem~\ref{prop_bound_D_Bu^R_D_Hel^R}).
Moreover, we obtain an upper bound on the Bures measurement-induced geometric discord
in terms of the Bures discord of response
(Theorem~\ref{theo-upper_bound_D_Bu^M_intermsof_D_Bu^R}).
Finally, we prove that for the trace distance, all geometric measures $D_{\tr}^G$, $D_{\tr}^M$,
and $D_{\tr}^R$ coincide whenever $A$ is a qubit, and we show that the Hilbert-Schmidt discord of response
$D^R_\HS$ is always smaller or equal to the trace discord of response $D_{\tr}^R$ (Theorem~\ref{qtmeqdisc}).

\subsection{Hellinger discord of response: {\em bona fide} and computable measure of quantum correlations} \label{sec-Bures_Hellinger}

The following theorem yields that the Hellinger discord of response
enjoys a simple, exact relation to the Hellinger geometric discord
whenever $A$ is a qubit or a qutrit. For subsystems $A$ with 
space dimensions $n_A >3$, there is no direct relation between these
two measures (see  Appendix~\ref{app-proof_theo5-7}), however we
are able to derive lower and upper bounds on $D^R_\Hel$ in terms of $D^G_{\Hel}$.

\begin{theorem}\label{prophel}
The Hellinger discord of response is bounded in terms of the Hellinger geometric discord as follows:
\begin{equation} \label{bound_disc_resp_Hel}
 \sin^2 \left( \frac{\pi}{n_A} \right) g ( D^{G}_{\Hel}  (\rho) ) \leq
 D_{\Hel}^{R}(\rho) \leq g ( D^{G}_{\Hel}  (\rho) ) \; ,
\end{equation}
with $g(d) = 2 d - d^2/2$.
If subsystem $A$ is a qubit or a qutrit then the first inequality is an equality, that is,
\begin{equation}
D^{R}_{\Hel}(\rho)=\left\{ \begin{array}{ll}
          g ( D^{G}_{\Hel}  (\rho) )  & \mbox{if $n_A=2$,}  \\[6pt]
       \frac{3}{4}  g ( D^{G}_{\Hel}  (\rho) )   & \mbox{if $n_A = 3$.}  \end{array} \right.
\label{eq-disc_resp_Hel_n=2,3}
\end{equation}
\end{theorem}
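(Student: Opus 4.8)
The plan is to read everything off the two variational formulas~(\ref{eq-formula_Hellinger_discord_of_response}) and~(\ref{eq-formula_Hellinger_geo_discordbis}). Fix an \ONB $\{\ket{\alpha_i}\}$ of $\Hh_A$ and write $M_{ij}\equiv{\tr}_B\big|\bra{\alpha_i}\sqrt{\rho}\ket{\alpha_j}\big|^2\geq 0$. I would first record the two facts that carry the whole argument: completeness of $\{\ket{\alpha_i}\}$ gives $\sum_{i,j}M_{ij}={\tr}\,\rho=1$, so that $\sum_{i\neq j}M_{ij}=1-\sum_i M_{ii}$; and Theorem~\ref{eq-theo_geo_disc_Hell_mixed_states}(b), i.e. Eq.~(\ref{eq-formula_Hellinger_geo_discordbis}), gives $\max_{\{\alpha_i\}}\sum_i M_{ii}=\big(1-\tfrac12 D_\Hel^G(\rho)\big)^2=:\lambda^2$, together with the algebraic identity $g(x)=2\big(1-(1-x/2)^2\big)$, hence $g\big(D_\Hel^G(\rho)\big)=2(1-\lambda^2)$. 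Since $\sin^2\!\big(\pi(i-j)/n_A\big)=0$ for $i=j$, formula~(\ref{eq-formula_Hellinger_discord_of_response}) reads $D_\Hel^R(\rho)=2\min_{\{\alpha_i\}}\sum_{i\neq j}\sin^2\!\big(\pi(i-j)/n_A\big)M_{ij}$.

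The crux will be controlling the trigonometric weights on the off-diagonal. For $i\neq j$ the weight depends only on $k\equiv(i-j)\bmod n_A\in\{1,\dots,n_A-1\}$, and $k\mapsto\sin^2(\pi k/n_A)$ is increasing for $k\leq n_A/2$ and symmetric about $n_A/2$, so its minimum over this range sits at $k=1$ and $k=n_A-1$; hence $\sin^2(\pi/n_A)\leq\sin^2\!\big(\pi(i-j)/n_A\big)\leq 1$ for every off-diagonal pair. Because $M_{ij}\geq 0$, this sandwiches the sum defining $D_\Hel^R$: for every basis $\{\ket{\alpha_i}\}$,
\begin{equation}
\sin^2\!\Big(\frac{\pi}{n_A}\Big)\Big(1-\sum_i M_{ii}\Big)\ \leq\ \sum_{i\neq j}\sin^2\!\Big(\frac{\pi(i-j)}{n_A}\Big)M_{ij}\ \leq\ 1-\sum_i M_{ii}\,.
\end{equation}

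Then the two bounds of Eq.~(\ref{bound_disc_resp_Hel}) drop out. For the upper bound I would evaluate at an \ONB $\{\ket{\alpha_i^\opt}\}$ attaining the maximum in~(\ref{eq-formula_Hellinger_geo_discordbis}), where $\sum_i M_{ii}=\lambda^2$; the right inequality above then bounds the summand there by $1-\lambda^2$, and since $D_\Hel^R(\rho)/2$ is a minimum over bases we obtain $D_\Hel^R(\rho)\leq 2(1-\lambda^2)=g\big(D_\Hel^G(\rho)\big)$. For the lower bound, the left inequality together with $\sum_i M_{ii}\leq\lambda^2$, which holds for \emph{every} basis, gives $\sum_{i\neq j}\sin^2(\cdots)M_{ij}\geq\sin^2(\pi/n_A)(1-\lambda^2)$ for every basis; minimizing and doubling yields $D_\Hel^R(\rho)\geq\sin^2(\pi/n_A)\,g\big(D_\Hel^G(\rho)\big)$, which completes~(\ref{bound_disc_resp_Hel}). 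For the equality cases $n_A=2,3$ one notes that the sandwich becomes an equality precisely because the weight is then constant off the diagonal: for $n_A=2$ the only residue is $k=1$ with $\sin^2(\pi/2)=1$, and for $n_A=3$ the residues are $k\in\{1,2\}$ with $\sin^2(\pi/3)=\sin^2(2\pi/3)=\tfrac34$. Writing $c$ for this common value, one has $\sum_{i\neq j}\sin^2(\cdots)M_{ij}=c\big(1-\sum_i M_{ii}\big)$ identically in the basis, so minimizing over bases gives $D_\Hel^R(\rho)=2c\,(1-\lambda^2)=c\,g\big(D_\Hel^G(\rho)\big)$, which is exactly~(\ref{eq-disc_resp_Hel_n=2,3}).

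I do not anticipate a genuine obstacle in this argument; the only points needing care are (i) the monotonicity fact pinning $\min_k\sin^2(\pi k/n_A)$ at $k=1$ — note it is the failure of $\sin^2(\pi k/n_A)$ to be constant for $k\in\{1,\dots,n_A-1\}$ once $n_A\geq 4$ that makes the two bounds in~(\ref{bound_disc_resp_Hel}) genuinely separate for larger reference dimension — and (ii) the bookkeeping of where the minimization over bases is performed, since the upper bound needs the sandwich only at one maximizing basis while the lower bound needs it at every basis.
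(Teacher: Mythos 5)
Your proof is correct and follows essentially the same route as the paper's (Appendix A.1): the whole argument rests on the same two facts, namely the sandwich $\sin^2(\pi/n_A)\leq\sin^2(\pi(i-j)/n_A)\leq 1$ with equality of the weights off the diagonal when $n_A=2,3$, and the identity $\sum_{i\neq j}M_{ij}=1-\sum_i M_{ii}$ combined with Eq.~(\ref{eq-formula_Hellinger_geo_discordbis}). The only difference is packaging: the paper first proves the bounds for $D^R_\HS$ versus $D^G_\HS$ and then substitutes $\sqrt{\rho}$ via Theorem~\ref{theo-rel_geo_disc_Hel_HS}, whereas you carry out the identical computation directly on the matrix elements of $\sqrt{\rho}$.
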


The proof of this theorem is reported in
Appendix~\ref{app-proof_theo5-7}. 

By combining
Eqs.~(\ref{eq-meas_ind_geo_and_geo_discord_Hel}) and~(\ref{bound_disc_resp_Hel}) one obtains the
upper bound $D_\Hel^M (\rho) \leq D_{\Hel}^R(\rho)/\sin^2(\pi/n_A)$ on the
Hellinger measurement-induced geometric discord, as reported in
Table~\ref{tab3}. Let us also recall that the closed analytical expression of
$D^{R}_{\Hel}(\rho)$ for arbitrary qubit-qudit states $\rho$ in Eq.~(\ref{eq-closed_formula_discresp_Hel})
can be obtained from Eq.~(\ref{eq-disc_resp_Hel_n=2,3}) and the expression of the Hellinger
geometric discord given by Eq.~(\ref{eq-explicit_formula_geo_disc_2_qubits}).

Interestingly, the Hellinger discord of response provides
lower and upper bounds on the Bures discord of response, which is harder to compute. Optimal bounds are provided by the
following theorem:

\begin{theorem}\label{prop_bound_D_Bu^R_D_Hel^R}
For subsystems $A$ and $B$ with arbitrary space dimensions, one has
\begin{equation}
\label{eq-bound_D_Bu^R_D_Hel^R}
1 - \sqrt{1-D_\Hel^R(\rho)} \leq D_\Bu^R ( \rho) \leq D_\Hel^R ( \rho) \; .
\end{equation}
The first bound is saturated for pure states.
\end{theorem}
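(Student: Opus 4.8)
The plan is to rephrase the two inequalities as statements about the ``overlaps'' $1-D^R_\Bu(\rho)$ and $1-D^R_\Hel(\rho)$. Write $M=\sqrt{\rho}$ and abbreviate $U_A\otimes\idty$ by $U$, with $U_A$ running over $\Uu_\Lambda$. From the definition~(\ref{def:quantumn}) (with $\Nn=2$ for both metrics), the identity $\sqrt{U\rho U^\dagger}=UMU^\dagger$, and the trace-norm expression for the fidelity in Eq.~(\ref{eq-Burea_dist}), one gets
\begin{equation}
1-D^R_\Bu(\rho)=\max_{U}\big\|\,UMU^\dagger M\,\big\|_{\tr}\quad,\qquad 1-D^R_\Hel(\rho)=\max_{U}\tr\big(UMU^\dagger M\big)\;,
\end{equation}
both maxima being attained since $\Uu_\Lambda$ is compact. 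The upper bound $D^R_\Bu\le D^R_\Hel$ is then immediate: for every fixed $U$ one has $\|UMU^\dagger M\|_{\tr}\ge|\tr(UMU^\dagger M)|=\tr(UMU^\dagger M)$, the last quantity being the trace of a product of two positive operators, hence real and non-negative; taking the maximum over $U$ gives $1-D^R_\Bu(\rho)\ge1-D^R_\Hel(\rho)$. (Equivalently, one may simply invoke $d_\Bu\le d_\Hel$, Eq.~(\ref{buchel}).)

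For the first inequality, note that, since $0\le D^R_\Bu(\rho)\le1$, it is equivalent to $\big(1-D^R_\Bu(\rho)\big)^2\le1-D^R_\Hel(\rho)$. I would fix a maximiser $U^\star=U_A^\star\otimes\idty$ of the Bures overlap; peeling off the leading unitary, $1-D^R_\Bu(\rho)=\|U^\star M U^{\star\dagger}M\|_{\tr}=\|M U^{\star\dagger}M\|_{\tr}$. Using the dual description of the trace norm, $\|X\|_{\tr}=\max_W|\tr(WX)|$ over unitaries $W$ on $\Hh_A\otimes\Hh_B$, one picks a unitary $W$ with $\|M U^{\star\dagger}M\|_{\tr}=\tr\big(WMU^{\star\dagger}M\big)$. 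The key manoeuvre is then to split $M=M^{1/2}M^{1/2}$ and apply the Cauchy--Schwarz inequality for the Hilbert--Schmidt inner product:
\begin{equation}
1-D^R_\Bu(\rho)=\tr\Big(\big(M^{1/2}WM^{1/2}\big)\big(M^{1/2}U^{\star\dagger}M^{1/2}\big)\Big)\le\big\|M^{1/2}WM^{1/2}\big\|_{\HS}\;\big\|M^{1/2}U^{\star\dagger}M^{1/2}\big\|_{\HS}\;.
\end{equation}
The first factor is harmless: $\|M^{1/2}WM^{1/2}\|_{\HS}^2=\tr\big((W^\dagger MW)M\big)\le\|W^\dagger MW\|_{\HS}\,\|M\|_{\HS}=\|M\|_{\HS}^2=\tr\rho=1$, again by Cauchy--Schwarz and unitary invariance of the Hilbert--Schmidt norm. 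The second factor is exactly a Hellinger overlap: $\|M^{1/2}U^{\star\dagger}M^{1/2}\|_{\HS}^2=\tr\big(U^{\star\dagger}M U^\star M\big)=\tr\big((U_A^{\star\dagger}\otimes\idty)\sqrt{\rho}\,(U_A^\star\otimes\idty)\sqrt{\rho}\big)$, which is one admissible value in the maximisation defining $1-D^R_\Hel(\rho)$ (recall $U_A^{\star\dagger}\in\Uu_\Lambda$, the spectrum $\Lambda$ being conjugation-invariant), hence $\le1-D^R_\Hel(\rho)$. Combining the three displays yields $1-D^R_\Bu(\rho)\le\sqrt{1-D^R_\Hel(\rho)}$, that is, $D^R_\Bu(\rho)\ge1-\sqrt{1-D^R_\Hel(\rho)}$.

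Finally, saturation on pure states follows by substitution into the known pure-state formulas recalled in Section~\ref{sec-def_geo_discords_proper}: Eqs.~(\ref{eq-Bures_disc_resp_pure_states}) and~(\ref{eq-def-entanglement_of_response}) give $D^R_\Bu(\ket{\Psi})=1-(1-E^R(\ket{\Psi}))^{\onehalf}$ and $D^R_\Hel(\ket{\Psi})=E^R(\ket{\Psi})$, so $1-\sqrt{1-D^R_\Hel(\ket{\Psi})}=1-\sqrt{1-E^R(\ket{\Psi})}=D^R_\Bu(\ket{\Psi})$, and the first bound is an equality. The step I expect to be the crux is the one around the second display: one must recognise that the trace norm $\|M U^{\star\dagger}M\|_{\tr}$ should be opened via its unitary dual and then broken by Cauchy--Schwarz in precisely the way that makes one Hilbert--Schmidt factor reproduce the Hellinger overlap $\tr(U^{\star\dagger}MU^\star M)$ while the other stays bounded by $1$; this is the same mechanism underlying the Barnum--Knill and Holevo-type relations between trace distance and fidelity. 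Once it is set up, everything else is routine bookkeeping, and the sharpness of the constant is certified by the pure-state evaluation.
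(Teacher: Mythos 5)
Your proof is correct and follows essentially the same route as the paper's: both express $1-D^R_\Bu$ and $1-D^R_\Hel$ as maxima of $\|\sqrt{\rho}\,U\sqrt{\rho}\|_{\tr}$ and $\tr(\sqrt{\rho}\,U\sqrt{\rho}\,U^\dagger)$, open the trace norm via the polar/dual unitary, split $\sqrt{\rho}=\rho^{1/4}\rho^{1/4}$, and apply Hilbert--Schmidt Cauchy--Schwarz so that one factor is bounded by $\tr\rho=1$ and the other reproduces the Hellinger overlap, with saturation on pure states read off from Eqs.~(\ref{eq-Bures_disc_resp_pure_states}) and~(\ref{eq-def-entanglement_of_response}). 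The only differences are cosmetic (you fix a maximiser and phrase the polar decomposition as the unitary dual of the trace norm).
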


The numerical results reported in
Fig.~\ref{responsegrid}(c) indicate that the second bound
is almost tight for two-qubit systems.

\proof
The second inequality in the theorem is a
trivial consequence of the fact that the Bures distance is bounded from
above by the Hellinger distance, see Eq.~(\ref{buchel}).
In order to prove the first inequality, we exploit the
definitions of $D_\Bu^R$ and $D_\Hel^R$
(see Eqs.~(\ref{eq-Burea_dist}),~(\ref{eq-Q_Hellinger_distance}), and~(\ref{def:quantumn})) and the identities
$( U \rho \, U^\dagger )^\onehalf = U \sqrt{\rho} \, U^\dagger$ and
$F ( \rho, U \rho\,U^\dagger  ) =
 \|  \sqrt{\rho} \,U \sqrt{\rho} \|_{\tr}^2$ holding for any
unitary operator $U$. This gives 
\begin{eqnarray}
\label{eq-Bures_disc_resp_bis}
D_\Bu^R ( \rho)
& = &
1 - \max_{U_A \in \Uu_\Lambda} \bigl\|  \sqrt{\rho} \,U_A \otimes \idty\, \sqrt{\rho}
\bigr\|_{\tr} \; ,
\\
\label{eq_Hellinger_disc_resp}
D_\Hel^R ( \rho)
& = &
1 - \max_{U_A \in \Uu_\Lambda} \tr \big( \sqrt{\rho} \, U_A \otimes \idty
\sqrt{\rho} \,U_A^\dagger \otimes \idty \big) \; .
\end{eqnarray}
We now take advantage of the bound
\begin{equation} \label{eq-trace_inequality_proof_prop_bound_D_Bu^R_D_Hel^R}
\big\| \sqrt{\rho} \,U_A \otimes \idty \sqrt{\rho} \bigr\|^2_{\tr}
\leq
\tr \big( \sqrt{\rho} \, U_A \otimes \idty
\sqrt{\rho} \, U_A^\dagger \otimes \idty \big) \;.
\end{equation}
This bound follows from
the identity $\| A \|_{\tr} = \tr ( V^\dagger A)$
with $V$ a unitary operator such that $A = V |A|=\sqrt{\rho} \,U_A \otimes \idty \sqrt{\rho} $ (polar
decomposition), from the Cauchy-Schwarz inequality
$|\tr ( B^\dagger C ) |^2 \leq  \tr |B|^2 \tr |C|^2$ with
$B= \rho^{\frac{1}{4}} V \rho^{\frac{1}{4}}$ and
$C= \rho^{\frac{1}{4}} U_A  \otimes \idty \rho^{\frac{1}{4}}$, and
from $\tr |B|^2 = \tr  \sqrt{\rho} \,V^\dagger \sqrt{\rho}\, V \leq \tr \rho = 1$ (again by the Cauchy-Schwarz inequality).
Then, combining Eqs.~(\ref{eq-Bures_disc_resp_bis})-(\ref{eq-trace_inequality_proof_prop_bound_D_Bu^R_D_Hel^R}), it holds that $1 - D_\Hel^R ( \rho)  \geq  ( 1 - D_\Bu^R (\rho) )^2$. This inequality is an equality for pure states, as can easily be inferred
from Eqs.~(\ref{eq-Bures_disc_resp_pure_states}) and~(\ref{eq-def-entanglement_of_response}).
\finpro

\subsection{Bures discord of response} \label{sec-Bures_Hellinger2}

If $U_A$ is a local unitary operator acting on $\Hh_A$
with spectrum $\Lambda$ given by the roots of the unity, then
\begin{equation} \label{eq-unitaries_harm_spectrum}
U_A = \sum_{j=1}^{n_A}  e^{- \I \frac{2 \pi j}{n_A}}  \ketbra{\alpha_j}{\alpha_j}
\end{equation}
for some  orthonormal basis $\{ \ket{\alpha_j} \}$ of
$\Hh_A$.
By inserting
this spectral decomposition into Eq.~(\ref{eq-Bures_disc_resp_bis})
we  obtain
\begin{equation} \label{eq-fidelity_of_response}
D_\Bu^R ( \rho)
 = 1- \max_{\{ \ket{\alpha_i}\}} \biggl\|  \sum_{j=1}^{n_A} \eta_j
 e^{- \I \frac{2\pi j}{n_A}} \rho_j \biggr\|_{\tr} 
\end{equation}
with  the states $\rho_i$ and probabilities $\eta_i$ given by
Eq.~(\ref{eq-state_Q_discrimination}).
This proves the general expression of $D_\Bu^R ( \rho)$ anticipated in
Section~\ref{sec-main_results_general_expressions_Bures_Hellinger}.

\begin{theorem}\label{propbur}
If $A$ is a qubit ($n_A=2$) and $B$ is a qudit ($n_B \geq 2$), then the Bures discord of
response is related to the Bures geometric discord by
\begin{equation}
D^R_{\Bu}  (\rho)  = g (D^G_{\Bu}  (\rho)) \; ,
\label{burrg}
\end{equation}
with $g(d) = 2 d - d^2/2$.
\end{theorem}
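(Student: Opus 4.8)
The plan is to obtain Eq.~(\ref{burrg}) by combining the two explicit qubit expressions for $D^G_\Bu$ and $D^R_\Bu$ recorded above with the Helstrom formula for two-state discrimination. First I would specialise the general expression~(\ref{eq-fidelity_of_response}) of the Bures discord of response to $n_A=2$. In that case the phase factors collapse, $e^{-\I 2\pi j/n_A}=(-1)^j$, so the operator inside the trace norm reduces to $\eta_2\rho_2-\eta_1\rho_1$, giving $D_\Bu^R(\rho)=1-\max_{\{\ket{\alpha_i}\}}\|\eta_2\rho_2-\eta_1\rho_1\|_{\tr}$, where the states $\rho_i$ and probabilities $\eta_i$ depend on the orthonormal basis $\{\ket{\alpha_i}\}_{i=1,2}$ of $\Hh_A$ through Eq.~(\ref{eq-state_Q_discrimination}).

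Next I would invoke the Helstrom formula~(\ref{eq-Helstom_formula}), which gives precisely the optimal success probability for discriminating two states by projective measurements of rank $n_B$, to rewrite $\|\eta_2\rho_2-\eta_1\rho_1\|_{\tr}=2\,P_{\rm S}^{\,\rm{opt\,v.N.}}(\{\rho_i,\eta_i\})-1$, so that $D_\Bu^R(\rho)=2-2\max_{\{\ket{\alpha_i}\}}P_{\rm S}^{\,\rm{opt\,v.N.}}(\{\rho_i,\eta_i\})$. On the other hand, Eq.~(\ref{eq-formula_Bures_geo_discord}) reads $D_\Bu^G(\rho)=2-2\max_{\{\ket{\alpha_i}\}}\sqrt{P_{\rm S}^{\,\rm{opt\,v.N.}}(\{\rho_i,\eta_i\})}$. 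Because $t\mapsto\sqrt t$ is increasing, the maximisation over bases commutes with the square root; setting $M(\rho)\equiv\max_{\{\ket{\alpha_i}\}}P_{\rm S}^{\,\rm{opt\,v.N.}}(\{\rho_i,\eta_i\})$, a maximum attained by one and the same optimal basis in both cases, I arrive at the two identities $D_\Bu^G(\rho)=2-2\sqrt{M(\rho)}$ and $D_\Bu^R(\rho)=2-2M(\rho)$.

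The final step is purely algebraic: eliminating $M(\rho)$ via $\sqrt{M(\rho)}=1-D_\Bu^G(\rho)/2$, a one-line computation gives $D_\Bu^R(\rho)=2-2(1-D_\Bu^G(\rho)/2)^2=g(D_\Bu^G(\rho))$ with $g(d)=2d-\onehalf d^2$, which is Eq.~(\ref{burrg}). I do not expect a genuine obstacle here: the argument is short, and the only points needing care are bookkeeping ones, namely the collapse of the roots of unity for $n_A=2$ and the verification that the Helstrom formula~(\ref{eq-Helstom_formula}) applies in the required form (ambiguous discrimination of the two states $\rho_i$ by rank-$n_B$ von Neumann measurements, as recalled in Sec.~\ref{sec-Bures_geometric_disc}). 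As a by-product, this computation also shows that for a qubit reference subsystem the optimal orthonormal bases entering $D^G_\Bu$ and $D^R_\Bu$ coincide.
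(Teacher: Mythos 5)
Your proposal is correct and follows essentially the same route as the paper: specialize the discord-of-response expression to $n_A=2$ so the operator inside the trace norm becomes $\eta_2\rho_2-\eta_1\rho_1$, identify this via the Helstrom formula with the optimal two-state discrimination probability appearing in the variational formula for $F(\rho,\classQ)$, and finish with the algebraic elimination of $M(\rho)$. The only difference is presentational — you spell out the monotonicity argument for commuting the maximization with the square root and note the coincidence of optimal bases, both of which the paper leaves implicit.
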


\proof
The proof relies on Eq.~(\ref{eq-variationnal_formula_bis}) and the Helstrom formula~(\ref{eq-Helstom_formula}).
Accordingly, the maximum
fidelity $F ( \rho, \classQ ) = ( 1 - D_\Bu^G ( \rho)/2)^2$ between $\rho$ and a classical-quantum state is given for $n_A=2$ by
\begin{equation}
\label{eq-max_fidelity_to_CQ_qubit}
F ( \rho, \classQ ) = \onehalf \Bigl( 1 +  \max_{\{ \ket{\alpha_i}\}} \|   \eta_2 \rho_2 - \eta_1 \rho_1 \|_{\tr} \Bigr) \; .
\end{equation}
For $n_A=2$, the two operators inside the trace norms in
Eqs.~(\ref{eq-fidelity_of_response})
and~(\ref{eq-max_fidelity_to_CQ_qubit}) coincide.
\finpro

\vspace{2mm}

An optimal upper bound on the measurement-induced geometric discord
$D_\Bu^M$ in terms of $D^R_\Bu$ is given by:

\begin{theorem}\label{theo-upper_bound_D_Bu^M_intermsof_D_Bu^R}
If $A$ is a qubit or a qutrit ($n_A=2$ or $n_A=3$) and $B$ is a qudit
($n_B \geq 2$), the Bures
measurement-induced geometric discord and the Bures discord of response satisfy
\begin{equation} \label{eq-upper_bound_D_Bu^M_intermsof_D_Bu^R}
D_\Bu^M ( \rho) \leq
2 - \frac{2}{\sqrt{n_A}} \sqrt{ 1 + (n_A-1) ( 1 - D_\Bu^R ( \rho)
  )^2} \; .
\end{equation}
The bound is saturated for pure states. For $n_A>3$, the following
weaker bound holds:
\begin{equation}  \label{eq-upper_bound_D_Bu^M_intermsof_D_Bu^R_weaker}
D_\Bu^M ( \rho) \leq
2 - \frac{2}{\sqrt{n_A}} \big(  1 - D_\Bu^R ( \rho) \big) \; .
\end{equation}
\end{theorem}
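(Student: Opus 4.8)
The plan is to express both $D_\Bu^M(\rho)$ and $D_\Bu^R(\rho)$ through the positive operators $A_i\equiv\eta_i\rho_i=\sqrt{\rho}\,(\ketbra{\alpha_i}{\alpha_i}\otimes\idty)\,\sqrt{\rho}$, which satisfy $A_i\geq 0$ and $\sum_i A_i=\rho$. By Eq.~(\ref{eq-max_fidelity_meas_ind_geo_disc}), $D_\Bu^M(\rho)=2-2\sqrt{F(\rho,{\cal LM}_\rho)}$ with $\sqrt{F(\rho,{\cal LM}_\rho)}=\max_{\{\ket{\alpha_i}\}}\tr\sqrt{\sum_{i=1}^{n_A}A_i^2}$, while Eq.~(\ref{eq-fidelity_of_response}) gives $1-D_\Bu^R(\rho)=\max_{\{\ket{\alpha_i}\}}\|\hat{A}_1\|_{\tr}$, where $\hat{A}_k\equiv\sum_{j=1}^{n_A}e^{-\I 2\pi jk/n_A}A_j$ is the discrete Fourier transform of $\{A_j\}$, so that $\hat{A}_0=\rho$ and $\hat{A}_k^\dagger=\hat{A}_{n_A-k}$. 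The first step is the identity $\sum_{i=1}^{n_A}A_i^2=\frac{1}{n_A}\sum_{k=0}^{n_A-1}\hat{A}_k^\dagger\hat{A}_k$, which follows from the orthogonality of the characters $k\mapsto e^{-\I 2\pi jk/n_A}$ and the self-adjointness of the $A_j$. The decisive observation is that for $n_A=2$ and $n_A=3$ the only nonzero Fourier modes are $\hat{A}_1$ and its adjoint: for $n_A=2$, $\hat{A}_1$ is self-adjoint and $\sum_i A_i^2=\frac12(\rho^2+\hat{A}_1^\dagger\hat{A}_1)$; for $n_A=3$, $\hat{A}_2=\hat{A}_1^\dagger$ and $\sum_i A_i^2=\frac13(\rho^2+\hat{A}_1^\dagger\hat{A}_1+\hat{A}_1\hat{A}_1^\dagger)$. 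This collapse is precisely what confines the strong bound to $n_A\leq 3$.

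The analytic core is a block-column trace-norm inequality: for any finite family of operators $X_1,\dots,X_m$ on $\Hh_A\otimes\Hh_B$,
\begin{equation}
\tr\sqrt{\sum_{a=1}^{m}X_a^\dagger X_a}\;\geq\;\Bigl(\sum_{a=1}^{m}\|X_a\|_{\tr}^2\Bigr)^{\onehalf}\;,
\end{equation}
with equality whenever all the $X_a$ are scalar multiples of a single positive operator. I would prove it by trace-norm duality: writing the polar decomposition $X_a=V_a|X_a|$ (with $V_a$ a partial isometry, extended to a unitary) and pairing the column operator $(X_1;\dots;X_m)$ with $(W_1;\dots;W_m)$, $W_a=c_a V_a$, $c_a\geq 0$, $\sum_a c_a^2=1$, one has $\sum_a W_a^\dagger W_a\leq\idty$ and $\sum_a\tr(W_a^\dagger X_a)=\sum_a c_a\|X_a\|_{\tr}$, and optimizing the $c_a$ by Cauchy--Schwarz gives the bound. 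Applying this with $\{X_a\}=\{\rho,\hat{A}_1\}$ for $n_A=2$ and $\{X_a\}=\{\rho,\hat{A}_1,\hat{A}_1^\dagger\}$ for $n_A=3$, evaluated on an \ONB that is optimal for $D_\Bu^R$ (so that $\|\hat{A}_1\|_{\tr}=1-D_\Bu^R(\rho)$ while $\|\rho\|_{\tr}=1$), and recalling $\sum_a X_a^\dagger X_a=n_A\sum_i A_i^2$, one gets $\tr\sqrt{\sum_i A_i^2}\geq n_A^{-1/2}\bigl(1+(n_A-1)(1-D_\Bu^R(\rho))^2\bigr)^{1/2}$; since $\sqrt{F(\rho,{\cal LM}_\rho)}$ is at least this quantity, Eq.~(\ref{eq-upper_bound_D_Bu^M_intermsof_D_Bu^R}) follows. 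For $n_A>3$ one simply discards all Fourier modes except $\hat{A}_1$: $\sum_i A_i^2\geq\frac{1}{n_A}\hat{A}_1^\dagger\hat{A}_1$ together with operator monotonicity of the square root gives $\tr\sqrt{\sum_i A_i^2}\geq n_A^{-1/2}\|\hat{A}_1\|_{\tr}=n_A^{-1/2}(1-D_\Bu^R(\rho))$ on the optimal basis, which is Eq.~(\ref{eq-upper_bound_D_Bu^M_intermsof_D_Bu^R_weaker}).

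Saturation on pure states is checked directly: if $\rho=\ketbra{\Psi}{\Psi}$ then $\sqrt{\rho}=\rho$ has rank one, hence for every \ONB $A_j=p_j\ketbra{\Psi}{\Psi}$ with $p_j=\bra{\alpha_j}[\rho_\Psi]_A\ket{\alpha_j}\geq 0$ and $\sum_j p_j=1$. Then $\hat{A}_1=c\,\ketbra{\Psi}{\Psi}$ with $c=\sum_j e^{-\I 2\pi j/n_A}p_j$, so for $n_A=2,3$ one finds $\sum_i A_i^2=\frac{1}{n_A}\bigl(1+(n_A-1)|c|^2\bigr)\ketbra{\Psi}{\Psi}$; maximizing over the basis and using $\max_{\{\ket{\alpha_j}\}}|c|=1-D_\Bu^R(\rho)$ gives $F(\rho,{\cal LM}_\rho)=\frac{1}{n_A}\bigl(1+(n_A-1)(1-D_\Bu^R(\rho))^2\bigr)$, so that Eq.~(\ref{eq-upper_bound_D_Bu^M_intermsof_D_Bu^R}) becomes an equality. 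I expect the main obstacle to be the block-column trace-norm inequality: its commutative shadow is just the triangle inequality for the plane vectors $(p_a,|b_a|)\in\Rl^2$, but the non-commutativity of $\rho$ and $\hat{A}_1$ forces the duality argument above; a secondary, more combinatorial point is recognizing that only $n_A\leq 3$ makes the Fourier sum collapse onto $\hat{A}_1$ and $\hat{A}_1^\dagger$, which is the structural reason the statement splits into the two cases.
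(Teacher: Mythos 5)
Your proof is correct and follows essentially the same route as the paper's: the paper likewise introduces the Fourier modes $A_k=\sum_j e^{-\I 2\pi jk/n_A}\eta_j\rho_j$, uses the Parseval identity $\sum_k |A_k|^2=n_A\sum_j\eta_j^2\rho_j^2$, applies the inverse-triangle inequality $\bigl(\tr\sqrt{\sum_a|X_a|^2}\bigr)^2\geq\sum_a\|X_a\|_{\tr}^2$, and exploits $A_2=A_1^\dagger$ for $n_A=3$, which is exactly your observation that the Fourier sum collapses onto $\rho$, $\hat A_1$, $\hat A_1^\dagger$ only when $n_A\leq 3$. The only departures are that you prove the key trace-norm inequality self-containedly by trace-norm duality where the paper cites the Bhatia--Kittaneh/majorization result, and you check pure-state saturation via the rank-one structure of $\sqrt{\rho}$ rather than the paper's Schmidt-coefficient computation; both substitutions are sound.
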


\proof  The second bound is valid for
any space dimension $n_A$. It follows from the expressions for $D_\Bu^M(\rho)$ and $D_\Bu^R ( \rho)$ given
in
Eqs.~(\ref{eq-max_fidelity_meas_ind_geo_disc}) and~(\ref{eq-fidelity_of_response}), and from the  inequality
in Eq.~(\ref{eq-trace_inequality}).
We now show that when $n_A=2$ or $3$, the
stronger bound of Eq.~(\ref{eq-upper_bound_D_Bu^M_intermsof_D_Bu^R}) holds.
In view of  Eqs.~(\ref{eq-max_fidelity_meas_ind_geo_disc}) and~(\ref{eq-fidelity_of_response}), it is enough to show
that
\begin{equation} \label{eq-proof_upper_bound_D_Bu^M_intermsof_D_Bu^R}
n_A \left\| \sqrt{\sum_{j=1}^{n_A} \eta_j^2 \rho_j^2} \right\|_{\tr}^2
\geq
1 + (n_A -1)
\left\| \sum_{j=1}^{n_A}  e^{- \I \frac{2 \pi j}{n_A}} \eta_j \rho_j \right\|_{\tr}^2 \; .
\end{equation}
Let us consider the operators
\begin{equation} \label{eq-def_operators_A_k}
A_k =  \sum_{j=1}^{n_A}  e^{- \I \frac{2 \pi j k}{n_A}} \eta_j
  \rho_j \quad , \quad k=1,\ldots,n_A\;.
\end{equation}
Then $\sum_{k=1}^{n_A} | A_k|^2 = n_A
\sum_{j=1}^{n_A} \eta_j^2 \rho_j^2$.
We now make use of the inverse triangle inequality
$(\tr \sqrt{|A|^2+|B|^2} )^2\geq  (\tr |A| )^2 + ( \tr |B|)^2$ (see
e.g. Ref.~\cite{Bathia2000}, Lemma~1), which follows from  a
majorization argument (see Ref.~\cite{Bathia}, Exercise II.1.14 and Theorem~II.3.1),
the concavity of the square root function, and Minkowski's inequality
for sequences.
Thanks to the inverse triangle inequality,
we obtain
\begin{equation} \label{eq-inverse_triangle_inequality}
n_A \left\| \sqrt{\sum_{j=1}^{n_A} \eta_j^2 \rho_j^2} \right\|_{\tr}^2
 = \left\| \sqrt{\sum_{k=1}^{n_A} \big| A_k \big|^2 } \right\|_{\tr}^2
 \geq \sum_{k=1}^{n_A} \Big\| A_k \Big\|_{\tr}^2
 = 1 + \sum_{k=1}^{n_A-1}  \Big\| A_k \Big\|_{\tr}^2 \; ,
\end{equation}
where we have used
$A_{n_A} = \sum_j \eta_j \rho_j =\rho$
in the last equality. For $n_A=2$, the bound in
Eq.~(\ref{eq-proof_upper_bound_D_Bu^M_intermsof_D_Bu^R}) follows
immediately from Eqs.~(\ref{eq-def_operators_A_k}) and~(\ref{eq-inverse_triangle_inequality}).
For $n_A=3$, we exploit the identity $A_2=A_1^\dagger$
and the property $\| A^\dagger\|_{\tr} =\| A \|_{\tr}$ of the
trace norm. 

It remains to show that the bound is saturated for pure states $\rho_\Psi = \ketbra{\Psi}{\Psi}$.
In view of Eqs.~(\ref{eq-Bures_disc_resp_pure_states})
and~(\ref{eq-entanglement_of_response}), the Schmidt decomposition
of $\ket{\Psi}$ (see Eq.~(\ref{eq-Schmidt_decomposition})),
and the spectral decomposition of $U_A$ (see Eq.~(\ref{eq-unitaries_harm_spectrum})), we obtain
\begin{equation} \label{eq-proof_upper_bound_D_Bu^M_intermsof_D_Bu^R_pure1}
1 + (n_A-1) \big( 1 - D_\Bu^R (\ket{\Psi})\big)^2
= \max_{\{ \ket{\alpha_i} \}} \sum_{i,j,k,l} \mu_i \mu_k
\biggl( 1 + (n_A-1) \cos \Big( \frac{2\pi (j-l)}{n_A} \Big) \biggr)
\big| \braket{\varphi_i}{\alpha_j} \big|^2  \big| \braket{\varphi_k}{\alpha_l} \big|^2\;.
\end{equation}
If $n_A = 2$ or $3$, only the terms $j=l$ contribute to the sum.
Thanks to Eqs.~(\ref{eq-meas_ind_geo_disc_pure_states}) and~(\ref{eq-almost_finished}),
the r.h.s. of
Eq.~(\ref{eq-proof_upper_bound_D_Bu^M_intermsof_D_Bu^R_pure1}) is
equal to $n_A K ( \ket{\Psi})^{-1}=n_A ( 1 -D_\Bu^M (\ket{\Psi}) /2)^2$. Hence the inequality in
Eq.~(\ref{eq-upper_bound_D_Bu^M_intermsof_D_Bu^R}) is saturated for pure states.
\finpro

\subsection{Trace discord of response} \label{sec-trace_HS}

\begin{theorem}\label{qtmeqdisc}
If party $A$ is a qubit ($n_A=2$) and party $B$ is a qudit ($n_B \geq 2$), the trace discord of response,  trace
geometric discord, and  trace measurement-induced geometric discord
all coincide:
\begin{equation}
D^{R}_{\tr}(\rho)=D^M_{\tr}(\rho)=D^G_{\tr}(\rho) \; .
\label{trrowno}
\end{equation}
Furthermore, one has
\begin{equation}
D_{\tr}^{R}(\rho)\geq D_{\HS}^{R}(\rho) \; .
\label{trhsineq}
\end{equation}
\end{theorem}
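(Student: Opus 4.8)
The plan is to handle the two assertions separately; each reduces to a short operator computation combined with an elementary scalar inequality.

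For Eq.~(\ref{trrowno}), first I would use that for $n_A=2$ the spectrum in the definition~(\ref{def:quantumn}) is $\Lambda=\{e^{\I\pi},e^{2\I\pi}\}=\{-1,1\}$, so every $U_A\in\Uu_\Lambda$ can be written as $U_A=\Pi_1^A-\Pi_2^A$ for a complete family $\{\Pi_1^A,\Pi_2^A\}$ of rank-one orthogonal projectors on $\Hh_A$, and conversely; the minimization over $\Uu_\Lambda$ is thus the same as the minimization over such families (the relabelling $U_A\leftrightarrow-U_A$ being irrelevant). Writing $\rho$ in $2\times2$ block form with operator entries $\rho_{ij}=\bra{\alpha_i}\rho\ket{\alpha_j}$ in the eigenbasis $\{\ket{\alpha_i}\}$ of $U_A$, one sees at once that conjugation by $U_A\otimes\idty$ flips the sign of the off-diagonal blocks and leaves the diagonal ones unchanged, so that
\begin{equation}
\rho-U_A\otimes\idty\,\rho\,U_A^\dagger\otimes\idty
 \;=\; 2\Bigl(\rho-\rho_{\rm p.m.}^{\{\Pi_i^A\}}\Bigr)\;,
\end{equation}
with $\rho_{\rm p.m.}^{\{\Pi_i^A\}}$ the post-measurement state of Eq.~(\ref{def:geomdisc2}). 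Taking trace norms, squaring, and dividing by the normalization $\Nn=4$, this yields $D^R_{\tr}(\rho)=\min_{\{\Pi_i^A\}}\|\rho-\rho_{\rm p.m.}^{\{\Pi_i^A\}}\|_{\tr}^2=D^M_{\tr}(\rho)$, and the remaining identity $D^M_{\tr}=D^G_{\tr}$ for a qubit reference is the known result of Ref.~\cite{Nakano2013}.

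For Eq.~(\ref{trhsineq}), the key point is that the crude bound $\|X\|_{\HS}\le\|X\|_{\tr}$ behind Eq.~(\ref{univineq}) is too weak, and one should instead exploit the sharper estimate valid for \emph{traceless} Hermitian operators. For any bipartite state $\rho$ and any $U_A\in\Uu_\Lambda$ the operator $X\equiv\rho-U_A\otimes\idty\,\rho\,U_A^\dagger\otimes\idty$ is Hermitian with $\tr X=0$. Decomposing $X=X_+-X_-$ into its positive and negative parts (which have orthogonal supports), tracelessness gives $\tr X_+=\tr X_-$, hence $\|X\|_{\tr}=2\tr X_+$; and since $\tr A^2\le(\tr A)^2$ for any $A\ge0$,
\begin{equation}
\|X\|_{\HS}^2=\tr X_+^2+\tr X_-^2\le(\tr X_+)^2+(\tr X_-)^2=2(\tr X_+)^2=\tfrac12\|X\|_{\tr}^2\;.
\end{equation}
Multiplying by $\tfrac12$, minimizing over $U_A$, and recalling the normalizations $\Nn=2$ for Hilbert--Schmidt and $\Nn=4$ for the trace, I would conclude $D^R_{\HS}(\rho)=\min_{U_A}\tfrac12\|X\|_{\HS}^2\le\min_{U_A}\tfrac14\|X\|_{\tr}^2=D^R_{\tr}(\rho)$.

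I expect the block computation (keeping careful track of the factor $2$ and of the definitional factor $\Nn=4$) to be the only delicate bookkeeping in the first part, while the main conceptual point of the second part is simply recognizing that the tracelessness of $\rho-U_A\rho\,U_A^\dagger$ buys the extra factor $1/\sqrt2$ needed to overcome the $1/4$ versus $1/2$ mismatch between the two normalizations. No harder step seems to be required.
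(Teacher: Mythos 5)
Your proposal is correct. For the identity $D^R_{\tr}=D^M_{\tr}$ your route is essentially the paper's: both rest on the $2\times 2$ block computation showing that conjugation by $U_A=\Pi_1^A-\Pi_2^A$ flips the sign of the off-diagonal blocks, so that $\rho-U_A\otimes\idty\,\rho\,U_A^\dagger\otimes\idty=2(\rho-\rho_{\rm p.m.}^{\{\Pi_i^A\}})$; the paper phrases this by evaluating both trace norms explicitly as $4\min\|X_{12}\|_{\tr}^2$ with $X_{12}=\bra{\alpha_1}\rho\ket{\alpha_2}$, whereas you state the operator identity directly, which is cleaner but equivalent. The appeal to Ref.~\cite{Nakano2013} for $D^M_{\tr}=D^G_{\tr}$ matches the paper.

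For the inequality $D^R_{\tr}\geq D^R_\HS$ your argument is genuinely different from, and in one respect stronger than, the paper's. The paper stays inside the $n_A=2$ block picture: it reduces both discords of response to $4\min\|X_{12}\|_{\tr}^2$ and $4\min\|X_{12}\|_{\HS}^2$ respectively and invokes $\|X_{12}\|_{\tr}\geq\|X_{12}\|_{\HS}$ on the off-diagonal block. You instead exploit that $X=\rho-U_A\otimes\idty\,\rho\,U_A^\dagger\otimes\idty$ is traceless Hermitian, so that with $X=X_+-X_-$ one has $\|X\|_\HS^2=\tr X_+^2+\tr X_-^2\leq(\tr X_+)^2+(\tr X_-)^2=\tfrac12\|X\|_{\tr}^2$, which exactly compensates the mismatch between $\Nn=2$ and $\Nn=4$. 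This computation is correct (the cross term vanishes by orthogonality of the supports of $X_\pm$, and $\tr A^2\leq(\tr A)^2$ for $A\geq 0$ is elementary), and it nowhere uses $n_A=2$: your version of Eq.~(\ref{trhsineq}) therefore holds for reference subsystems of arbitrary dimension, which the paper's block argument does not directly give. The only price is that your derivation of the first part still needs the explicit qubit structure, just as the paper's does.
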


It is worth remarking that the bound in
Eq.~(\ref{trhsineq}) is stronger than the trivial bound $2D^{R}_{\tr}(\rho) \geq D^{R}_{\HS}(\rho)$ that follows from
Eq.~(\ref{univineq}). Moreover, it is saturated for pure states  (see Sect.~\ref{eq-results_bounds_qubit}).

The proof of Theorem~\ref{qtmeqdisc} is reported in
Appendix~\ref{app-proof_theo5-7}. It makes explicit use of the fact that the
spectrum of the unitary operators in the definition
of the discord of response is given by the roots of the unity
(harmonic spectrum). The identity $D^M_{\tr} = D^G_{\tr}$ and the fact
that it holds only for $n_A=2$ have been originally established in
Ref.~\cite{Nakano2013}.

The trace geometric discord has been computed in different works for
specific classes of two-qubit states: a closed expression for
Bell-diagonal states has been found in Refs.~\cite{Nakano2013,Paula2013}, and it was further generalized to the class of the
so-called $X$ states and to the quantum-classical states in Ref.~\cite{Ciccarello2014}. Due
to Theorem~\ref{qtmeqdisc}, we can immediately extend these
results to the trace discord of response.

\section{Maximally quantum-correlated states}\label{mqcs}

\begin{figure}
\includegraphics[width=17cm]{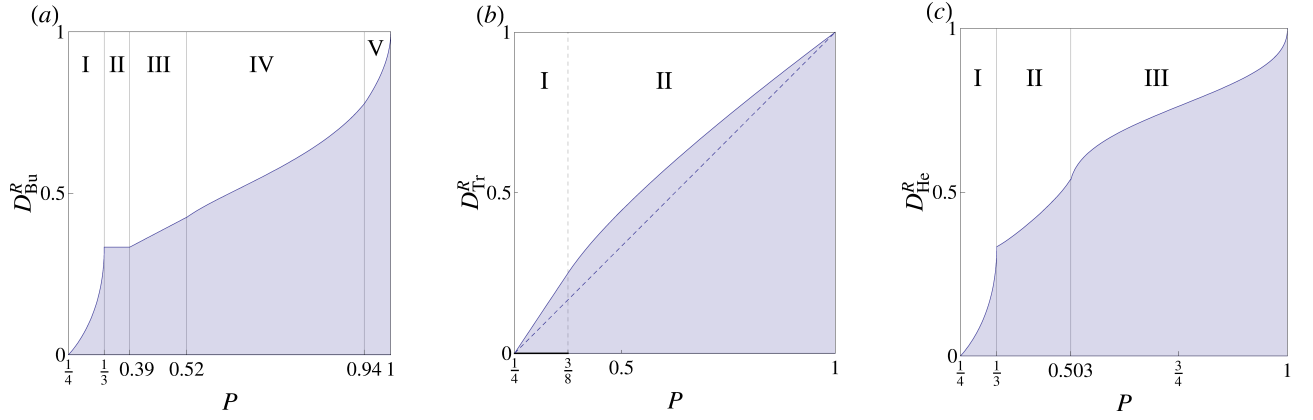}
\caption{(a) Bures discord of response accessible for the two-qubit
  states $\rho$ with purity $P=\tr\rho^2$  (from Ref.~\cite{Roga2014}). The
  possible values of $D^R_{\Bu} (\rho )$ as $P$ varies between $1/4$
  and $1$  are represented by the shadowed
  region. The solid line bounding this region on its upper side gives
  the discord of response of the maximally quantum-correlated states
  with respect to $D^R_{\Bu}$. These states are determined explicitly in
  Ref.~\cite{Roga2014} and have different forms in the five regions
  (I)-(V) delimited by the vertical lines. (b)~Same for the trace discord
  of response $D^R_{\tr}$. The values of $D^R_{\tr}$ for the Werner states are
  represented here by the dashed line (these values are the same for
the two branches in Eq.~(\ref{eq-Werner_state}) for $P \in
[1/4,1/3]$). The maximally
  quantum-correlated states are the  Bell-diagonal states defined in
  Eqs.~(\ref{state1}) and~(\ref{state2}) in
  Appendix~\ref{app-max_Q_corr_states} for the two regions (I) and
  (II) delimited by the vertical line. (c) Same for
  the Hellinger discord of response $D^R_{\Hel}$. The maximally
  quantum-correlated states are characterized in  Appendix~\ref{app-max_Q_corr_states}.}
\label{bup}
\end{figure}

In this Section we study the maximally quantum-correlated states with
respect to various discords of response.
With the help of numerical investigations,
we determine the two-qubit states $\rho$ with a fixed purity
$P=\tr \rho^2$ having the highest trace discord of response
and  Hellinger  discord of response.
In this way, we complete the previous analysis carried
out in Refs.~\cite{Giampaolo2013} and~\cite{Roga2014},
respectively for $D_\HS^R$ and $D_\Bu^R$. We find that the four discords of response
yield different families of maximally quantum-correlated states with
fixed purity $P<1$.
Nevertheless, if $P$ is not
fixed,  $D_{\tr}^R$, $D_{\Bu}^R$, and $D_{\Hel}^R$ take their maximal value
(equal to unity) for the same class of states, namely the maximally
entangled states, in agreement with Axiom (v) of
Section~\ref{sec-intro}.

\subsection{States with a fixed purity maximizing the discords of response}

We restrict here
our analysis to the case of two qubits $A$ and $B$ and use a numerical
approach. We have computed the values of $D^R_{\tr}$, $D^R_\Bu$, and $D^R_\Hel$ for
$3 \times 10^5$ randomly generated two-qubit states, whose
eigenvalues are chosen randomly with a uniform distribution (with the
constraint that they are non-negative and sum up to unity) and
eigenvectors are  the column vectors of a random unitary
matrix distributed according to the Haar measure.
We identify among these random states those
with purity $P$ maximizing the various discords of response. These
families of most discordant states are tested by applying small
disturbance analysis.

Our numerical analysis indicates that the states $\rho^P_\mmax$ with purity $P$ maximizing the
trace discord of response are given by Eqs.~(\ref{state1})  and~(\ref{state2}) in Appendix~\ref{app-max_Q_corr_states}.
Since these states  belong to the class of Bell-diagonal
states, their trace geometric discord can be evaluated by using the
results of Ref.~\cite{Nakano2013} and one can take advantage of
$D_{\tr}^R=D_{\tr}^G$ (recall that we are considering two qubits) to obtain:

\begin{conjecture}\label{prop_trmax}
{\rm The maximal value of $D^{R}_{\tr}(\rho)$ for all two-qubit states
$\rho$ with
purity $P = \tr \rho^2$ is given  by the following function of $P$:
\begin{equation} \label{eq-maximal_disc_resp_for_purity_P}
 D_{\tr}^{R}(\rho^{P}_\mmax)
=
\begin{cases}
\frac{1}{2} (4 P-1) & \text{ if }\;1/4 \leq P \leq 3/8 \\
\frac{1}{9} \left(\sqrt{6 P-2}+1\right)^2 & \text{ if }\;3/8 \leq P \leq 1\;.
\end{cases}
\end{equation}
}
\end{conjecture}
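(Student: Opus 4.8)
\textit{Proof strategy.} The plan is first to reduce the problem to Bell-diagonal states, where the trace geometric discord has a known closed form, and then to perform the resulting elementary optimization. Since $A$ is a qubit, Theorem~\ref{qtmeqdisc} gives $D^R_{\tr}(\rho)=D^M_{\tr}(\rho)=D^G_{\tr}(\rho)$, so it is equivalent to maximize the trace \emph{geometric} discord $D^G_{\tr}$ over two-qubit states of fixed purity $P$. The numerical analysis (see the states of Eqs.~(\ref{state1})--(\ref{state2}) in Appendix~\ref{app-max_Q_corr_states}) indicates that the maximizers are Bell-diagonal; granting this reduction, write such a state as $\rho=\tfrac14\bigl(\idty+\sum_{k=1}^{3}c_k\,\sigma_k\otimes\sigma_k\bigr)$, with eigenvalues $\lambda=\tfrac14(1+\sum_k\varepsilon_k c_k)$ running over the four Bell-basis sign patterns, so that $\lambda\geq0$ confines $\vec c=(c_1,c_2,c_3)$ to the tetrahedron $T$ with vertices $(1,1,-1),(1,-1,1),(-1,1,1),(-1,-1,-1)$. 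One has $P=\tr\rho^2=\tfrac14(1+c_1^2+c_2^2+c_3^2)$, and the known expression for the trace geometric discord of Bell-diagonal states~\cite{Nakano2013,Paula2013} is $D^G_{\tr}(\rho)=c_{\rm int}^{\,2}$, where $c_{\rm int}$ is the intermediate value among $\{|c_1|,|c_2|,|c_3|\}$ (the square reflecting our convention $D^G=\min d^2$). It thus remains to maximize $c_{\rm int}^{\,2}$ over $\vec c\in T$ subject to $c_1^2+c_2^2+c_3^2=4P-1$.

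Next I would carry out this constrained maximization. Assuming (by permutation symmetry) $|c_1|\geq|c_2|\geq|c_3|$, one has the elementary bound $c_{\rm int}^{\,2}=|c_2|^2\leq\tfrac12(|c_1|^2+|c_2|^2)\leq\tfrac12(c_1^2+c_2^2+c_3^2)=\tfrac12(4P-1)$, with equality iff $|c_1|=|c_2|$ and $c_3=0$. The equality configuration $\vec c=(\pm a,\pm a,0)$ with $a=\sqrt{(4P-1)/2}$ belongs to $T$ exactly when $a\leq\tfrac12$, i.e.\ $P\leq3/8$; hence for $1/4\leq P\leq3/8$ the maximum equals $\tfrac12(4P-1)$. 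For $P>3/8$ the value $\tfrac12(4P-1)$ is no longer attainable inside $T$, and the constraint $1-c_1-c_2-c_3=0$ (vanishing of one eigenvalue) becomes active; maximizing $|c_2|$ on the sphere intersected with this face, again with $|c_1|=|c_2|=t$ and hence $c_3=1-2t$, gives $2t^2+(1-2t)^2=4P-1$, i.e.\ $6t^2-4t+(2-4P)=0$, whose larger root is $t=\tfrac13\bigl(1+\sqrt{6P-2}\bigr)$, so that $D^R_{\tr}=t^2=\tfrac19\bigl(1+\sqrt{6P-2}\bigr)^2$. A short case analysis over the remaining faces of $T$ (or a direct Lagrange-multiplier argument) confirms that these are the global maxima; one then checks that the two branches agree at $P=3/8$ (both equal $\tfrac14$, attained at $\vec c=(\tfrac12,\tfrac12,0)$), that $D^R_{\tr}\to0$ as $P\to1/4$ and $D^R_{\tr}\to1$ at the Bell states $P=1$, and that the extremal $\vec c$'s reproduce the states of Eqs.~(\ref{state1})--(\ref{state2}), completing the computation.

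The main obstacle, and the reason the statement is only a conjecture, is precisely the first reduction: proving that no non-Bell-diagonal two-qubit state of purity $P$ exceeds the value above. A generic two-qubit state carries, besides the correlation matrix $T$, nonzero local Bloch vectors $\vec a,\vec b$, which consume part of the purity budget $P=\tfrac14\bigl(1+|\vec a|^2+|\vec b|^2+\tr(T^{\rm T}T)\bigr)$ but are expected to lower $D^G_{\tr}$ at fixed $P$; quantifying this is hard because there is no simple closed form for $D^G_{\tr}$ of arbitrary two-qubit states. A plausible route would be to exploit local-unitary covariance to bring $T$ to diagonal form, bound $D^G_{\tr}$ from above in terms of the singular values of $T$ alone, and argue that transferring the weight carried by $\vec a,\vec b$ into $T$ cannot decrease that bound; alternatively one could attack directly the minimization over $\classQ$ defining $D^G_{\tr}$. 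Lacking such an argument, the computation above establishes the formula rigorously only within the Bell-diagonal family, which is why the statement is presented as a conjecture.
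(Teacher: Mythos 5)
Your derivation is correct and reproduces both the formula and the extremal states of Eqs.~(\ref{state1})--(\ref{state2}), but it follows a genuinely different route from the paper's Appendix~\ref{app-proof_prop_trmax}. The paper takes the specific states $\rho^P_\mmax$ of Eqs.~(\ref{state1})--(\ref{state2}) as given by the numerics (Conjecture~3), parametrizes $U_A$ by the Bloch angles of Eq.~(\ref{unitaryparam}), diagonalizes $\rho^P_\mmax-U_A\otimes\idty\,\rho^P_\mmax\,U_A^\dagger\otimes\idty$ explicitly, observes that the resulting trace norm is independent of $(\theta,\phi)$, and reads off $D^R_{\tr}$ directly from the definition --- deliberately avoiding the closed-form expression of Refs.~\cite{Nakano2013,Paula2013}. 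You instead invoke $D^R_{\tr}=D^G_{\tr}$ from Theorem~\ref{qtmeqdisc} together with the known result $D^G_{\tr}=c_{\rm int}^{\,2}$ for Bell-diagonal states and carry out the constrained maximization over the entire Bell tetrahedron; this is precisely the shortcut the paper sketches in the main text of Section~\ref{mqcs} but does not execute. Your version buys something real: it \emph{derives} the maximizers (the two regimes $|c_1|=|c_2|,\,c_3=0$ inside the tetrahedron for $P\leq 3/8$, and the rank-$3$ face $\lambda_{\Psi_-}=0$ for $P>3/8$) rather than assuming them, so the residual conjectural content shrinks to the single statement that the global maximizer over \emph{all} two-qubit states of purity $P$ is Bell-diagonal --- a point you correctly identify as the reason the result remains a conjecture. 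The paper's version buys self-containedness (no imported formula for $D^G_{\tr}$) plus the observation of axis-independence of the perturbation. Two minor caveats on your side: the dismissal of the remaining faces of $T$ is asserted rather than checked (by the tetrahedron's symmetry one face suffices, and the one-dimensional circle optimization there is elementary, but it should be written out since $c_{\rm int}^2$ is not smooth and a naive Lagrange argument does not directly apply), and the normalization of the imported formula should be stated carefully --- your squared convention $D^G_{\tr}=c_{\rm int}^{\,2}$ is the one consistent with the paper's definition $D^G=\min d^2$ and with its numerical values, as your endpoint checks at $P=3/8$ and $P=1$ confirm.
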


This result, which relies on our conjecture for the maximally
quantum-correlated states with respect to $D_{\tr}^R$, Eqs.~(\ref{state1}) and~(\ref{state2}),
is derived  for completeness in Appendix~\ref{app-max_Q_corr_states} without relying on the results of Ref.~\cite{Nakano2013}.
One can find in a similar way the values of
$D_{\tr}^{R}$ for the Werner states
\begin{equation} \label{eq-Werner_state}
\rho_{\rm W} =a_\pm \frac{\idty}{4}+(1-a_\pm)\ket{\Psi_-}\bra{\Psi_-}
\quad , \quad 
a_\pm = \begin{cases} 1 + \sqrt{(4P-1)/3} & \text{for $P \in   [1/4,1/3]$} \\ 
                 1-  \sqrt{(4P-1)/3} & \text{for $P \in    [1/4,1]$} 
   \end{cases}
\;,
\end{equation}
where $\ket{\Psi_-}= (\ket{01}- \ket{10})/\sqrt{2}$
denotes the Bell state.
The Werner states  maximize the Hilbert-Schmidt
discord of response at fixed purity $P$ (for $P \in
[1/4,1/3]$ this is true for  the two branches
of Werner states in Eq.~(\ref{eq-Werner_state}), which yield to the same value
of $D^R_\HS ( \rho_{\rm W})$), see Ref.~\cite{Giampaolo2013}.
We display in panel (b) of Fig.~\ref{bup} the accessible values of
$D_{\tr}^R(\rho)$ as a function of $P$ (shadowed region).
We clearly see in this figure that if $P$ is distinct from the
smallest and highest possible purities $P=1/4$
and $P=1$ (\ie, if $\rho$ is neither a maximally mixed nor a pure state), the
trace discord of response of the Werner state
(dashed line) is below the maximal
value of $D_{\tr}^R$ given by
Eq.~(\ref{eq-maximal_disc_resp_for_purity_P}). This
shows that the maximally quantum-correlated states with fixed purity $P$ are
different for the two measures $D_{\tr}^R$ and $D_\HS^R$.

A similar statement holds for the other discords of response.
The maximal values of $D_\Bu^R$ and $D_\Hel^R$
at fixed purity $P$ cannot be characterized by such simple
functions as in Eq.~(\ref{eq-maximal_disc_resp_for_purity_P}),
therefore we do not report them here.
The maximal Bures discord of response has been determined explicitly
as a function of $P$ in
  Ref.~\cite{Roga2014} and is shown in panel (a) of Fig.~\ref{bup}.
By using  the numerical method described above, we have identified the
 two-qubit states with purity $P$ maximizing the
Hellinger discord of response, which are given in Appendix~\ref{app-max_Q_corr_states}.
From the analytical expression given in
Eq.~(\ref{eq-closed_formula_discresp_Hel}), it is then easy to
compute numerically the maximal Hellinger discord of response  as  a
function of $P$. The latter is represented by the upper solid line in
panel (c) of Fig.~\ref{bup}.

In Fig.~\ref{hstr} we report the distributions in the planes defined
by pairs of discords of response associated to different
distances for random two-qubit states. These distributions are
analogous to those of Fig.~\ref{responsegrid} except that they correspond here to
a fixed purity $P=0.6$. Random states with this purity are
generated as described above: their eigenvectors are obtained from
random unitary matrices distributed according to the Haar measure, while
their eigenvalues are picked randomly from the set of non-negative numbers
$p_i$ with fixed sums $\sum_i p_i =1$ and  $\sum_i p_i^2 =0.6$.
For states of rank $r>2$, we first generate $r-2$ random eigenvalues with a
uniform distribution on  sufficiently small intervals. The remaining
eigenvalues are given by the constraints on the trace and the purity.

Since there is no exact relation between the discords of response
associated to different
distances, the points in Fig.~\ref{hstr} are not distributed on a line but in a region of the plane with a
non-vanishing area. This means that  the different discords of response do not define the
same ordering on the set of bipartite states: for instance, it is possible to find
two states $\rho_1$ and $\rho_2$ which satisfy $D^R_{\tr}(\rho_1) <
D^R_{\tr}(\rho_2)$ and at the same time $D^R_{\Bu}(\rho_1) >
D^R_{\Bu}(\rho_2)$ (see \eg the points lying on  the thick line in Fig.~\ref{hstr}(b)).
In other words, changing the distance
modifies the ordering of the states. In Fig.~\ref{hstr}, the different locations of the
points having the highest discord of response for the different distances
illustrate the fact that the maximally quantum-correlated states with purity
$P<1$ are not the same for $D^R_{\tr}$, $D^R_\Bu$, and $D^R_\Hel$.

We have observed a similar behavior as in Fig.~\ref{hstr} for the
trace, Bures, and Hellinger measurement-induced geometric discords
(not reported here).

\begin{figure}
\includegraphics[width=17.0cm]{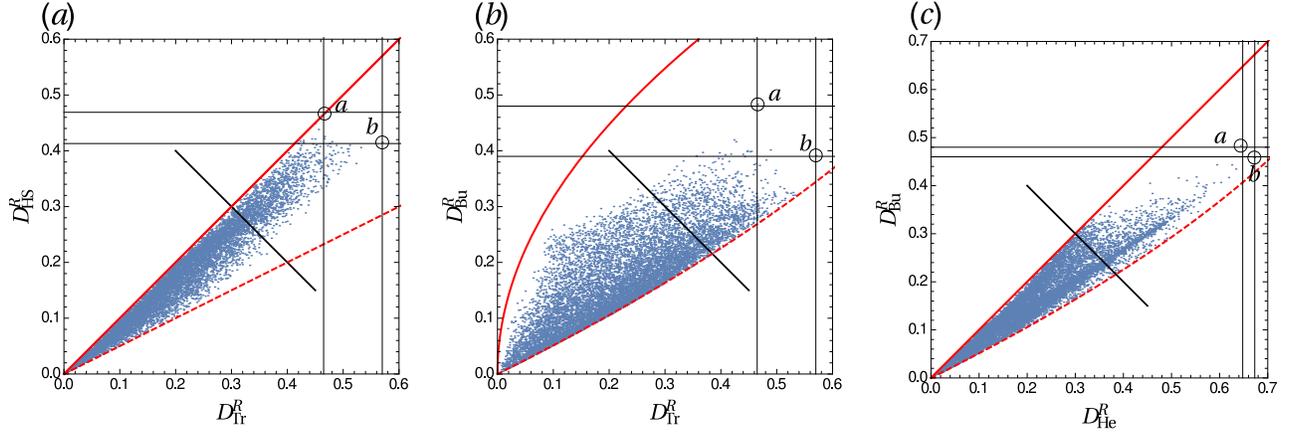}
\caption{(a) Values of the Hilbert-Schmidt and trace  discords of response
  $D^R_\HS$ and $D^R_{\tr}$ for
  $10^4$ random two-qubit states of different ranks with the same fixed
  global state purity $P=0.6$. The points $a$ and $b$ represent,
  respectively, some maximally quantum-correlated states  with purity
  $P$ with respect
  to $D^R_\HS$ and $D^R_{\tr}$. Note that $a$ is not
  maximally quantum correlated with respect to $D^R_{\tr}$, and
  similarly for $b$ and $D^R_\HS$. The red solid and dashed lines represent
  the bounds discussed in Sec.~\ref{eq-results_bounds_qubit} like in Fig.~\ref{responsegrid}.
  States on the thick black line have a hierarchy with respect to
  $D^R_{\HS}$ that is reversed compared to the hierarchy with respect
  to $D^R_{\tr}$. (b) Same for the Bures and trace discords of response $D^R_\Bu$
  and $D^R_{\tr}$.  (c) Same for  the Bures and Hellinger discords of response $D^R_\Bu$
  and $D^R_{\Hel}$.}
\label{hstr}
\end{figure}

\subsection{States with arbitrary purity maximizing the discords of
  response}
\label{sec-axiom(v)}

In spite of the annoying fact that the maximally quantum-correlated
states with a fixed purity depend on the distance used to define the
discord of response, a universal family of states emerges when one
considers the maximal value of $D^R$ irrespective of the purity
$P$. As it should be expected
for any proper measure of quantum correlations, for the trace, Bures,
and Hellinger distances these maximally discordant states are
the maximally entangled states, that is, the states $\rho$ with
largest entanglement of formation $\EoF ( \rho) = \ln ( \min\{ n_A,
n_B\})$. Let us recall that if $n_A \leq  n_B$,  such  states
are convex combinations of pure states of the form:
\begin{equation} \label{eq-max_entangled_state}
\ket{\Psi_k} = \frac{1}{\sqrt{n_A}} \sum_{i=1}^{n_A} \ket{\varphi_{ki}} \ket{\chi_{ki}}
\end{equation}
with $\braket{\varphi_{ki}}{\varphi_{kj}} = \delta_{ij}$ and
$\braket{\chi_{ki}}{\chi_{lj}} = \delta_{ij} \delta_{kl}$ (see, e.g.,
Ref.~\cite{Spehner_review}, Proposition 9.E.1). Notice that the subspaces $\Span \{
\ket{\chi_{ki}}, i = 1 , \ldots, n_A\}$ are orthogonal for different
$k$'s, so that the aforementioned convex combinations involve at most
$r$ pure states $\ket{\Psi_k}$ if $r n_A \leq n_B < (r+1) n_A$. In
particular, if $n_A \leq n_B < 2 n_A$ then the  maximally entangled
states are necessarily pure states given by
Eq.~(\ref{eq-max_entangled_state}). The following theorem is proven in Appendix~\ref{app-maximal_discod_resp}.

\begin{theorem} \label{theo-axiom_v_disc_resp}
Let subsystems $A$ and $B$ have arbitrary space dimensions
$n_A$ and $n_B$, with $n_A \leq n_B$. Then the maximal value of the
trace, Bures, and Hellinger discords of response is equal to unity and
these three discords of response satisfy Axiom (v)  of Section~\ref{sec-intro},
namely, $D^R ( \rho) = 1$ if and only if
$\EoF ( \rho) = \ln n_A$. In contrast, if $2 n_A \leq n_B$ then the
Hilbert-Schmidt discord of response does not enjoy this property.
\end{theorem}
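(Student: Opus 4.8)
The plan is to treat the positive statement (for the trace, Bures, and Hellinger distances) and the negative statement (for the Hilbert--Schmidt distance when $2n_A \leq n_B$) separately, since the first amounts to an ``only if'' characterization of the maximizers while the second is a counterexample.

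For the positive direction, I would first record that $D^R_{\tr}$, $D^R_\Bu$, and $D^R_\Hel$ satisfy Axioms (i)-(iv) (already established in the excerpt). By the remark following the list of Axioms, any $D$ obeying (iii) and (iv) is maximal on maximally entangled pure states; thus $D^R = 1$ whenever $\EoF(\rho) = \ln n_A$ (using that the maximal value is $1$, which follows from Table~\ref{tab3} and the normalization). This gives the ``if'' part for free. The substance is the converse: if $D^R(\rho) = 1$ then $\rho$ must be maximally entangled. Here I would use the general expressions of the discord of response in terms of the states $\rho_i$ and probabilities $\eta_i$ of Eq.~(\ref{eq-state_Q_discrimination}). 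For the Bures case, $D^R_\Bu(\rho) = 1$ forces $\max_{\{\ket{\alpha_i}\}} \| \sum_j \eta_j e^{-2\pi\I j/n_A} \rho_j \|_{\tr} = 0$, hence $\sum_j \eta_j e^{-2\pi\I j/n_A}\rho_j = 0$ \emph{for every} \ONB $\{\ket{\alpha_i}\}$; combined with $\sum_j \eta_j \rho_j = \rho$ and the analogous vanishing for all powers of the primitive root (by considering unitaries $U_A^k$), one gets $\eta_j = 1/n_A$ and $\rho_j = \rho$ for all $j$ and all bases. This is exactly the situation analyzed in the proof of Theorem~\ref{theo-Bures_meas_ind_disc_pure_states}(b), so I would invoke the same argument (``the same arguments as in the proof of the Proposition following Theorem~3 in Ref.~\cite{Spehner2013}'') to conclude that $\rho$ is a convex combination of maximally entangled pure states of the form~(\ref{eq-max_entangled_state}), i.e. $\EoF(\rho) = \ln n_A$. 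For the Hellinger and trace cases I would run the parallel argument using Eqs.~(\ref{eq_Hellinger_disc_resp}) and the bound $D^R_\Bu \leq D^R_\Hel \leq \sqrt{D^R_{\tr}}$ together with $D^R_{\tr} \leq 1 - (1-D^R_\Bu)^2$ from Eq.~(\ref{eq-trivial_bound_disc_response}): since $D^R(\rho) = 1$ for any of these three measures forces $D^R_\Bu(\rho) = 1$, the Bures analysis already pins down $\rho$.

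For the negative statement, I would exhibit, when $2n_A \leq n_B$, a non-maximally-entangled state $\rho$ with $D^R_\HS(\rho) = 1$. The natural construction uses the failure of contractivity already discussed after Eq.~(\ref{eq-closest_states_meas_ind_geo_disc_HS}): take a maximally entangled pure state $\ket{\Psi_0}$ on $\Hh_A \otimes \Hh_{B'}$ with $\dim \Hh_{B'} = n_A$, let $C$ be an ancilla with $\dim \Hh_C \geq n_A$ (so that $n_B = \dim\Hh_{B'} \cdot \dim \Hh_C \geq 2 n_A$ is realizable), and set $\rho = \ketbra{\Psi_0}{\Psi_0}_{AB'} \otimes \rho_C$ with $\rho_C$ mixed. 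Since $D^R_\HS(\ket{\Psi_0}) = E^R(\ket{\Psi_0}) = 1$ (maximal, by Eq.~(\ref{eq-def-entanglement_of_response})) and, by the multiplicativity property $d_\HS(\rho_{AB'}\otimes\rho_C, \sigma_{AB'}\otimes\rho_C) = d_\HS(\rho_{AB'},\sigma_{AB'})\,\|\rho_C\|_\HS$ used in the counterexample of Ref.~\cite{Piani2012}, one has $D^R_\HS(\rho) = \tr(\rho_C^2)^{-1} D^R_\HS(\rho)$ --- wait, more carefully: the local unitary perturbations act only on $A$, so $d_\HS(\rho, U_A\otimes\idty\,\rho\,U_A^\dagger\otimes\idty)^2 = \tr(\rho_C^2)\, d_\HS(\ket{\Psi_0}, U_A\otimes\idty\ket{\Psi_0})^2$, giving $D^R_\HS(\rho) = \tr(\rho_C^2)\, D^R_\HS(\ket{\Psi_0}) < 1$ if $\rho_C$ is mixed --- so that's not it. Instead I would argue the other way: take $\rho$ itself to be a suitable \emph{mixed} state whose Hilbert-Schmidt response equals $1$ while $\EoF(\rho) < \ln n_A$; the cleanest route is to note that $D^R_\HS = 2D^M_\HS$ for $n_A = 2$ from Table~\ref{tab3}, and $D^M_\HS = D^G_\HS$, and then import from Ref.~\cite{Piani2014} (Theorem~3.3 and the discussion of $D^G_\HS$) the existence, for $n_B \geq 2n_A$, of states maximizing $D^G_\HS$ that are not maximally entangled --- this is precisely the content referenced in Table~\ref{tab3}, row ``Satisfies Axiom (v)'', entry ``no if $n_B \geq 2n_A$''. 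I expect the main obstacle to be making the Hilbert--Schmidt counterexample fully explicit and verifying it is not maximally entangled: one must compute $D^G_\HS$ via Eq.~(\ref{eq-HS_geo_discord}) for a concrete candidate (e.g. an isotropic-like state on $\Hh_A \otimes \Hh_B$ with $n_B = 2n_A$) and check both that $D^G_\HS$ attains its global maximum $1 - 1/n_A$ (equivalently $D^R_\HS = 1$ after normalization, using the qubit relation or its qudit analog) and that its entanglement of formation is strictly below $\ln n_A$; the positive direction, by contrast, is largely a repackaging of the state-discrimination argument of Theorem~\ref{theo-Bures_meas_ind_disc_pure_states}(b).
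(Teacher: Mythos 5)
Your treatment of the ``only if'' direction is essentially sound and, interestingly, takes a slightly different route from the paper: you extract $\eta_j\rho_j=\rho/n_A$ from the vanishing of $\sum_j\eta_j e^{-2\pi\I j/n_A}\rho_j$ and then funnel into the argument of Theorem~\ref{theo-Bures_meas_ind_disc_pure_states}(b), whereas the paper works with the equivalent orthogonal-support condition $\bra{\Psi_l}U_A\otimes\idty\ket{\Psi_k}=0$ and proves a small lemma ($\tr(AU)=0$ for all $U\in\Uu_\Lambda$ with $A$ self-adjoint forces $A\propto\idty$); both end at the same place. One repair is needed: $U_A^k$ has a degenerate spectrum when $\gcd(k,n_A)>1$ and is not in $\Uu_\Lambda$, so you cannot invoke the hypothesis for ``all powers of the primitive root''; instead, permute the basis labels $j\mapsto\sigma(j)$ to obtain $\sum_j\eta_j e^{-2\pi\I\sigma(j)/n_A}\rho_j=0$ for every permutation $\sigma$, which already spans the hyperplane orthogonal to $(1,\dots,1)$ and yields $\eta_i\rho_i=\eta_j\rho_j$. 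The reduction of the Hellinger and trace cases to the Bures case via Eq.~(\ref{eq-trivial_bound_disc_response}) is fine. However, your ``if'' direction is genuinely incomplete: the remark after the Axioms only guarantees maximality on maximally entangled \emph{pure} states, while for $n_B\geq 2n_A$ there exist maximally entangled \emph{mixed} states, and Axiom (v) requires these to saturate $D^R$ as well. This cannot be waved away -- it is exactly the step that fails for the Hilbert--Schmidt distance. The paper closes it by computing $\tr_B\ketbra{\Psi_k}{\Psi_l}=\delta_{kl}\idty_A/n_A$ for such convex combinations and using $\sum_i\lambda_i=0$ (roots of unity) to conclude that $\rho$ and $U_A\otimes\idty\,\rho\,U_A^\dagger\otimes\idty$ have orthogonal supports, whence $D^R_\Bu=D^R_\Hel=D^R_{\tr}=1$.

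The Hilbert--Schmidt part of your proposal chases a counterexample that provably does not exist. Since $D^R_\HS(\rho)=\tr\rho^2-\max_{U_A}\tr(\rho\,U_A\otimes\idty\,\rho\,U_A^\dagger\otimes\idty)$, the value $1$ forces both $\tr\rho^2=1$ and orthogonal supports, i.e.\ $\rho$ is a maximally entangled \emph{pure} state; hence no state with $\EoF(\rho)<\ln n_A$ ever attains $D^R_\HS=1$, and Table~\ref{tab3} does not assert otherwise. The failure of Axiom (v) is in the opposite direction: when $n_B\geq 2n_A$ there are maximally entangled mixed states, and these have $\tr\rho^2<1$, hence $D^R_\HS(\rho)<1$ even though $\EoF(\rho)=\ln n_A$. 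Your discarded construction $\rho=\ketbra{\Psi_0}{\Psi_0}_{AB'}\otimes\rho_C$ with $\rho_C$ mixed is precisely such a state -- across the cut $A|B'C$ it is a convex combination of orthogonal maximally entangled pure states of the form~(\ref{eq-max_entangled_state}), so $\EoF(\rho)=\ln n_A$ -- and your own computation $D^R_\HS(\rho)=\tr(\rho_C^2)\,D^R_\HS(\ket{\Psi_0})<1$ is the entire proof of the negative statement. You had the right counterexample in hand and threw it away because you were testing it against the wrong direction of failure.
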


It is shown in  Appendix~\ref{app-maximal_discod_resp} that the
trace, Bures, and Hellinger geometric discords and measurement-induced
geometric discords satisfy Axiom (v) as well, at least when $A$ is a qubit,
as reported in Table~\ref{tab3} (see also Theorem~\ref{theo-Bures_meas_ind_disc_pure_states}).

\section{Quantumness breaking channels}\label{qbc}

Quantum channels (also called quantum operations) are by definition
dynamical maps on the set of quantum states which are completely
positive (CP) and trace-preserving (TP). Let us recall that a linear map
$\Phi_A$ acting on the set of states of a system $A$ is CP if its trivial extensions $\Phi_A\otimes\idty_B$
are positive (\ie, they transform non-negative matrices into non-negative
matrices)  for any system $B$ with finite-dimensional Hilbert space $\Hh_B$.
A TP linear map $\Phi_A$ is CP (and hence is a quantum channel) if and only if~\cite{Bengtsson}
\begin{equation}
\rho^{\Phi_A} \equiv \Phi_A \otimes\idty_B (\ket{\Psi_+}\bra{\Psi_+})
\end{equation}
is non-negative. Here, we have introduced a system $B$ having the same
space dimension $n_B=n_A$ as $A$, a fixed orthonormal product basis
$\{ \ket{i_A  , j_B } \equiv \ket{i}_A \otimes \ket{j}_B \}_{i,j=1}^{n_A}$ for the
composite system $AB$, and the maximally entangled state
$\ket{\Psi_+} = \frac{1}{\sqrt{n_A}}\sum_{i=1}^{n_A}\ket{i_A,  i_{B}}$
of $AB$.
The state $\rho^{\Phi_A}$ is called the
Jamio{\l}kowski  state corresponding to the CPTP map $\Phi_A$.

In Ref.~\cite{Korbicz2012}, the authors have characterized the local
quantum channels $\Phi_A$ acting on system $A$
that destroy all the quantum correlations existing in an arbitrary
bipartite quantum state of $AB$. Such channels are called
{\em  quantumness breaking channels} and are such that the output state
$\Phi_A \otimes \idty_B ( \rho_{AB} )$ is a classical-quantum state
for any bipartite input state $\rho_{AB}$. It turns
out that a channel $\Phi_A$ is quantumness breaking if and
only if its Jamio{\l}kowski  state $\rho^{\Phi_A}$ is
classical-quantum (see Ref.~\cite{Korbicz2012}).

In this Section, we derive a user-friendly, necessary condition
for a channel to be quantumness breaking. This condition is formulated in terms of
the rank of the superoperator $\widehat{\Phi}_A$ associated to the
quantum channel $\Phi_A$. This superoperator is defined as the operator on the
tensor-product (doubled) Hilbert space $\Hh_A \otimes \Hh_A$ with  matrix elements
\begin{equation} \label{eq-superoperator_form}
\bra{i_A, j_A} \widehat{\Phi}_A \ket{k_A, l_A} \equiv \bra{i_A} \Phi_A ( \ketbra{k_A}{l_A} ) \ket{j_A} \; .
\end{equation}
If we represent the states $\rho_A$ of $A$ as vectors $\ket{\rho_A}$
on $\Hh_A \otimes \Hh_A$ with components
$\braket{i_A,j_A}{\rho_A}=\bra{i_A}\rho_A \ket{j_A}$, then
$\widehat{\Phi}_A$ realizes the transformation of these vector-states under the channel $\Phi_A$.

Our treatment relies on the so-called
{\em reshuffling operation} $\c R$, which is a widely employed tool in
the theory of quantum channels, see, e.g., Ref.~\cite{Bengtsson}. The
operation $\c R$  exchanges the matrix entries of a block matrix in
the following way: given an operator $X$ on $\Hh_A \otimes \Hh_B$, one
associates to it the operator $X^{\c R}$ from
$\Hh_B \otimes \Hh_B $ to $\Hh_A \otimes \Hh_A$ with matrix elements
\begin{equation}
\bra{i_A, j_A} X^{\c R} \ket{k_B , l_B} \equiv   \bra{i_A , k_B} X \ket{j_A , l_B} \; .
\label{reshuf}
\end{equation}
Thus, the reshuffling operation transforms a $n_A n_B\times n_An_B$ matrix onto a $n_A^2\times n_B^2$ matrix and vice versa. It provides a connection between the superoperator $\widehat{\Phi}_A$ associated to the quantum channel ${\Phi}_A$ and the corresponding Jamio{\l}kowski state thanks to the following relation~\cite{Bengtsson}:
\begin{equation} \label{eq-relation_superop_Jamiolkowski}
\big( \widehat{\Phi}_A \big)^{\c R} =n_A\rho^{\Phi_A} \; .
\end{equation}

Our necessary condition for a channel to be quantumness breaking
is based on  the following theorem:

\begin{theorem}\label{thm99}
For any state $\rho$ of a bipartite quantum system $AB$, the Hilbert-Schmidt geometric discord $D^{G}_{\HS}(\rho)$ is bounded from below in the following fashion:

\begin{equation} \label{eq-inka}
D^{G}_{\HS}(\rho) \geq \mu_{n_A+1}+\ldots + \mu_{n_A^2} \; ,
\end{equation}
where $\mu_{1} \geq \cdots \geq \mu_{n_A^2}$ are the squared singular values of the reshuffled density matrix $\rho^{\c R}$ in decreasing order.
\end{theorem}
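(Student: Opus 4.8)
The plan is to recognise the measurement optimisation defining $D^G_\HS$ as a Ky Fan variational problem for the positive semidefinite operator $M := \rho^{\c R}(\rho^{\c R})^{\dagger}$ acting on $\Hh_A \otimes \Hh_A$, whose eigenvalues are exactly the squared singular values $\mu_1 \geq \ldots \geq \mu_{n_A^2} \geq 0$ of $\rho^{\c R}$ (with zeros appended if $n_B^2 < n_A^2$).

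First I would start from Eq.~(\ref{eq-HS_geo_discord}), which gives $D^G_\HS(\rho) = \tr\rho^2 - \max_{\{\ket{\alpha_i}\}} \sum_{i=1}^{n_A} \tr_B \bra{\alpha_i}\rho\ket{\alpha_i}^2$, the maximum running over all orthonormal bases $\{\ket{\alpha_i}\}$ of $\Hh_A$. Since the reshuffling operation $\c R$ only rearranges the matrix entries of $\rho$, it preserves the Hilbert-Schmidt norm, so $\mu_1 + \ldots + \mu_{n_A^2} = \tr M = \|\rho^{\c R}\|_\HS^2 = \tr\rho^2$. Hence the right-hand side of Eq.~(\ref{eq-inka}) equals $\tr\rho^2 - (\mu_1 + \ldots + \mu_{n_A})$, and the theorem is equivalent to the bound $\max_{\{\ket{\alpha_i}\}} \sum_{i=1}^{n_A} \tr_B \bra{\alpha_i}\rho\ket{\alpha_i}^2 \leq \mu_1 + \ldots + \mu_{n_A}$.

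The core step is to produce, for each orthonormal basis $\{\ket{\alpha_i}\}$ of $\Hh_A$, a rank-$n_A$ orthogonal projection $P$ on $\Hh_A \otimes \Hh_A$ with $\tr(PM) = \sum_{i=1}^{n_A} \tr_B \bra{\alpha_i}\rho\ket{\alpha_i}^2$. I would take $P = \sum_{i=1}^{n_A} \ketbra{v_i}{v_i}$ with $\ket{v_i} := \ket{\alpha_i} \otimes \ket{\alpha_i^{\ast}}$, where $\ket{\alpha_i^{\ast}}$ denotes the complex conjugate of $\ket{\alpha_i}$ in the computational basis; these vectors are orthonormal because $\braket{v_i}{v_j} = \braket{\alpha_i}{\alpha_j}\,\overline{\braket{\alpha_i}{\alpha_j}} = \delta_{ij}$. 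A short computation of matrix elements, using the definition~(\ref{reshuf}) of $\c R$ and the self-adjointness of $\rho$, then shows $\bra{v_i} M \ket{v_i} = \sum_{k,l} | \bra{\alpha_i \otimes k}\rho\ket{\alpha_i \otimes l} |^2 = \tr_B \bra{\alpha_i}\rho\ket{\alpha_i}^2$, where $\{\ket{k}\}$ is any orthonormal basis of $\Hh_B$; summing over $i$ gives $\tr(PM) = \sum_i \tr_B \bra{\alpha_i}\rho\ket{\alpha_i}^2$. Finally, the Ky Fan maximum principle (see, e.g., Ref.~\cite{Bathia}) yields $\tr(PM) \leq \mu_1 + \ldots + \mu_{n_A}$ for every orthogonal projection $P$ of rank $n_A$; maximising over bases and inserting this into the reformulation above completes the proof.

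The argument is essentially elementary once this dictionary between the optimisation in Eq.~(\ref{eq-HS_geo_discord}) and the singular value decomposition of $\rho^{\c R}$ is established. The one place demanding care is the matrix-element bookkeeping in the core step -- in particular, conjugating the basis in the second copy of $\Hh_A$, which is precisely what makes the ``diagonal'' vectors $\ket{v_i}$ genuinely orthonormal and forces $\bra{v_i} M \ket{v_i}$ to reproduce exactly the summand $\tr_B \bra{\alpha_i}\rho\ket{\alpha_i}^2$ rather than an unwanted mixed term produced by the naive choice $\ket{v_i} = \ket{\alpha_i}\otimes\ket{\alpha_i}$. Apart from that, the proof rests only on two standard facts: invariance of the Frobenius norm under reshuffling, and the Ky Fan inequality $\tr(PM) \leq \sum_{k \leq \rank P}\mu_k$ for a positive operator $M$ with eigenvalues $\mu_1 \geq \mu_2 \geq \ldots$.
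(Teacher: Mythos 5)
Your proof is correct. It reaches the same bound as the paper but packages the argument differently: the paper starts from the distance-minimization definition $D^G_\HS(\rho)=\min_{\sigma_\Aclas}\|\rho-\sigma_\Aclas\|_\HS^2$, observes that every reshuffled classical-quantum state $\sigma_\Aclas^{\c R}$ has rank at most $n_A$, relaxes the minimization to all rank-$n_A$ matrices, and invokes the low-rank (Eckart--Young) approximation theorem together with the min-max principle; you instead start from the Luo--Fu measurement formula in Eq.~(\ref{eq-HS_geo_discord}), rewrite the measurement term exactly as $\tr(PM)$ with $M=\rho^{\c R}(\rho^{\c R})^\dagger$ and $P=\sum_i\ketbra{\alpha_i\otimes\alpha_i^\ast}{\alpha_i\otimes\alpha_i^\ast}$, and finish with the Ky Fan maximum principle. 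The central object is identical in both proofs --- your projector $P$ is exactly the projector $\Pi=\sum_i\ketbra{\alpha_i}{\alpha_i}\otimes\ketbra{\overline{\alpha}_i}{\overline{\alpha}_i}$ that the paper attaches to $\sigma_\Aclas^{\c R}$, and your computation $\bra{v_i}M\ket{v_i}=\tr_B\bra{\alpha_i}\rho\ket{\alpha_i}^2$ checks out, as do the orthonormality of the $\ket{v_i}$ and the identity $\tr M=\tr\rho^2$. What your version buys is that the inequality is kept as an exact trace identity until the very last step, so the saturation condition is immediate: equality holds precisely when the optimal measurement projector $P$ coincides with the spectral projector onto the top-$n_A$ eigenspace of $M$, which is the same condition the paper extracts at the end of its proof (and uses in Appendix~D). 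What the paper's version buys is that it never needs the Luo--Fu formula and works directly with arbitrary rank-$n_A$ approximants, which makes the connection to the reshuffled rank of classical-quantum states, and hence to the quantumness-breaking application, slightly more transparent.
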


Recall that the squared singular values of $\rho^{\c R}$ are the
eigenvalues $\mu_i$ of the $n_A^2 \times n_A^2$ matrix $\rho^{\c R} (\rho^{\c R})^\dagger$.
In Appendix~\ref{app-lower_bound_disc_resp_HS} we prove
that the inequality in Eq.~(\ref{eq-inka}) turns into an  equality
provided that party $A$ is a qubit and $\rho$ has
maximally mixed marginals $\rho_A = \idty/2$ and
$\rho_B = \idty/n_B$.   A bound similar to that of Eq. (\ref{eq-inka})
has been derived in Ref.~\cite{Rana2012}, where it was also found 
that this bound is saturated for Bell-diagonal states.

 It is worthwhile mentioning that the singular values of  $\rho^{\c
   R}$ appear  in the generalized Schmidt decomposition of
mixed states. Given an
arbitrary  density matrix $\rho$ on $\Hh_A \otimes \Hh_B$, this decomposition reads
 (see, e.g., Refs.~\cite{Bengtsson,Spehner_review})
\begin{equation}
\rho = \sum_{m=1}^{n^2_A} \sqrt{\mu_m} X_m \otimes Y_m \; ,
\end{equation}
where $\{ X_m\}_{m=1}^{n_A^2}$ and $\{ Y_p\}_{p=1}^{n_B^2}$ are
orthonormal bases of the Hilbert spaces formed by all $n_A \times n_A$ matrices
and all $n_B \times n_B$ matrices, respectively
(\ie , $\tr_A( X_m^\dagger X_n ) = \delta_{mn}$ and $\tr_B
(Y_p^\dagger Y_q) = \delta_{pq}$) and we have assumed  $n_A\leq
n_B$. The matrices $X_m$ and $Y_p$
are given in terms of the eigenvectors of $\rho^{\c R} (\rho^{\c R})^\dagger $
and $(\rho^{\c R})^\dagger \rho^{\c R} $, respectively.
Note that $\sum_m \mu_m$ coincides with the state purity
$P = \tr \rho^2$. Moreover, $\sum_m \sqrt{\mu_m} >1$ is a
sufficient (but not necessary)
condition for entanglement~\cite{Chen03}. Analogously, it follows from
Theorem~\ref{thm99}  that $\sum_{m>n_A} \mu_m >0$ is a sufficient
condition for $\rho$ to be quantum correlated. Indeed,
although $D^{G}_{\HS}$ is not a proper measure of quantum correlations, it
satisfies Axiom (i) of Section~\ref{sec-intro} (see Sec.~\ref{sec-def_geo_discords_proper}).

In view of the relation~(\ref{eq-relation_superop_Jamiolkowski}) and of the aforementioned
characterization of quantumness breaking channels in
Ref.~\cite{Korbicz2012}, one deduces from Theorem~\ref{thm99} the following result:

\begin{corollary}
If the rank of $\widehat{\Phi}_A$ is larger than $n_A$, then the channel $\Phi_A$ is not quantumness breaking.
\end{corollary}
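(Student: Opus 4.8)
The plan is to obtain the statement as a contrapositive consequence of Theorem~\ref{thm99}, combined with two facts already recalled above: the characterization of Ref.~\cite{Korbicz2012}, stating that $\Phi_A$ is quantumness breaking if and only if its Jamio{\l}kowski state $\rho^{\Phi_A}$ is classical-quantum, and the fact (established in Sec.~\ref{sec-def_geo_discords_proper}) that $D^G_\HS$ obeys Axiom~(i) and hence vanishes exactly on classical-quantum states. Concretely, I would assume that $\Phi_A$ is quantumness breaking and show that this forces $\rank \widehat{\Phi}_A \leq n_A$; negating both statements then yields the corollary.

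First I would apply Theorem~\ref{thm99} to the bipartite state $\rho = \rho^{\Phi_A}$ of $AB$, noting that here $n_B = n_A$. Since $\Phi_A$ is quantumness breaking, $\rho^{\Phi_A} \in \classQ$, so $D^G_\HS(\rho^{\Phi_A}) = 0$ by Axiom~(i). Inserting this into Eq.~(\ref{eq-inka}) gives $0 \geq \mu_{n_A+1} + \cdots + \mu_{n_A^2}$, where $\mu_1 \geq \cdots \geq \mu_{n_A^2} \geq 0$ are the squared singular values of the reshuffled matrix $(\rho^{\Phi_A})^{\c R}$. As these are non-negative, all of $\mu_{n_A+1}, \ldots, \mu_{n_A^2}$ must vanish, so the $n_A^2 \times n_A^2$ matrix $(\rho^{\Phi_A})^{\c R}$ has at most $n_A$ nonzero singular values; that is, $\rank (\rho^{\Phi_A})^{\c R} \leq n_A$.

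Next I would transfer this rank bound to the superoperator. Because $n_A = n_B$, the reshuffling map $\c R$ is an involution on $n_A^2 \times n_A^2$ matrices, which follows directly from the index identities in Eq.~(\ref{reshuf}); applying $\c R$ to both sides of Eq.~(\ref{eq-relation_superop_Jamiolkowski}) then gives $\widehat{\Phi}_A = n_A (\rho^{\Phi_A})^{\c R}$, so $\rank \widehat{\Phi}_A = \rank (\rho^{\Phi_A})^{\c R} \leq n_A$. Taking the contrapositive: if $\rank \widehat{\Phi}_A > n_A$, then $\Phi_A$ is not quantumness breaking, which is the claim.

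The bulk of the work is already packaged in Theorem~\ref{thm99}, so the present argument is essentially a chaining of that bound with the vanishing of $D^G_\HS$ on $\classQ$ and with the dictionary between $\widehat{\Phi}_A$ and $\rho^{\Phi_A}$; the only step requiring a little care is the rank transfer via the reshuffling involution, and one should record that the index range $n_A+1,\dots,n_A^2$ in Eq.~(\ref{eq-inka}) makes sense precisely because $\rho^{\Phi_A}$ is a state on $\Hh_A \otimes \Hh_B$ with $n_B = n_A$, so $(\rho^{\Phi_A})^{\c R}$ carries exactly $n_A^2$ singular values. (The corollary is of course vacuous for $n_A = 1$ and informative for every $n_A \geq 2$.)
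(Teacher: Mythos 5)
Your proposal is correct and follows exactly the route the paper intends: combine the Korbicz--Horodecki--Horodecki characterization of quantumness breaking channels with the vanishing of $D^G_\HS$ on classical-quantum states and Theorem~\ref{thm99} to force $\rank (\rho^{\Phi_A})^{\c R} \leq n_A$, then transfer the rank bound to $\widehat{\Phi}_A$ via Eq.~(\ref{eq-relation_superop_Jamiolkowski}) and take the contrapositive. The paper leaves these steps implicit ("one deduces"), and your write-up correctly fills them in, including the one point that genuinely needs care, namely that the rank is preserved because reshuffling is an involution when $n_A = n_B$ rather than by any general rank-invariance of $\c R$.
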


Theorem~\ref{thm99} actually provides a quantitative estimate which can
be used to discriminate  channels
that are not quantumness breaking, since it gives a lower bound on
the amount of quantum correlations that survive after the
action of a local quantum channel  $\Phi_A$ if the input state is the maximally entangled state
$\ket{\Psi_+}$. Such residual amount, as measured e.g. by the trace
geometric discord $D^G_{\tr}$, cannot be smaller than the sum of the $n_A^2-n_A$ smallest squared
singular values of $\widehat{\Phi}_A/n_A$ (recall that $D^G_{\tr} \geq D_\HS^G$, see Eq.~(\ref{eq-trivial_bounds})).

\vspace{2mm}

\Proofof{Theorem~\ref{thm99}}
Since the reshuffling procedure consists only in exchanging matrix
entries, it neither changes the Hilbert-Schmidt norm of a matrix, nor
the Hilbert-Schmidt distance between two matrices which are both
reshuffled by $\c R$. Observe that the reshuffling operation
transforms a classical-quantum  state $\sigma_\Aclas$
into a matrix of rank equal to the dimensionality $n_A$ of subsystem
$A$. More precisely, consider a classical-quantum state
$\sigma_\Aclas$ given by Eq.~(\ref{eq:cq}). Then
$\Pi \sigma_\Aclas^{\Rr}= \sigma_\Aclas^{\Rr}$, where $\Pi$ is the
projector of rank $n_A$ defined by
$\Pi= \sum_i \ketbra{\alpha_i}{\alpha_i} \otimes
\ketbra{\overline{\alpha}_i}{\overline{\alpha}_i}$ (the bars denote
complex conjugation in the standard basis, \ie,
$\braket{j_A}{\overline{\alpha}_i}= \braket{j_A}{\alpha_i}^\ast$ for
any $j_A=1,\ldots, n_A$). We will now estimate the geometric discord
$D^G_\HS ( \rho)$ by the Hilbert-Schmidt distance from $\rho^{\c R}$
to the nearest  $n_A^2 \times n_B^2$ matrix  $M_{n_A}$ of rank
$n_A$:
\begin{equation}
D_\HS^G ( \rho) = \min_{\sigma_\Aclas } \left\|\rho -\sigma_\Aclas \right\|_{\HS}^2
\geq\min_{M_{n_A}} \left\| \rho^{\c R}-M_{n_A}\right\|_{\HS}^2 \; .
\label{chain2}
\end{equation}
On the other hand, by the low-rank approximation theorem (see, e.g.,
Ref.~\cite{HJ}) one has
\begin{eqnarray}
\label{eq-proof_theo9bis}
\nonumber
\left\| \rho^{\c R}-M_{n_A}\right\|_{\HS}^2
& = & \tr \big(  \rho^{\c R}  - M_{n_A} \big)^\dagger \Pi \big( \rho^{\c R}  - M_{n_A} \big)
 + \tr \big( \rho^{\c R}  - M_{n_A} \big)^\dagger ( 1- \Pi) \big( \rho^{\c R}  - M_{n_A} \big)
\\
\nonumber
& = &
 \left\| \Pi \rho^{\c R}-M_{n_A}\right\|_{\HS}^2 + \tr \rho^{\c R} \big(  \rho^{\c R} \big)^\dagger  (1-\Pi )
\\
& \geq &
\tr \rho^{\c R} \big( \rho^{\c R} \big)^\dagger  (1-\Pi) \; \geq \;
 \mu_{n_A+1}+\cdots + \mu_{n_A^2} \; ,
\end{eqnarray}
where $\Pi$ is the projector of rank $n_A$ on the range of $M_{n_A}$
and $\mu_m$ are the eigenvalues of
$\rho^{\c R} \big(  \rho^{\c R} \big)^\dagger$ (\ie, the squared singular values of $\rho^{\c R}$) in
decreasing order. The last inequality comes from the min-max
principle (see, e.g., Ref.~\cite{Bathia}). Comparing Eqs.~(\ref{chain2})
and~(\ref{eq-proof_theo9bis}), the desired  bound~(\ref{eq-inka})
follows. Note that, according to Eq.~(\ref{eq-proof_theo9bis}), this inequality turns into an  equality if
$\Pi^\opt \rho^{\c R} = \sigma_{\HS, \rho}^{\Rr}$, where $\Pi^\opt$ is
the projector on the sum of the eigenspaces associated to the largest
$n_A$ eigenvalues of
$\rho^{\c R} \big( \rho^{\c R} \big)^\dagger$ and $\sigma_{\HS, \rho}$ is the closest
classical-quantum state to $\rho$  (as measured by the Hilbert-Schmidt
distance), which is given by Eq.~(\ref{eq-closest_states_meas_ind_geo_disc_HS}).
\finpro

\section{Discussion and conclusions}\label{som}

One of the most relevant results of the present study is that the
Hellinger geometric discord $D^G_\Hel $ and the Hellinger discord of
response $D^R_\Hel$ provide the first instance of {\em bona fide}  measures of
quantum correlations which are at the same time easy to compute (see
Eqs.~(\ref{eq-explicit_formula_geo_disc_2_qubits})
and~(\ref{eq-closed_formula_discresp_Hel})), satisfy all  the axiomatic criteria for
proper measures of quantum discord given in the Introduction, and enjoy clear
operational interpretations  in quantum protocols. 
They thus satisfy all the fundamental requirements of
computability, reliability, and operational viability.

Let us briefly discuss the operational interpretations of  these two
measures and of the other geometric
measures of quantum discord studied in this work. If the
reference subsystem $A$ is a qubit, the Hellinger discord of response
coincides with the LQU. The latter indeed enjoys
a simple operational interpretation described  in Sec.~\ref{sec-def_discords}.
Thanks to the relation between $D^R_\Hel$ and $D^G_\Hel$ provided by 
Eq.~(\ref{eq-disc_resp_Hel_n=2,3}), the same holds for the  Hellinger geometric discord.
Besides this interpretation, $D^R_\Hel$ and $D^G_\Hel$ enjoy further operational meanings in terms of the
minimum probability of error in discriminating two equiprobable
quantum states if infinitely many copies can be used to distinguish
them~\cite{Audenaert2007}. 

In a one-shot scenario, the minimum
probability of error in discriminating two equiprobable
states $\rho_1$ and $\rho_2$ 
is given in terms of the trace norm $\| \rho_2 - \rho_1\|_{\tr}$ 
according to the Helstrom formula~\cite{Helstrom1976}, see
Eq.~(\ref{eq-Helstom_formula}). This formula
grants an operational meaning to all geometric measures of quantum
discord defined with the trace distance, for instance in the
context of quantum reading~\cite{Roga2015}.   In this protocol~\cite{Pirandola2011b}, the task
is to distinguish the output states of a quantum transmitter
which goes through an unknown binary cell changing 
the transmitter states. If the actions of the binary cells on these
states are given by the identity and
local unitary transformations with a harmonic spectrum, the minimum
probability of error maximized over all possible realizations of the
cells is a simple function of the trace discord of response $D^R_{\tr}$.
As we have shown in the present work, if subsystem $A$ is a qubit then $D^R_{\tr}$ coincides with
the trace geometric discord $D^G_{\tr}$ and the trace
measurement-induced geometric discord $D^M_{\tr}$. Hence, also these
last two measures enjoy direct operational
interpretations. Moreover, as shown in Ref.~\cite{Roga2015}, the
minimum probability of error in quantum reading features tight lower and upper bounds 
that are simple functions of the Bures and Hellinger discords of response $D^R_{\Bu}$ and $D^R_\Hel$.

Furthermore, the Bures geometric discord $D^G_{\Bu}$ enjoys a simple
operational interpretation in terms of the minimum probability of
error in the task of
discriminating selected states within a given ensemble, as discussed
in Ref.~\cite{Spehner2013}. Due to the relation obtained above between
$D^G_{\Bu}$ and $D^R_{\Bu}$ when party $A$ is a
qubit, see Eq.~(\ref{burrg}), this
operational interpretation extends also to the Bures discord of
response $D^R_{\Bu}$.

For the sake of completeness, we should also mention some further instances of active
research fields of quantum technologies in which geometric measures of quantum correlations
find interesting applications. These include protocols such as quantum
metrology with unknown disturbance~\cite{GaussianMetrology2014},
quantum illumination~\cite{Farace2014}, and entanglement distribution
between system and apparatus during a measurement process~\cite{Adesso2014}.

In conclusion, we have investigated the main properties of different
classes of geometric measures of quantum correlations. We have
characterized, quantified, and compared them for the most significant contractive
distances (trace, Bures, and Hellinger distances) and operations (geometric discord, measurement-induced
geometric discord, and discord of response). We have proven 
a variety of bounds and algebraic relations
between these geometric measures. The main results are summarized in
the synoptic tables of Section~\ref{sec-main_result}
(see Tables~\ref{tab1}-\ref{tab3}). 
Thanks to the one-to-one correspondence that we
have established between some of these measures, one can extend the
physical interpretation from one class of measures to other classes
that are in direct correspondence with the former. We have found that
direct one-to-one correspondences exist only in the
case of low-dimensional reference subsystems (qubit or
qutrit). Otherwise, in more general situations, we have established a
substantial set of inequalities, some of them being tight.

We have also established that different geometric measures of
quantum discord induce in general different orderings of the discordant states. This phenomenon is quite analogous to the
different ordering of quantum states  established by different
entanglement measures~\cite{Verstraete2003,Adesso2005}.
In particular, different measures of quantum discord identify different classes of maximally
quantum-correlated states with a fixed purity. On the other hand, the set of maximally quantum-correlated states with arbitrary purity
is independent of the choice of the distances and operations, and coincides with the
set of maximally entangled states. 

Finally, we have established a useful role also for the
Hilbert-Schmidt geometric discord $D^{G}_{\HS}$, notwithstanding that
it is strictly speaking  not a measure of quantum correlations due to the fact
that the Hilbert-Schmidt distance is not contractive under
CPTP maps. Indeed, $D^{G}_{\HS}$ provides useful bounds on
{\em bona fide} geometric measures based on contractive distances.
Furthermore, we have exploited this fact in order to determine a necessary
condition for local quantum channels to be quantumness breaking, namely,
to destroy all quantum correlations featured by arbitrary
input states. This condition can be formulated in terms of the Jamio\l kowski state that
corresponds to the given channel.

\begin{acknowledgments}
F.I. and W.R. acknowledge the EU FP7 Cooperation STREP Projects iQIT - integrated Quantum Information Technologies, Grant Agreement No. 270843, and EQuaM -
Emulators of Quantum Frustrated Magnetism, Grant Agreement
No. 323714. They also acknowledge financial support from the Italian
Minister of Scientific Research (MIUR) under the national PRIN
programme. D.S. acknowledges financial support from the ANR project
no. ANR-13-JS01-0005-01. We are grateful to S.~Rana for pointing us
the results of Ref.~\cite{Piani2014} on the trace measurement-induced
geometric discord.
\end{acknowledgments}

\appendix

\section{Proofs of Theorems~\ref{prophel} and~\ref{qtmeqdisc} of Section~\ref{sec-disc_of_response}}
\label{app-proof_theo5-7}

\subsection{Proof of Theorem \ref{prophel} and of some  inequalities in Table~\ref{tab3}}
\label{app-proof_theo5}

We first consider the Hilbert-Schmidt discord of response. We will
afterward make use of the second relation in Eq.~(\ref{eq-rel_disc_resp_HS_Hel})
to deduce the statements on $D^R_\Hel$ in
Theorem~\ref{prophel}.
According to the definition of $D^R$ in Eq.~(\ref{def:quantumn}), one has:
\begin{equation} \label{eq-HS_disc_resp_app-A}
2 D_{\HS}^R ( \rho) = \min_{U_A \in \Uu_\Lambda} \| \rho - U_A \otimes \idty \,\rho \,U_A^\dagger \otimes \idty \|_{\HS}^2 \; ,
\end{equation}
the minimum being over all unitaries $U_A$ acting on subsystem $A$
with spectrum given by the complex roots of unity.  Any
such unitary operator has the form given in Eq.~(\ref{eq-unitaries_harm_spectrum}).
Therefore,
\begin{equation} \label{eq-discord_resp_HS}
2 D_{\HS}^R ( \rho ) = \min_{ \{ \ket{\alpha_i} \} }
\biggl\| \sum_{i, j=1}^{n_A} \Bigl( 1 - e^{-\I \frac{2 \pi (i-j)}{n_A}} \Bigr) \ketbra{\alpha_i}{\alpha_j} \otimes \bra{\alpha_i} \rho \ket{\alpha_j} \biggr\|_{\HS}^2 \; .
\end{equation}
Now, the squared Hilbert-Schmidt norm of a block matrix is the sum of
the squared Hilbert-Schmidt  norms of each blocks, \ie,
$\| \sum_{i,j} \ketbra{\alpha_i}{\alpha_j} \otimes X_{ij} \|_{\HS}^2 =\sum_{i,j} \| X_{ij}\|_{\HS}^2$. This yields the relation
\begin{equation} \label{eq-discord_resp_HS_2}
D_{\HS}^R ( \rho) = 2 \min_{ \{ \ket{\alpha_i}\} }
\sum_{i\not= j} \sin^2 \Bigl(  \frac{\pi (i-j)}{n_A} \Bigr)  \bigl\| \bra{\alpha_i} \rho \ket{\alpha_j} \bigr\|_{\HS}^2 \; .
\end{equation}
Similarly, the Hilbert-Schmidt measurement-induced geometric discord takes the following expression:
\begin{equation} \label{eq-meas_ind_discord_HS}
D_{\HS}^M ( \rho) = \min_{ \{ \ket{\alpha_i}\} }
\sum_{i\not= j} \bigl\| \bra{\alpha_i} \rho \ket{\alpha_j} \bigr\|_{\HS}^2 \; .
\end{equation}
By comparing Eqs.~(\ref{eq-discord_resp_HS_2}) and~(\ref{eq-meas_ind_discord_HS}), we obtain that
\begin{equation}
D^{R}_{\HS}(\rho)=\left\{ \begin{array}{ll}
         2D^M_{\HS}(\rho) =  2D^G_{\HS}(\rho)  & \mbox{if $n_A=2$}  \\[5pt]
       \frac{3}{2}D^M_{\HS}(\rho)  = \frac{3}{2} D^G_{\HS}(\rho)   & \mbox{if $n_A = 3$} \; , \end{array} \right.
\label{drna}
\end{equation}
where we have made use of the equality between the Hilbert-Schmidt
geometric discord and the measurement-induced
geometric discord (see Section~\ref{sec-rel_geo_dic_HS_Hel}). For
$n_A > 3$, comparing Eqs.~(\ref{eq-discord_resp_HS_2}) and~(\ref{eq-meas_ind_discord_HS}) and observing that
\begin{equation}
\sin^2 \left( \frac{\pi}{n_A} \right)  \leq   \sin^2 \left( \frac{\pi (i-j)}{n_A} \right) \leq 1
\quad , \quad i,j=1,\ldots, n_A \; , \; i\not=j \; ,
\end{equation}
we obtain the following inequalities
\begin{equation} \label{thirdprop}
2 \sin^2 \left( \frac{\pi}{n_A} \right) D^G_{\HS}(\rho) \leq
D_{\HS}^{R}(\rho) \leq 2 D^G_{\HS}(\rho) \; .
\end{equation}
The above bounds remain valid for unnormalized non-negative operators
$\rho$. One may thus replace $\rho$ by $\sqrt{\rho}$ in
Eq.~(\ref{thirdprop}). In view of the relations between $D_\Hel^R$ and
$D_\HS^R$ and between $D_\Hel^G$ and $D_\HS^G$ (see Theorem~\ref{theo-rel_geo_disc_Hel_HS}), this yields the inequalities
given in Eq.~(\ref{bound_disc_resp_Hel}). Similarly,
the fundamental identities of Eq.~(\ref{eq-disc_resp_Hel_n=2,3}) follow
directly by substituting $\rho$ by $\sqrt{\rho}$ in Eq.~(\ref{drna}).
This concludes the proof of Theorem~\ref{prophel}.
\finpro

\vspace{1mm}

Let us point out  that the relation given in
Eq.~(\ref{drna}) between the Hilbert-Schmidt discord of response and
the Hilbert-Schmidt geometric discord when $n_A=2$
has been found earlier in Ref.~\cite{Giampaolo2013}. Moreover, the lower bound on $D_{\HS}^{R}(\rho)$
in Eq.~(\ref{thirdprop}) is finer than a previously known bound~\cite{Giampaolo2013}.

Several additional results stated in  the main text can be easily derived from the
above considerations. Firstly,
the general expression of the Hellinger discord of response
in Eq.~(\ref{eq-formula_Hellinger_discord_of_response}) is
obtained by replacing $\rho$ by its square root in Eq.~(\ref{eq-discord_resp_HS_2}).
Secondly, the bounds on the
discord of response in terms of the geometric discord for the trace
and Bures distances  reported in Table~\ref{tab3} are obtained
by combining Eq.~(\ref{thirdprop}) with the trivial bounds of
Eqs.~(\ref{eq-trivial_bounds}) and~(\ref{eq-trivial_boundsbis}).  More precisely,
the bounds on $D_{\tr}^R$ in terms of $D_{\tr}^G$ for $n_A\geq 3$
are consequences of  Eq.~(\ref{thirdprop}), the two inequalities
of Eq.~(\ref{eq-trivial_bounds}), and
the corresponding inequalities for the discord of response (with the
correct factor of two coming from the normalization factor $\c N$ in Eq.~(\ref{def:quantumn})).
The bounds on $D_\Bu^R$ in terms of
$D_{\Bu}^G$  are deduced from the previous bounds on $D_{\tr}^R$ by
using Eqs.~(\ref{eq-trivial_boundsbis}) and~(\ref{eq-trivial_bound_disc_response}).

\subsection{Proof of Theorem~\ref{qtmeqdisc}}
\label{app-proof_theo7}

We now turn to Theorem~\ref{qtmeqdisc} on the trace discord of
response. Indeed, $D_{\tr}^R$ is expressed by a formula analogous to
Eq.~(\ref{eq-discord_resp_HS}) with the Hilbert-Schmidt norm replaced by the
trace norm, excepted for a factor of $4$ instead of $2$
in the right-hand side. For $n_A=2$, setting
$X_{12} = \bra{\alpha_1} \rho \ket{\alpha_2}$, the expression for $D_{\tr}^R$ takes the
form 
\begin{eqnarray} \label{eq-dicord_response_qubit}
\nonumber
D_{\tr}^R ( \rho)
& = &
\min_{ \{ \ket{\alpha_i}\} }
\Bigl\| \ketbra{\alpha_1}{\alpha_2} \otimes X_{12} + \ketbra{\alpha_2}{\alpha_1} \otimes X_{12}^\dagger \Bigr\|_{\tr}^2
= \min_{ \{ \ket{\alpha_i}\} } \Bigl(
\tr \sqrt{  \ketbra{\alpha_1}{\alpha_1} \otimes | X_{12}^\dagger |^2  + \ketbra{\alpha_2}{\alpha_2}\otimes | X_{12}  |^2}
\Bigr)^2
\\
& = &
4 \min_{ \{ \ket{\alpha_i}\} } \| X_{12} \|_{\tr}^2 \; ,
\end{eqnarray}
where we have used the identity
$\| X \|_{\tr}=  \| X^\dagger \|_{\tr}$ in the last line. A similar calculation shows that the trace
measurement-induced geometric discord is given by
$D^M_{\tr}  ( \rho) = 4 \min_{ \{ \ket{\alpha_i}\} } \| X_{12} \|_{\tr}^2$. Thus we arrive at the
important equality
$D_{\tr}^R ( \rho) = D^M_{\tr}  ( \rho)$. 
Furthermore, it is proven in Ref.~\cite{Nakano2013}   
that $D_{\tr}^G ( \rho) = D^M_{\tr}  ( \rho)$ when $n_A=2$.
Finally, the bound $D_{\tr}^R ( \rho) \geq D_{\HS}^R ( \rho)$ follows from
Eqs.~(\ref{eq-discord_resp_HS_2}) and~(\ref{eq-dicord_response_qubit}) and from the trivial inequality $\| X\|_{\tr}
\geq \| X\|_{\HS}$.
\finpro

\section{Geometric measures of quantum correlations satisfying Axiom (v)}
\label{app-maximal_discod_resp}

In this Appendix we show that the Bures, Hellinger, and trace discords
of response satisfy Axiom (v) of Section~\ref{sec-intro}  and that their
maximum value equals unity, as stated in
Theorem~\ref{theo-axiom_v_disc_resp}. We prove as well that the
Hellinger geometric discord obeys Axiom (v) for $n_A=2$ and $n_A=3$, as stated in
Table~\ref{tab3}, and we discuss the same issue for the other
geometric measures of quantum discord.

Let us first focus on the discord of response $D^R$ for the Bures and
Hellinger distances. For such distances  it is obvious from the definitions,
Eqs.~(\ref{eq-Burea_dist}),~(\ref{eq-Q_Hellinger_distance}), and~(\ref{def:quantumn}), that $D^R \leq 1$. Furthermore, $D^R (\rho) =1$ if and only if for any local unitary operator $U_A \in \Uu_\Lambda$ with non-degenerate spectrum
$\Lambda = \{ \lambda_1, \ldots, \lambda_{n_A} \}$, one has that the Uhlmann fidelity
$F( \rho, U_A \otimes \idty \,\rho \,U_A^\dagger \otimes \idty ) = 0$
for the Bures distance and $\tr  \sqrt{\rho} \,(U_A \otimes \idty \,\rho
\,U_A^\dagger \otimes \idty )^{1/2}   = 0$ for the Hellinger
distance. Each of these two  conditions is equivalent to $\rho$
and $U_A \otimes \idty \,\rho \, U_A^\dagger \otimes \idty$ having
orthogonal supports. This is in turn equivalent to the following
statement: for any pair of eigenvectors $\ket{\Psi_k}$ and
$\ket{\Psi_l}$ of $\rho$ with nonzero  eigenvalues, it holds that
\begin{equation} \label{eq-proof_theo_axiom_v_disc_resp}
{\tr}_A ( D_{kl} U_A ) =  \bra{\Psi_l} U_A\otimes \idty  \ket{\Psi_k} = 0
\quad \forall \; U_A \in \Uu_\Lambda \; ,
\end{equation}
where we have set $D_{kl} \equiv \tr_B ( \ketbra{\Psi_k}{\Psi_l} )$.

We now argue that Eq.~(\ref{eq-proof_theo_axiom_v_disc_resp}) implies
that $D_{kl} = (\delta_{kl}/n_A ) \idty_A$. Indeed, if $A$ is a self-adjoint matrix then
\begin{equation} \label{eq-proof_theo_axiom_v_disc_respbis}
\tr ( A U ) = 0\quad \forall\; U \in \, \Uu_\Lambda
\quad  \Rightarrow \quad A = a \,\idty \text{ with $a \in {\mathbb{R}}$ \; .}
\end{equation}
To verify that the implication in
Eq.~(\ref{eq-proof_theo_axiom_v_disc_respbis}) holds true,
let us introduce a fixed \ONB   $\{ \ket{i} \}$
of $\Hh_A$.  We take $U_t = e^{-\I H t} U_0 e^{\I Ht}$ with
$U_0 = \sum_i \lambda_i \ketbra{i}{i}$ and  $H$ self-adjoint. Then
$U_t \in\Uu_\Lambda$
for any real $t$. Let $A$ be such that $\tr ( A U ) = 0$ for all $U \in \, \Uu_\Lambda$.
Differentiating $\tr (A U_t)=0$ with
respect to $t$ and setting $t=0$, one obtains $\sum_i \lambda_i
\bra{i} [ H, A] \ket{i} = 0$. Choosing
$H= e^{-\I \theta} \ketbra{i_0}{j_0} + e^{\I \theta} \ketbra{j_0}{i_0}$, this yields
\begin{equation}
( \lambda_{i_0} - \lambda_{j_0} ) {\rm Im} \{ e^{\I \theta} \bra{i_0} A \ket{j_0} \} = 0 \; .
\end{equation}
Hence, in view of the non-degeneracy assumption on the spectrum,  one
has $\bra{i_0} A \ket{j_0} = 0$ if $i_0 \not= j_0$, so that $A$ is diagonal in the basis $\{ \ket{i}\}$. This basis being
arbitrary, one concludes that $A$ is proportional to the identity
operator. Thus Eq.~(\ref{eq-proof_theo_axiom_v_disc_respbis}) holds
true. Thanks to this property and since $D_{kk}$ is self-adjoint and
has trace one, Eq.~(\ref{eq-proof_theo_axiom_v_disc_resp}) yields
\begin{equation} \label{eq-D_kkisunity}
D_{kk} = {\tr}_B  (\ketbra{\Psi_k}{\Psi_k}) = n_A^{-1} \idty_A\;.
\end{equation}
 Similarly, $(D_{kl}+ D_{lk})/2$ and
$(D_{kl}- D_{lk})/2\I$ are self-adjoint and have vanishing traces for
$k \not= l$, so that according to
 Eqs.~(\ref{eq-proof_theo_axiom_v_disc_resp}) and~(\ref{eq-proof_theo_axiom_v_disc_respbis})
one has $D_{kl} = 0$  for $k \not= l$.

One deduces from Eq.~(\ref{eq-D_kkisunity}) that $\ket{\Psi_k}$ is a maximally entangled pure state,
\ie, it has the form given in Eq.~(\ref{eq-max_entangled_state}) (this follows by
observing that the Schmidt coefficients of $\ket{\Psi_k}$  are the
eigenvalues of the reduced state $D_{kk}$). For $k\not= l$, the
identity $D_{kl} = 0$ is then
equivalent to $\braket{\chi_{lj}}{\chi_{ki}} = 0$ for all
$i,j=1,\ldots, n_A$. As a result, $\rho = \sum_k p_k
\ketbra{\Psi_k}{\Psi_k}$ is a convex combination of some maximally
entangled states $\ket{\Psi_k}$ given by
Eq.~(\ref{eq-max_entangled_state}) and satisfying the orthogonality
conditions stated after this equation. As emphasized in
Section~\ref{sec-axiom(v)}, any maximally entangled state is
given by such a convex combination.
We have thus proven that for
the Bures and Hellinger distances, $D^R(\rho) =
1$ implies that $\rho$ is a maximally entangled state. 

By a  similar
reasoning, the converse statement  is
also true  provided that the eigenvalues $\lambda_i\in
\Lambda$
of the family  of unitary operators  $\Uu_\Lambda$ in the definition of the discord of response
satisfy $\sum_i \lambda_i =0$. This is in particular the case when
the $\lambda_i$ are the roots of unity, as considered in this paper.

To show that $D^R$ satisfies Axiom (v) also for the trace distance, we
make use of the trivial bound of
Eq.~(\ref{eq-trivial_bound_disc_response}). Accordingly,
$D_{\tr}^R (\rho ) \leq 1$ for any bipartite state $\rho$, with equality holding if and only if $D^R_\Bu
(\rho ) = 1$. It has been proven above that this is equivalent to
$\rho$ being maximally entangled, hence the result.

Let us finally discuss the case of the Hilbert-Schmidt distance.
From Eq.~(\ref{eq-HS_disc_resp_app-A}) one gets
\begin{equation}
D_\HS^R (\rho)
 = \tr \rho^2 - \max_{U_A \in \Uu_\Lambda}
 \tr  \rho  \, U_A \otimes \idty \,\rho \,U_A^\dagger \otimes \idty
 \leq 1\; .
\end{equation}
This inequality is saturated if and only if the two following
conditions are satisfied: (a) $\tr \rho^2 =1$ and (b)
$\rho$ and  $ U_A \otimes \idty \,\rho \,U_A^\dagger \otimes \idty$
have orthogonal supports. Thus, by the above arguments, $D^R_\HS(\rho)=1$ if
and only if $\rho$ is a maximally entangled pure state. However, if the space
dimensions of the two subsystems $A$ and $B$ are such that $n_B \geq 2 n_A$, one
can find maximally entangled states $\rho$ of $AB$ which are convex combinations of
two orthogonal maximally entangled pure states given by Eq.~(\ref{eq-max_entangled_state}).
Such states have purities $\tr \rho^2 <1$ and consequently
$D_\HS^R ( \rho) <1$. Therefore, the Hilbert-Schmidt discord of
response does not satisfy Axiom (v).
The
proof  of Theorem~\ref{theo-axiom_v_disc_resp} is now complete.
\finpro

\vspace{2mm}

We have established that the discord of response satisfies Axiom
(v) for the Bures, Hellinger, and trace distances. Let us now study
whether such Axiom holds true as well for the geometric discord and the
measurement-induced geometric discord. For the Bures distance, it is
already known that the answer is positive for the geometric discord $D_\Bu^G$
(see Ref.~\cite{Spehner2013}). Moreover,
Theorem~\ref{theo-Bures_meas_ind_disc_pure_states} above implies that this
is also the case for $D_\Bu^M$. For the Hellinger distance, one finds with the help of 
the bound $D_\Hel^R \leq 1$ and the fundamental relations~(\ref{eq-disc_resp_Hel_n=2,3})   that the geometric discord $D^G_\Hel$ is
bounded from above by $D_\mmax= 2 - 2/\sqrt{n_A}$ when $n_A=2$ and
$n_A=3$, with equality $D^G_\Hel(\rho)= D_\mmax$ holding if and only if
$D_\Hel^R (\rho)= 1$. Thanks to Theorem~\ref{theo-axiom_v_disc_resp}, it
follows   that $D^G_\Hel$ satisfies Axiom
(v) for  $n_A=2$ and $n_A=3$. We believe but so far have not been able
to prove that this remains true for a reference subsystem $A$
with higher-dimensional Hilbert spaces, $n_A>3$. Whether $D^M_\Hel$,
$D^G_{\tr}$, and $D^M_{\tr}$ obey Axiom (v) are other  open
issues. For the last two measures, we only know so far that the
answer is affirmative when $A$ is a qubit,  since then
$D^G_{\tr}= D^M_{\tr}= D_{\tr}^R$ by Theorem~\ref{qtmeqdisc}.
The bound conjectured in Eq.~(\ref{eq-conjecture_bound}) and the
fact that $D^R_{\tr}$ obeys Axiom (v) also yields an affirmative
answer for $D^M_\Hel$ if both $A$ and $B$ are qubits.
All these results are summarized in Table~\ref{tab3}.

\section{Maximally quantum-correlated two-qubit states with a fixed purity} \label{app-max_Q_corr_states}
\subsection{Two-qubit states with a fixed purity maximizing the trace discord of response} \label{app-proof_prop_trmax}

We derive in this Appendix the maximal value of the trace discord of
response $D_{\tr}^{R}(\rho)$ for two-qubit states $\rho$ with a
fixed purity $P$, which is given by Eq.~(\ref{eq-maximal_disc_resp_for_purity_P}). Our calculation is based on the following
conjecture on the most discordant states for
$D_{\tr}^{R}$:

\begin{conjecture}\label{conj}
{\rm Among the two-qubit states $\rho$ with global state purity $P$, those with maximum trace discord of response are
\begin{equation} \label{state1}
\rho^{P<3/8}_\mmax
 = \frac{1}{4}\Big((1+\sqrt{8P-2})\ket{\Phi_+}\bra{\Phi_+}
+ (1-\sqrt{8P-2})\ket{\Phi_-}\bra{\Phi_-}\Big. +\ket{\Psi_+}\bra{\Psi_+}+\ket{\Psi_-}\bra{\Psi_-}\Big)
\end{equation}
for $\frac{1}{4}\leq P \leq \frac{3}{8}$, and
\begin{equation} \label{state2}
\rho^{P>3/8}_\mmax = \frac{1}{6}\Big((2-\sqrt{6P-2}) ( \ket{\Phi_+}\bra{\Phi_+}
+ \ket{\Phi_-}\bra{\Phi_-}) +(2+2\sqrt{6P-2})\ket{\Psi_+}\bra{\Psi_+}\Big) 
\end{equation}
for $\frac{3}{8}< P \leq 1$. Here,
$\ket{\Psi_\pm}= (\ket{01} \pm \ket{10})/\sqrt{2}$
and $\ket{\Phi_\pm}= (\ket{0 0 }\pm \ket{11})/\sqrt{2}$
refer to the four Bell states.}
\end{conjecture}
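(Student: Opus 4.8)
The plan is to use Theorem~\ref{qtmeqdisc}, which gives $D^R_{\tr}(\rho)=D^G_{\tr}(\rho)$ for two qubits, so that Conjecture~\ref{conj} becomes a statement about the maximizers of the \emph{trace geometric discord} on the manifold $\{\rho:\tr\rho^2=P\}$. I would attack this in two stages: (i) reduce the maximization to the three-parameter family of Bell-diagonal states, and (ii) solve the resulting constrained optimization in closed form. Writing any two-qubit state as $\rho=\frac14\bigl(\idty\otimes\idty+\vec a\cdot\vec\sigma\otimes\idty+\idty\otimes\vec b\cdot\vec\sigma+\sum_{mn}T_{mn}\,\sigma_m\otimes\sigma_n\bigr)$, local unitaries preserve both $D^G_{\tr}$ (Axiom~(ii)) and the purity, and can be used to diagonalize $T=\diag(c_1,c_2,c_3)$; the content of stage~(i) is then that, at fixed purity $P=\frac14(1+\|\vec a\|^2+\|\vec b\|^2+c_1^2+c_2^2+c_3^2)$, shifting the ``purity budget'' out of the local Bloch vectors $\vec a,\vec b$ and into the correlations $c_i$ never decreases $D^G_{\tr}$.

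For stage~(ii) I would use the known closed form of the trace geometric discord on Bell-diagonal states from Refs.~\cite{Nakano2013,Paula2013}, namely $D^G_{\tr}(\rho)=c_{\rm int}^2$ where $c_{\rm int}$ is the intermediate element of $\{|c_1|,|c_2|,|c_3|\}$, together with the facts that positivity of $\rho$ is equivalent to $\vec c=(c_1,c_2,c_3)$ lying in the tetrahedron $\mathcal T$ with vertices $(1,-1,1)$, $(-1,1,1)$, $(1,1,-1)$, $(-1,-1,-1)$, and that the purity constraint reads $c_1^2+c_2^2+c_3^2=4P-1$. Since coordinate permutations and sign changes of an even number of coordinates together preserve $\mathcal T$, the constraint sphere, and $c_{\rm int}$, one may assume $|c_1|\ge|c_2|\ge|c_3|$ and maximize $c_2^2$. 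When the tetrahedron constraint is slack the ordering alone forces $c_2^2\le(4P-1)/2$, with equality exactly at $c_3=0$, $|c_1|=|c_2|=\sqrt{(4P-1)/2}$; this point lies in $\mathcal T$ (one Bell-basis eigenvalue becoming $\frac14(1-2|c_1|)$) precisely when $P\le 3/8$, and it is the state~(\ref{state1}), giving $D^R_{\tr}=\frac12(4P-1)$. For $P>3/8$ the maximizer moves onto a face of $\mathcal T$ (one vanishing Bell eigenvalue, i.e. $c_1+c_2+c_3=1$ in suitable signs); a Lagrange/KKT analysis on the circle cut out by this face and the sphere, using the permutation symmetry of the face to locate the critical points at $|c_1|=|c_2|$, yields $|c_1|=|c_2|=\frac13(1+\sqrt{6P-2})$, $|c_3|=\frac13(2\sqrt{6P-2}-1)$, which is the state~(\ref{state2}) and gives $D^R_{\tr}=\frac19(1+\sqrt{6P-2})^2$; this also recovers Conjecture~\ref{prop_trmax}. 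It remains to check continuity and matching of the two branches at $P=3/8$ (value $1/4$), membership of the listed states in $\mathcal T$ for the stated ranges, and uniqueness of the maximizers up to local unitaries (from the strict behavior of $c_{\rm int}^2$ on the relevant constraint set); these are routine.

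The main obstacle is stage~(i): proving that Bell-diagonal states are globally optimal at fixed purity. Unlike the Hilbert--Schmidt case, there is no closed expression for $D^G_{\tr}$ of a generic two-qubit state --- only for $X$-states and quantum-classical states, cf.~Ref.~\cite{Ciccarello2014} --- so a direct comparison is unavailable, and this is precisely why the statement is phrased as a conjecture. I would attempt it by a compactness-plus-variational route: a maximizer exists by compactness of the purity-$P$ manifold, and one would analyze the first-order optimality conditions, which are only one-sided because of the inner minimization over measurement directions in $D^G_{\tr}$, to show that a nonzero $\vec a$ or $\vec b$ is never optimal; alternatively, one could first establish optimality within the $X$-state subclass (where the explicit formula of Ref.~\cite{Ciccarello2014} applies and the candidates live) and then try to bootstrap to all states. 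A second-order small-disturbance analysis around the conjectured states --- as performed numerically by the authors --- would at least certify local optimality unconditionally, but turning this into a global statement appears to require either the missing general two-qubit formula or a new structural argument.
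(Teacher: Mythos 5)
You should first be aware that the paper does not actually prove this statement: it is presented explicitly as a conjecture, supported only by identifying the maximizers among $3\times 10^5$ randomly generated states and testing them by a small-disturbance analysis; the analytical content of the corresponding appendix is limited to evaluating $D^R_{\tr}$ on the conjectured states (by diagonalizing $\rho^{P}_\mmax-U_A\otimes\idty\,\rho^{P}_\mmax\,U_A^{\dagger}\otimes\idty$ and noting that the trace norm is independent of $U_A$), which yields Conjecture~\ref{prop_trmax}. Your proposal takes a genuinely different and more analytical route, and your stage (ii) is correct: combining $D^R_{\tr}=D^G_{\tr}$ from Theorem~\ref{qtmeqdisc} with the closed form $D^G_{\tr}=c_{\rm int}^2$ on Bell-diagonal states, your constrained optimization over the tetrahedron reproduces exactly the states of Eqs.~(\ref{state1}) and~(\ref{state2}) --- $|c_1|=|c_2|=\sqrt{(4P-1)/2}$, $c_3=0$ for $P\le 3/8$, and $|c_1|=|c_2|=\frac13(1+\sqrt{6P-2})$, $|c_3|=\frac13(2\sqrt{6P-2}-1)$ on a face of the tetrahedron for $P>3/8$ --- together with the values in Eq.~(\ref{eq-maximal_disc_resp_for_purity_P}) and the correct transition at $P=3/8$, where positivity of the $\ket{\Phi_-}$ eigenvalue fails. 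This upgrades the paper's purely numerical claim to a theorem \emph{within the Bell-diagonal class}, something the paper does not do; what the paper's sampling buys instead is (non-rigorous) evidence of optimality over \emph{all} two-qubit states.

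The genuine gap is your stage (i), and you have named it correctly: there is no proof that, at fixed purity, the maximum of $D^G_{\tr}$ is attained on Bell-diagonal states. The paper does not close this gap either (hence the label ``Conjecture''), and none of its ingredients --- the closed form of $D^G_{\tr}$ for $X$-states and quantum-classical states from Ref.~\cite{Ciccarello2014}, or the bounds collected in Table~\ref{tab3} --- suffices to close it, since no general two-qubit formula for $D^G_{\tr}$ is available. One technical caution about your stage (ii): the objective $c_{\rm int}^2$ is non-smooth exactly on the locus $|c_1|=|c_2|$ where your candidate maximizers sit, so the Lagrange/KKT step on the face should be replaced or supplemented by the elementary ordering-plus-sphere bound you already use in the interior case, adapted to the face constraint; with that repair the Bell-diagonal part of your argument is sound.
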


This conjecture relies on a thorough numerical analysis using randomly
generated states, as described in Sec.~\ref{mqcs}.
In what follows we determine the values of $D_{\tr}^R ( \rho^{P}_\mmax)$
as a function of $P$.
Since $A$ is a qubit, the unitaries $U_A$ acting on $\Hh_A \simeq
\complex^2$ with  spectrum $\Lambda=\{-1,1\}$ can be decomposed in
terms of the three Pauli matrices $\sigma_x$, $\sigma_y$, and $\sigma_z$ as
\begin{equation}
U_A = \sin \theta  \cos \phi \,\sigma_x+\sin \theta \sin \phi\,\sigma_y+\cos \theta \,\sigma_z \; ,
\label{unitaryparam}
\end{equation}
with some arbitrary angles $\theta \in [0,\pi ]$ and $\phi \in [0,2\pi[$.
We will show that the trace distance between $\rho^{P}_{\mmax}$ and
the unitarily perturbed state $U_A \otimes \idty \, \rho^{P }_{\mmax}
U_A^{\dagger}\otimes \idty$ does not depend on $U_A$. Recall that $d_{\tr} (\rho, \sigma)$ is
equal to the sum of the moduli of the eigenvalues of
$\rho-\sigma$. For the density matrix $\rho^{P<3/8}_{\mmax}$ given by Eq.~(\ref{state1}),
the corresponding eigenvalues are $\pm \frac{1}{4}\sqrt{8P-2}(1+\cos
\theta)$ and $\pm \frac{1}{4}\sqrt{8P-2}(1-\cos \theta )$. The sum of
their moduli does not depend on $(\theta, \phi)$, that is, this sum is
independent of $U_A$. The maximum trace discord of response for states
with purity $P \leq 3/8$ reads
\begin{equation}
D_{\tr}^{R}(\rho^{P<3/8}_\mmax )=\frac{1}{4}\big\|\rho^{P<3/8}_{\mmax} -U_A \otimes \idty \, \rho^{P<3/8}_{\mmax}\, U_A^{\dagger}\otimes \idty \big\|_{\tr}^2=2P-\frac{1}{2} \; .
\end{equation}
For the state $\rho^{P>3/8}_\mmax$ given by Eq.~(\ref{state2}), the
corresponding eigenvalues are
$\pm \frac{1}{4}(2-\beta+f(\beta,\theta))$ and $\pm \frac{1}{4}(2-\beta-f(\beta,\theta ))$, where we have set $\beta=\frac{2}{3}(2-\sqrt{6P-2})$ and
\begin{equation}
f(\beta,\theta) = \frac{1}{\sqrt{2}}\sqrt{8 + \beta^2(5-3\cos{(2\theta)}) - 4\beta(3 - \cos{(2\theta)})} \; .
\end{equation}
Once again, the sum of the moduli of these eigenvalues does not depend
on the angles $\theta$ and $\phi$ of the unitary matrix. Therefore,
the corresponding maximum trace discord of response  reads
\begin{equation}
D_{\tr}^{R} (\rho^{P>3/8}_\mmax )
  =\frac{1}{9} \big(1+\sqrt{6P-2} \big)^2 \; .
\end{equation}

\subsection{Two-qubit states with a fixed purity maximizing the Hellinger discord of response} \label{app_hell_max}

We now study the same problem as before but for the Hellinger
discord of response $D_\Hel^R$.
Let us remark that for two-qubit states the analytical expression of
$D_\Hel^R(\rho)$ is given by
Eq.~(\ref{eq-closed_formula_discresp_Hel}). The maximally
quantum-correlated states according to $D^R_{\Hel}$ are found by a
thorough numerical investigation, as described in Section~\ref{mqcs}.
In the range of values $1/4\leq P\leq 1/3$ of the
global state purity $P$, these states are
the Werner states defined in Eq.~(\ref{eq-Werner_state}) with
parameter $1\leq a_+ \leq 4/3$. Note that such Werner states also
maximize the Hilbert-Schmidt discord of
response $D_\HS^R$. Hence in the range $1/4\leq P\leq 1/3$, corresponding to  region I in Fig.~\ref{bup}(c),
the maximal  Hellinger  discord of response reads
\begin{equation}
D^R_{\Hel}(\rho_W)
= \frac{1}{6} \left( -\sqrt{12 P - 3} - \sqrt{6} \sqrt{-6
  P - \sqrt{12 P - 3} + 3} +3\!\right) \; .
\end{equation}

In the range $1/3 \leq P \leq 0.503$, corresponding to
region II in Fig.~\ref{bup}(c),
we find numerically that the maximally quantum-correlated states with respect to
$D_\Hel^R$ belong to the
following family of matrices of rank 3:
\begin{equation}
\rho_{\mmax}^{1/3 \leq P \leq 0.503}  = \begin{pmatrix}
\!\frac{1}{2}\!+\!a\!-\!b\!&0&0&0\\
0&\!2b\cos^2\phi\!&\!2b\cos\phi\sin\phi\!&0\\
0&\!2b\cos\phi\sin\phi\!&\!2b\sin^2\phi\!&0\\
0&0&0&\!\frac{1}{2}\!-\!a\!-\!b\!
\end{pmatrix} \; ,
\label{formpure}
\end{equation}
where the condition of fixed purity yields $a=\frac{1}{2} \sqrt{-12
  b^2+4 b+2 P-1}$. The optimal values of $b$ and $\phi$ are found by
numerical maximization of $D^R_{\Hel}$ for these states.

For global state purities  $P > 0.503$, corresponding to region III in
Fig.~\ref{bup}(c), we find the following maximally quantum-correlated
states of rank 2:
\begin{equation}
\rho_{\mmax}^{P > 0.503} = \begin{pmatrix}
\!1\!-\!2b\!&0&0&0\\
0&\!2b\cos^2\phi\!&\!2b\cos\phi\sin\phi\!&0\\
0&\!2b\cos\phi\sin\phi\!&\!2b\sin^2\phi\!&0\\
0&0&0&0
\end{pmatrix} \; .
\label{formrank2}
\end{equation}
The condition of fixed purity enable us to eliminate one variable:
$b=\frac{1}{4} \left(\sqrt{2 P-1}+1\right)$. 
For a given purity $P$, the parameter $\phi$ for which $D^R_{\Hel}$ achieves
its maximum reads
\begin{equation}
\cos{\phi}=\!\frac{-1}{2 \sqrt{2\!-\!2 P}}\Bigg(\!-\!4 P\!-\!\sqrt{2\!-\!2 P}\!+\!\sqrt{\!-4 P^2\!+\!6 P\!-\!2}\!+\!4\Bigg.
+2\sqrt{(P\!-\!1) \left(3 P\!-\!2 \sqrt{2\!-\!2 P}\!-\!3 \sqrt{2 P\!-\!1}\!+\!2 \sqrt{\!-\!4 P^2\!+\!6 P\!-\!2}\right)}\Bigg.\Bigg)^{1/2}.
\end{equation}
The corresponding maximum Hellinger discord of response as a function
of $P$, determined with the help of Eq.~(\ref{eq-closed_formula_discresp_Hel}),  is shown in Fig.~\ref{bup}(c).

\section{Hilbert-Schmidt geometric discord for qubit-qudit states with maximally mixed marginals}
\label{app-lower_bound_disc_resp_HS}

In this Appendix we show that the Hilbert-Schmidt geometric discord is
equal to its lower bound in Eq.~(\ref{eq-inka}) if the subsystem $A$
is a qubit and the state $\rho$ has maximally mixed
marginals. Although $D^G_\HS$ is not a proper measure of quantum
correlations, it can play a useful role since it provides
bounds on the other geometric measures (e.g.,
the trace geometric discord satisfies $2 D_\HS^G (\rho) \leq D_{\tr}^G
(\rho) \leq 2 n_B D_\HS^G (\rho)$, see Theorem~\ref{qtmeqdisc} and
Eqs.~(\ref{eq-trivial_bounds}) and (\ref{drna})). Moreover, $D^G_\HS$ gives the
value of the Hellinger geometric discord by taking the
square root of the state $\rho$  (see Theorem~\ref{theo-rel_geo_disc_Hel_HS}).

\begin{proposition}\label{prop99}
Let $A$ be a qubit and $B$ a qudit with Hilbert space of arbitrary
finite dimension $n_B$. If the global state  $\rho$ of $AB$ has maximally
mixed marginals $\rho_A=\idty_A /2$ and $\rho_B=\idty_B /n_B$, then the Hilbert-Schmidt
geometric discord of $\rho$ is equal to the sum of the two smallest squared
singular values $\mu_3$ and $\mu_4$ of the reshuffled  density matrix $\rho^{\c R}$:
\begin{eqnarray}
D^G_\HS (\rho) =  \mu_3+\mu_4 \; .
\label{inka}
\end{eqnarray}
\end{proposition}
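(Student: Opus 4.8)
The strategy is to show that the lower bound of Theorem~\ref{thm99} is saturated in this special case by exhibiting an explicit classical-quantum state achieving the Hilbert-Schmidt distance equal to $\mu_3+\mu_4$, or equivalently by verifying the saturation condition stated at the end of the proof of Theorem~\ref{thm99}, namely $\Pi^{\opt}\rho^{\c R}=\sigma_{\HS,\rho}^{\c R}$, where $\Pi^{\opt}$ is the projector onto the eigenspaces of $\rho^{\c R}(\rho^{\c R})^\dagger$ associated with the two largest eigenvalues $\mu_1,\mu_2$, and $\sigma_{\HS,\rho}$ is the closest classical-quantum state to $\rho$ (given by Eq.~(\ref{eq-closest_states_meas_ind_geo_disc_HS})). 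First I would write $\rho$ in the Bloch-type form adapted to a qubit $A$: with maximally mixed marginals one has
\begin{equation}
\rho = \frac{1}{2n_B}\Bigl(\idty\otimes\idty + \sum_{m=1}^{3}\sum_{p=1}^{n_B^2-1} t_{mp}\,\sigma_m\otimes\gamma_p\Bigr),
\end{equation}
so that $\rho$ is entirely encoded by the $3\times(n_B^2-1)$ real matrix $T=(t_{mp})$. From Eq.~(\ref{eq-HS_geo_discord}) together with $\tr\rho^2=(1+\tr TT^{\rm T})/(2n_B)$ and the computation of $\sum_i\tr_B\bra{\alpha_i}\rho\ket{\alpha_i}^2$ in terms of the Bloch vector $\vec u=\bra{\alpha_0}\vec\sigma\ket{\alpha_0}$ (exactly the qubit computation already performed in Sec.~\ref{sec-computability} for $\sqrt\rho$, here applied to $\rho$ itself), one gets
\begin{equation}
D_\HS^G(\rho) = \frac{1}{2n_B}\Bigl(\tr(TT^{\rm T}) - k_{\max}\Bigr),
\end{equation}
where $k_{\max}$ is the largest eigenvalue of the $3\times3$ matrix $K=TT^{\rm T}$. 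Hence $D^G_\HS(\rho)=\tfrac{1}{2n_B}(k_2+k_3)$ with $k_1\geq k_2\geq k_3$ the eigenvalues of $TT^{\rm T}$.

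The remaining task is to identify the singular values of $\rho^{\c R}$ in terms of $T$. By the definition of the reshuffling operation, Eq.~(\ref{reshuf}), and of the generalized Schmidt decomposition, one has $\rho=\tfrac{1}{\sqrt{2n_B}}X_0\otimes Y_0+\sum_{m,p}\tfrac{t_{mp}}{\sqrt{2n_B}}\,\sigma_m\otimes\gamma_p$ with $X_0=\idty/\sqrt2$, $Y_0=\idty/\sqrt{n_B}$ orthonormal to the traceless parts; thus the nonzero squared singular values $\mu_m$ of $\rho^{\c R}$ are $\mu_1=\tfrac{1}{2n_B}$ (from the identity component) together with the eigenvalues of $\tfrac{1}{2n_B}TT^{\rm T}$, i.e. $\tfrac{1}{2n_B}k_1,\tfrac{1}{2n_B}k_2,\tfrac{1}{2n_B}k_3$. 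So there are at most four nonzero singular values, matching $n_A^2=4$. The key arithmetic point I must then check is the ordering: I need $\mu_3+\mu_4$, the two smallest among $\{\tfrac{1}{2n_B},\tfrac{k_1}{2n_B},\tfrac{k_2}{2n_B},\tfrac{k_3}{2n_B}\}$, to equal $\tfrac{1}{2n_B}(k_2+k_3)$. Equivalently I must verify $\tfrac{1}{2n_B}\geq \tfrac{k_2}{2n_B}$, i.e. $k_2\leq 1$. This follows from positivity of $\rho$ combined with $\rho_A=\idty/2$: one shows $\tr(TT^{\rm T})\leq\tr K\leq$ something, but more directly $k_2\leq 1$ because the purity $\tr\rho^2=\tfrac{1}{2n_B}(1+\tr TT^{\rm T})\leq 1$ forces $\tr TT^{\rm T}\leq 2n_B-1$, which is too weak; instead I would argue that $k_{\max}\le 1$ is not automatic but $k_2\le k_{\max}$ and one needs the stronger constraint $k_i\le 1$ coming from $\rho\ge 0$ restricted to the two-dimensional blocks. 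A clean way: for any unit vector $\vec u\in\real^3$, the partial state $\tr_B[(\vec u\cdot\vec\sigma\otimes\idty)\rho]$ has operator norm $\le 1$ on $B$, from which $\vec u^{\rm T}TT^{\rm T}\vec u=\|T^{\rm T}\vec u\|^2\le 1$; hence every eigenvalue of $TT^{\rm T}$ is $\le 1$, giving in particular $k_2\le 1$ and therefore the ordering $\mu_3=\tfrac{k_2}{2n_B}$, $\mu_4=\tfrac{k_3}{2n_B}$.

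Combining the two computations yields $D^G_\HS(\rho)=\tfrac{1}{2n_B}(k_2+k_3)=\mu_3+\mu_4$, which is the claim. \textbf{The main obstacle} I anticipate is precisely pinning down the eigenvalue ordering rigorously: I must make sure that $\mu_1=\tfrac{1}{2n_B}$ is always among the \emph{two largest} singular values (so that discarding the two smallest removes the $k_2,k_3$ directions, not the identity direction), which requires the a priori bound $k_i\le 1$ on the eigenvalues of $TT^{\rm T}$ derived from $\rho\ge0$ and the maximally-mixed-marginal assumption — an inequality that genuinely uses positivity of $\rho$ and would fail for a general Bloch matrix $T$. A secondary, purely bookkeeping point is to double-check the overall normalization $\tfrac{1}{2n_B}$ between the two sides and to confirm via Eq.~(\ref{eq-proof_theo9bis}) that the inequality in Theorem~\ref{thm99} is indeed saturated here (equivalently, that the optimal rank-$2$ approximant $\Pi^{\opt}\rho^{\c R}$ is itself the reshuffling of a classical-quantum state, which it is because the two retained directions are the identity and the top eigendirection of $TT^{\rm T}$, matching the structure of Eq.~(\ref{eq-closest_states_meas_ind_geo_disc_HS}) with $\{\ket{\alpha_i^{\opt}}\}$ the eigenbasis of $\vec u_{\max}\cdot\vec\sigma$).
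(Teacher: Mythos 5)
Your plan is correct and, at its core, performs the same reduction as the paper's proof: both arguments boil down to the facts that (i) the four squared singular values of $\rho^{\c R}$ are $\tfrac{1}{2n_B}$ together with the eigenvalues $\tfrac{k_i}{2n_B}$ of $\tfrac{1}{2n_B}TT^{\rm T}$, and (ii) $D^G_\HS(\rho)=\tfrac{1}{2n_B}(k_2+k_3)$. The paper obtains (ii) by quoting the closed qubit--qudit formula of Ref.~\cite{Gharibian2012} (sum of the two smallest squared singular values of the submatrix $\widetilde M$ of the Fano matrix) and obtains (i) by a vectorization argument identifying the Fano matrix $M$ with $\rho^{\c R}$; you rederive both from the Luo--Fu formula and the operator Schmidt decomposition, which is equivalent. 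The genuine divergence is the ordering step: the paper proves that $\tfrac{1}{2n_B}$ is the \emph{largest} squared singular value of $\rho^{\c R}$ (Lemma~\ref{lem100}) by recognizing $2n_B\,\rho^{\c R}(\rho^{\c R})^\dagger$ as a unital CPTP map and invoking the quantum Frobenius--Perron theorem, whereas you propose the elementary positivity bound $k_i\le 1$. Your bound is true and suffices, but as written the justification is off: the relevant object is ${\tr}_A\big[(\vec u\cdot\vec\sigma\otimes\idty)\rho\big]=B_0-B_1$ with $B_i=\bra{\alpha_i}\rho\ket{\alpha_i}\ge 0$ and $B_0+B_1=\rho_B=\idty_B/n_B$, so its operator norm is at most $1/n_B$, not $1$. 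One then converts norms: $\|B_0-B_1\|_\HS^2\le n_B\cdot n_B^{-2}=n_B^{-1}$, while $B_0-B_1=\tfrac{1}{n_B}(T^{\rm T}\vec u)\cdot\vec\gamma$ and $\|\vec v\cdot\vec\gamma\|_\HS^2=n_B\|\vec v\|^2$ give $\|B_0-B_1\|_\HS^2=\|T^{\rm T}\vec u\|^2/n_B$; the two factors of $n_B$ cancel and yield exactly $\|T^{\rm T}\vec u\|^2\le 1$, i.e. $k_1\le 1$ --- the same conclusion as the paper's Lemma, reached more elementarily. Note that the maximally mixed marginal $\rho_B=\idty_B/n_B$ is essential in this cancellation, just as unitality is essential in the paper's Frobenius--Perron argument. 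With that repair your proof is complete.
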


This Proposition enable us to calculate quite easily the Hilbert-Schmidt
geometric discord for a wide class of qubit-qudit states with
maximally mixed partial states, such as the states known from the theory of quantum channels
as the renormalized dynamical matrices of bistochastic CPTP
maps~\cite{Bengtsson}. The advantage of expressing the
geometric discord in terms of the singular values of $\rho^{\c R}$ is that we can do that
in an arbitrary basis.
As other examples of states which satisfy the
conditions of the Proposition, let us mention the Werner-like rotationally invariant states
defined and analyzed in Ref.~\cite{Chruscinski2007} (see also
Ref.~\cite{Breuer2005}) and the qubit-qudit states given by the r.h.s. of
Eq.~(\ref{eq-Boch_dec_square_root})
with $\vec{x} = \vec{y}= \vec{0}$, $t_0 = 1/\sqrt{2 n_B}$, and
$t_{mp}\geq 0$, $\sum_{mp}t_{mp}=\sqrt{2 n_B}$.

The proof of Proposition~\ref{prop99} relies on the following result
on the Hilbert-Schmidt geometric discord for qubit-qudit systems
(see e.g. Ref.~\cite{Gharibian2012}, Theorem 2). An arbitrary
qubit-qudit density matrix $\rho$ can always be written in the so-called Fano form, namely:
\begin{equation} \label{eq-Fano_decomposition}
\rho=\sum_{m=0}^{3}\sum_{p=0}^{n_B^2-1}M_{m p }\hat{\sigma}_m
\otimes \hat{\gamma}_p \; ,
\end{equation}
where $\hat{\sigma}_0\equiv\idty_A/\sqrt{2}$,
$\hat{\sigma}_1, \hat{\sigma}_2$, and $\hat{\sigma}_3$
are the Pauli matrices renormalized in such a way that
$\tr ( \hat{\sigma}_m \hat{\sigma}_{n}) =\delta_{m n}$,
and $\hat{\gamma}_p$ are Hermitian matrices forming an
orthonormal basis for the space of $n_B \times n_B$ matrices
(\ie , $\tr ( \hat{\gamma}_p \hat{\gamma}_{q}) = \delta_{p  q}$) with
$\hat{\gamma}_0= \idty_B/\sqrt{n_B}$.
The components $M_{mp}$ of $\rho$ in
Eq.~(\ref{eq-Fano_decomposition}) are given by $M_{mp} = \tr (
\rho\,\hat{\sigma}_m \otimes\hat{\gamma}_p)$ and form a $4 \times n_B^2$ real
matrix $M$ (covariance matrix). Define the $3\times n_B^2$  matrix
$\widetilde{M}$ obtained from $M$ by removing the first row, that is,
$\widetilde{M} =[M_{mp}]_{m=1,...,3, p=0,...,n_B^2-1}$. It can be
shown that $D_\HS^G ( \rho)$ is equal to the sum of the two  smallest
squared singular values of $\widetilde{M}$, that is,
\begin{equation} \label{eq-geo_disc_2_n_B_states}
D_\HS^G ( \rho)= {\rm sv}_2^2(\widetilde{M})+{\rm sv}_3^2(\widetilde{M}) \; ,
\end{equation}
where ${\rm sv}_1^2(\widetilde{M}) \geq  {\rm sv}_2^2(\widetilde{M})
\geq {\rm sv}_3^2(\widetilde{M}) $ denote the eigenvalues 
of the $3 \times 3$ non-negative matrix $\widetilde{M} (\widetilde{M})^\dagger$. 

Before proceeding to the proof of Proposition~\ref{prop99}, let us start with some technical considerations on vectorization. This operation transforms a $n \times n$ matrix $Y$ into the vector $\ket{Y}$ obtained by ordering the matrix entries into a one-column vector, that is, $\ket{Y}$ has $n^2$ components given by
\begin{equation}
\<i,j|Y\> \equiv \<i|Y|j\> \equiv\<\overline{Y}|i,j\> \; ,
\label{vct}
\end{equation}
where the bar denotes complex conjugation in the standard basis $\{ \ket{i,j} \}$.
Let  $X$, $Y$, and $Z$ be matrices of sizes 
 $n_A n_B  \times n_A n_B$, $n_A\times n_A$, and $n_B \times n_B$,
respectively.  The following chain of identities will be useful:
\begin{equation} \label{tensorvect}
\tr{X (Y\otimes Z)}=  \sum_{ijkl}\<i,k|X|j,l\>\<j |Y|i\>\<l|Z|k \> = \sum_{ijkl}\<i,j|X^{\c R}|k,l\>\<Y^{\dagger}|i,j\>\<k,l|Z^{\rm T}\>=\<Y^{\dagger}|X^{\c R}|Z^{\rm T}\> \; ,
\end{equation}
where $\Rr$ is the reshuffling operation defined in Eq.~(\ref{reshuf})
and ${\rm T}$ denotes the transposition in the standard basis. Now let us introduce the \ONBs
$\{ \ket{\sigma_m}\}_{m=0}^3$ of $\complex^4$ and
$\{ \ket{\gamma_p} \}_{p=0}^{n_B^2-1}$ of $\complex^{n_B^2}$
obtained from the vectorization of the matrices
$\hat{\sigma}_m$ and $\hat{\gamma}_p$  appearing in the
decomposition~(\ref{eq-Fano_decomposition}).
On account of Eq.~(\ref{tensorvect}), the matrix $M$ in this
decomposition coincides with the reshuffled density matrix $\rho^{\c
  R}$ in the vectorized bases $\{ \ket{\sigma_m}\}_{m=0}^3$ and
$\{ \ket{\overline{\gamma_p}} \}_{p=0}^{n_B^2-1}$:
\begin{equation}
M_{mp}= \tr (\rho\,\hat{\sigma}_m \otimes\hat{\gamma}_p ) = \<\sigma_m | \rho^{\c R}| \overline{\gamma_p} \> \; .
\label{mvect}
\end{equation}
The proof of  Proposition~\ref{prop99} uses the following lemma.

\begin{lemma}\label{lem100}
Let  $\rho$ be a state of $AB$ satisfying the assumptions of Proposition~\ref{prop99}.
Then the largest singular value of $\rho^{\c R}$ is equal to $1/\sqrt{2n_B}$.
\end{lemma}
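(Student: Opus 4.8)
The plan is to reduce the statement to a bound on a $3\times(n_B^2-1)$ submatrix of the covariance matrix $M$ of $\rho$, and then to a one-line operator inequality valid because $A$ is a qubit. First I would use the vectorization identity~(\ref{mvect}): the entries $M_{mp}=\langle\sigma_m|\rho^{\c R}|\overline{\gamma_p}\rangle$ are exactly the matrix elements of the reshuffled operator $\rho^{\c R}$ in the orthonormal bases obtained by vectorizing the $\hat\sigma_m$'s and the $\hat\gamma_p$'s. Consequently $\rho^{\c R}$ and $M$ have the same singular values, and it suffices to show that the largest singular value of the $4\times n_B^2$ real matrix $M$ equals $1/\sqrt{2 n_B}$.

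Next I would use the marginal hypotheses to pin down the ``boundary'' of $M$. Tracing out $B$ in the Fano form~(\ref{eq-Fano_decomposition}) and using $\tr\hat\gamma_p=\sqrt{n_B}\,\delta_{p0}$ gives $\rho_A=\sqrt{n_B}\sum_m M_{m0}\hat\sigma_m$; since $\rho_A=\idty_A/2=\hat\sigma_0/\sqrt2$ this forces $M_{00}=1/\sqrt{2 n_B}$ and $M_{m0}=0$ for $m=1,2,3$. Tracing out $A$ and using $\tr\hat\sigma_m=\sqrt2\,\delta_{m0}$ yields likewise $M_{0p}=0$ for $p=1,\dots,n_B^2-1$. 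Hence $M$ is the direct sum of the $1\times1$ block $1/\sqrt{2 n_B}$ and the $3\times(n_B^2-1)$ block $\widetilde M'=[M_{mp}]_{1\le m\le3,\,1\le p\le n_B^2-1}$ (the first column of the matrix $\widetilde M$ of Eq.~(\ref{eq-geo_disc_2_n_B_states}) being zero). So the singular values of $\rho^{\c R}$ are $1/\sqrt{2 n_B}$ together with those of $\widetilde M'$, and the lemma reduces to the bound ${\rm sv}_1(\widetilde M')\le 1/\sqrt{2 n_B}$.

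To prove that bound I would write the largest singular value variationally, ${\rm sv}_1(\widetilde M')=\max|\tr(\rho\,K_A\otimes L_B)|$, the maximum being over traceless self-adjoint $K_A$ on $\Hh_A$ and $L_B$ on $\Hh_B$ with $\tr K_A^2=\tr L_B^2=1$ (this is just $u^{\rm T}\widetilde M' v=\tr(\rho\,K_A\otimes L_B)$ with $K_A=\sum_m u_m\hat\sigma_m$, $L_B=\sum_p v_p\hat\gamma_p$). Since $A$ is a qubit, such a $K_A$ has eigenvalues $\pm1/\sqrt2$, so $K_A=(\Pi_+-\Pi_-)/\sqrt2$ with $\Pi_\pm$ complementary rank-one projectors. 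Setting $R_\pm=\tr_A(\rho\,\Pi_\pm\otimes\idty_B)$ one has $R_\pm\ge0$, $R_++R_-=\rho_B=\idty_B/n_B$, $\tr R_\pm=\tr(\rho_A\Pi_\pm)=1/2$, and $\tr(\rho\,K_A\otimes L_B)=\tr\big((R_+-R_-)L_B\big)/\sqrt2$, whose modulus is at most $\|R_+-R_-\|_\HS/\sqrt2$ by the Cauchy--Schwarz inequality. It remains to bound $\|R_+-R_-\|_\HS^2=\tr(2R_+-\idty_B/n_B)^2=4\tr R_+^2-1/n_B$: here the point is that $0\le R_+\le\idty_B/n_B$ (because $R_-\ge0$), which gives the operator inequality $R_+^2\le R_+/n_B$ (apply the evidently positive $R_+^{1/2}(\idty_B/n_B-R_+)R_+^{1/2}\ge0$), hence $\tr R_+^2\le(\tr R_+)/n_B=1/(2 n_B)$ and $\|R_+-R_-\|_\HS^2\le1/n_B$. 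Therefore ${\rm sv}_1(\widetilde M')\le1/\sqrt{2 n_B}$, and since $1/\sqrt{2 n_B}=\langle\sigma_0|\rho^{\c R}|\overline{\gamma_0}\rangle$ is itself a singular value of $\rho^{\c R}$, its largest singular value is exactly $1/\sqrt{2 n_B}$.

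The main obstacle — the only step that is more than bookkeeping — is this last estimate: one must realize that the qubit hypothesis on $A$ turns $K_A$ into a rescaled difference of two orthogonal projectors, and that the positivity of the complementary conditional operator $R_-$, equivalently $R_+\le\idty_B/n_B$, is precisely the operator bound needed to cap $\tr R_+^2$ by $1/(2 n_B)$. Everything else — the identification of the singular values of $\rho^{\c R}$ with those of $M$, the marginal constraints fixing the first row and column of $M$, and the variational formula for ${\rm sv}_1$ — is routine manipulation with the objects already set up earlier in this Appendix.
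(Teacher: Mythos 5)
Your proof is correct, but it takes a genuinely different route from the one in the paper. The paper's argument is channel-theoretic: it writes $\rho^{\c R}=\sum_\alpha K_\alpha\otimes\overline{K_\alpha}$ via the eigendecomposition of $\rho$, observes that the maximally mixed marginals make $\sqrt{n_B}K_\alpha$ and $\sqrt{2}K_\alpha^\dagger$ Kraus families of unital CPTP maps, and concludes from the quantum Frobenius--Perron theorem that the unital CPTP superoperator $2n_B\,\rho^{\c R}(\rho^{\c R})^\dagger$ has leading eigenvalue one, i.e.\ ${\rm sv}_1(\rho^{\c R})=1/\sqrt{2n_B}$. You instead stay entirely inside the Fano/covariance-matrix picture: after using the marginal constraints to block-diagonalize $M$ into the $1\times1$ block $M_{00}=1/\sqrt{2n_B}$ and the correlation block $\widetilde M'$, you bound ${\rm sv}_1(\widetilde M')=\max|\tr(\rho\,K_A\otimes L_B)|$ by exploiting that a normalized traceless qubit observable is $(\Pi_+-\Pi_-)/\sqrt2$ and that the conditional operators $R_\pm=\tr_A(\rho\,\Pi_\pm\otimes\idty_B)$ satisfy $0\le R_+\le\idty_B/n_B$ and $\tr R_+=1/2$, whence $\tr R_+^2\le 1/(2n_B)$ and $\|R_+-R_-\|_\HS\le 1/\sqrt{n_B}$. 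All the individual steps check out (the variational formula, the positivity of $R_\pm$, and the operator inequality $R_+^2\le R_+/n_B$). What each approach buys: yours is more elementary and self-contained (no Kraus operators, no composition of channels, no Frobenius--Perron input) and makes transparent exactly where each of the two marginal hypotheses and the qubit hypothesis on $A$ enter; the paper's is shorter once the channel machinery is granted and generalizes verbatim to arbitrary $n_A$ with $2$ replaced by $n_A$, whereas your spectral step $K_A=(\Pi_+-\Pi_-)/\sqrt2$ is specific to $n_A=2$ (for larger $n_A$ one would only get ${\rm sv}_1(\widetilde M')\le\|K_A\|_\infty\,\|R_+-R_-\|_\HS$ with a less clean spectral decomposition). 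As a bonus, your argument yields slightly more than the lemma asks for, namely an explicit operational reading of ${\rm sv}_1(\widetilde M')$ as an optimal local correlation, which dovetails with the discussion of Eq.~(\ref{eq-geo_disc_2_n_B_states}).
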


\begin{proof}
We use the formal correspondence (via the Jamio{\l}kowski isomorphism,
see e.g. Ref.~\cite{Bengtsson}) of the states with maximally mixed
marginals with quantum channels (\ie , CPTP maps) that preserve the
maximally mixed state.
Let us introduce the $n_A \times n_B$ matrices
 $K_{\alpha}$ with matrix elements
\begin{equation}
\bra{i_A}K_{\alpha}\ket{k_B}\equiv\sqrt{p_{\alpha}}\bra{i_A, k _B} \Psi_{\alpha}\> \; ,
\end{equation}
where $p_{\alpha}$ and $\ket{\Psi_{\alpha}}$ are, respectively, the
eigenvalues and eigenvectors of $\rho$.
From Eqs.~(\ref{reshuf}) and~(\ref{vct}) it can be immediately derived that
\begin{equation}
\rho^{\c R}=\sum_{\alpha}K_{\alpha}\otimes \overline{K_{\alpha}}
\quad , \quad 
\big( \rho^{\c R}\big)^\dagger =\sum_{\alpha}K_{\alpha}^\dagger \otimes \overline{K_{\alpha}^\dagger}
\label{superoperator}
\end{equation}
and
\begin{equation}
{\tr}_A (\rho) = \sum_{\alpha}  \overline{K_{\alpha}^{\dagger} K_{\alpha}} =\frac{1}{n_B} \,\idty_B\quad , \quad
{\tr}_B ( \rho) = \sum_{\alpha}  K_{\alpha} K_{\alpha}^{\dagger} =\frac{1}{2}\,\idty_A \; ,
\end{equation}
where the bar denotes the complex conjugation in the standard basis, \ie,
$\bra{i_A} \overline{K_\alpha} \ket{k_B} = \bra{i_A} K_\alpha \ket{k_B}^\ast$.
The above relations show that the operators $n_B
K^{\dagger}_{\alpha}K_{\alpha}$ satisfy the completeness condition
$\sum_{\alpha}n_BK^{\dagger}_{\alpha}K_{\alpha}=\idty_B$. Therefore,
$\sqrt{n_B}K_{\alpha}$ can be interpreted as the Kraus operators of
some quantum operation. Also, by
Eq.~(\ref{superoperator}), $\rho^{\c R}$ is proportional to the superoperator form of
a quantum operation, as defined by Eq.~(\ref{eq-superoperator_form}). The same considerations apply to
$\left( \rho^{\c R}\right)^{\dagger}$, which is proportional to the
superoperator form of the quantum operation defined by the Kraus
operators $L_{\alpha}\equiv \sqrt{2}K_{\alpha}^{\dagger}$. Since the composition of two quantum operations is still a
quantum operation~\cite{Roga2008}, $2n_B \rho^{\c R}\left(
\rho^{\c R}\right)^{\dagger}$ is thus a CPTP map (in its superoperator
form), which moreover preserves the identity. From the quantum
analogue of the Frobenius-Perron theorem~\cite{Bruzda2009}, one
concludes that the leading eigenvalue of such a map is equal to one
(see also Theorem 1 of Ref.~\cite{Roga2013}). Therefore, the largest
singular value of $\rho^{\c R}$ is  equal to $1/\sqrt{2n_B}$.
\end{proof}

\vspace{1mm}

\Proofof{Proposition~\ref{prop99}}
Since the partial traces of $\rho$ are by assumption proportional to
$\idty$, we have $\tr \rho \,\hat{\sigma}_m\otimes \idty_B =\tr
\hat{\sigma}_m /2 = 0$ for
$m=1,...,3$, and similarly
$\tr  \rho \, \idty_A \otimes {\hat{\gamma}_p}  = {\tr \hat{\gamma}_p}/n_B= 0$ for $p=1,...,n_B^2-1$. Hence, in the first
row and first column of the matrix $M$,  only one entry is different
from $0$, namely $M_{00}=\tr  \rho\,\idty_A\otimes\idty_B  /\sqrt{2
  n_B} =1/\sqrt{2n_B}$, which is therefore a singular value of
$M$. Analogously, the matrix $\widetilde{M}$ has only zeros in its
first column. Thus $M$ and $\widetilde{M}$ have the same
singular values, excepted for the additional singular value
$\mu_1=1/\sqrt{2n_B}$ of  $M$. According to Eq.~(\ref{mvect}) and
Lemma~\ref{lem100}, $\rho^{\Rr}$ has the same singular values as $M$
and its largest singular value is
$\mu_1$. It thus follows from
Eq.~(\ref{eq-geo_disc_2_n_B_states}) that $D^G_\HS ( \rho)$ is the sum
of the two smallest  squared singular values of $\rho^{\Rr}$.
\finpro


\end{document}